\newtheorem{definition}{Definition}
\newtheorem{lemma}[definition]{Lemma}
\newtheorem{theorem}[definition]{Theorem}
\newtheorem{corollary}[definition]{Corollary}
\newtheorem{conjecture}[definition]{Conjecture}
\newcommand{\ppol}{\ensuremath{{\mathrm{pPol}}}}
\newcommand{\pPol}{\ensuremath{{\mathrm{pPol}}}}
\newcommand{\pol}{\ensuremath{{\mathrm{Pol}}}}
\newcommand{\ar}{\ensuremath{{\mathrm{ar}}}}
\newcommand{\qfppclone}[1]{\ensuremath{\langle #1 \rangle_{\cancel{\exists}}}}
\newcommand{\cclone}[1]{\ensuremath{\langle #1 \rangle}}
\newcommand{\NP}{\textsf{NP}}
\newcommand{\Csp}{\textsc{Csp}}
\newcommand{\N}{\mathbb{N}}
\definecolor{DarkBlue}{rgb}{ 0 , 0 , 0.7}
\newtheorem{remark}[theorem]{Remark}
\newtheorem{question}[theorem]{Question}
\newtheorem*{problem1}{\sc $H$-Coloring Problem}
\begin{document}

\title[The Fine-Grained Complexity of Graph Homomorphism Problems]{The Fine-Grained Complexity of Graph Homomorphism Problems: Towards the Okrasa and Rz\c ażewski Conjecture}

\author[A. Baril]{Ambroise Baril}
\address[Ambroise Baril]{Université de Lorraine, CNRS, LORIA, France}
\email{ambroise.baril@loria.fr}

\author[M. Couceiro]{Miguel Couceiro}
\address[Miguel Couceiro]{Université de Lorraine, CNRS, LORIA, France\\INESC-ID, Instituto Superior T\'{e}cnico, Universidade de Lisboa, Lisboa, Portugal}
\email{miguel.couceiro@loria.fr}

\author[V. Lagerkvist]{Victor Lagerkvist}
\address[Victor Lagerkvist]{Department of Computer and Information Science, Link\"{o}pings Universitet, Sweden}
\email{victor.lagerkvist@liu.se}

\begin{abstract}

In this paper we are interested in the fine-grained complexity of deciding whether there is a homomorphism from an input graph $G$ to a fixed graph $H$ (the \textsc{$H$-Coloring} problem).  The starting point is that these problems can be viewed as constraint satisfaction problems (\Csp s), and that (partial) polymorphisms of binary relations are of paramount importance in the study of complexity classes of such \Csp s.

Thus, we first investigate the expressivity of binary symmetric relations $E_H$ and their corresponding (partial) polymorphisms $\pPol(E_H)$. For irreflexive graphs we observe that there is no pair of  graphs $H$ and $H'$ such that $\pPol(E_H)\subseteq \pPol(E_{H'})$, unless $E_{H'}=\emptyset$ or $H$ =$H'$. More generally we show the existence of an $n$-ary relation $R$ whose partial polymorphisms strictly subsume those of $H$ and such that \Csp$(R)$ is \NP-complete if and only if $H$ contains an odd cycle of length at most $n$.
Motivated by this we also describe the sets of total polymorphisms of nontrivial cliques,  odd cycles, as well as certain cores, and we give an algebraic characterization of {\em projective cores}.
As a by-product, we settle the {\em Okrasa and Rz\c ażewski conjecture} for all graphs of at most 7 vertices. 
\end{abstract}

\keywords{Graph coloring, Graph homomorphism, Partial polymorphism, Clique, Projective graph, Core graph, Okrasa and Rz\c ażewski conjecture
}

\maketitle

\section{Introduction}
This paper aims to improve our understanding of fine-grained complexity of constraint satisfaction problems (\Csp s) \cite{jonsson2017}. In a {\em constraint satisfaction problem} (\Csp), given a set of variables $X$ and a set of constraints of the form $R(\mathbf{x})$ for ${\mathbf x} \in X^k$ and some $k$-ary relation $R$, the objective is to assign
values from $X$ to a domain $V$ such that every constraint in $C$ is satisfied. 
This problem is usually denoted by \Csp$(\Gamma)$, with the additional stipulation that every relation occurring in a constraint comes from the set of relations $\Gamma$, and it is typically phrased as the decision problem of verifying whether a solution exists.

In this article we take a particular interest in the restricted case when $\Gamma$ is singleton and contains a binary, symmetric relation $H$, viewed as the edge relation of a graph\footnote{Throughout, we assume that
all graphs are finite, simple (no loops) and  undirected. Every graph $H$ is defined by its set $V_H$ of vertices and its set $E_H$ of edges.}, the \textsc{$H$-Coloring} problem~\cite{hell1990complexity}. 
This problem is arguably more naturally formulated as a {\em homomorphism} problem. Recall that a function $f \colon V_G\rightarrow V_H$ is said to be a homomorphism between the two graphs $G$ and $H$ if it  ``preserves'' the edge relation, that is, if for every edge $ (u,v)\in E_G$, we have $(f(u),f(v))\in E_H$. In that case, we use the notation $f:G\rightarrow H$. For each graph $H$ the \textsc{$H$-coloring} problem can then succinctly be defined as follows.

\begin{problem1}
Given a graph $G$, decide whether there is a homomorphism $f:G\rightarrow H$.
\end{problem1}

Note that \textsc{$H$-coloring} is the same problem as $\Csp(\{E_H\})$ (henceforth written $\Csp(E_H)$).
Clearly, the \textsc{$H$-Coloring} problem subsumes the well-known $k$-\textsc{Coloring} problem, $k\geq 1$, that asks for a coloring of the vertices of a graph using at most $k$ colors such that each pair of adjacent vertices are assigned different colors. Indeed, it corresponds to the case where $H=K_k$, the complete graph (clique) of size $k$. 

Hell and  Ne$\check{\text{s}}$et$\check{\text{r}}$il \cite{hell1990complexity} showed that the \textsc{$H$-Coloring} problem is in $\textsf{P}$ (the class of problems decidable in polynomial time) whenever $H$ is bipartite, and it is  \NP-complete otherwise.
Our goal is to bring some light into the (presumably) superpolynomial complexity of the \textsc{$H$-Coloring} problem when $H$ is non-bipartite. 
    On the one hand, there are already some strong upper-bounds results on the fine-grained complexity of $k$-\textsc{Coloring} for $k\geq 3$. Björklund {\it et al.} \cite{BjorklundHK09} proved that $k$-\textsc{Coloring} is solvable in time $O^*(2^n)$ (i.e., $O(2^n\times n^{O(1)})$) where $n$ is the number of vertices of the input graph. Fomin {\it et al.} \cite{fomin2007exact} also prove that $C_{2k+1}$-\textsc{Coloring} is solvable in time $O^*( \binom{n}{n/k} )=O^*((\alpha_k)^n)$, with $\alpha_k \underset{k\rightarrow \infty}{\longrightarrow} 1$, and improved algorithms are also known when $H$ has bounded {\em tree-width} or {\em clique-width}~\cite{fomin2007exact,DBLP:journals/mst/Wahlstrom11}.
On the other hand, lower bounds by Fomin {\it et al.} \cite{fomin2015} rule out the existence of a uniform $2^{O(n)}$ time algorithm under the {\em exponential-time hypothesis} ({\it i.e.}, that 3-\textsc{Sat} can not be solved in subexponential time).

We notice, however, that there is a lack of general tools for describing fine-grained properties of \Csp s, and in particular we lack techniques for comparing \NP-hard \textsc{$H$-Coloring} problems with each other, e.g., via size-preserving reductions. We explore these ideas through an algebraic approach, by investigating algebraic invariants of graphs.
For this purpose, viewing \textsc{$H$-Coloring} as \Csp($E_H$) is quite useful as it allows us to use the widely studied theory of the complexity of \Csp($\Gamma$), since the former is just the special case when $\Gamma=\{ R\}$ is a singleton containing a binary, symmetric relation. In particular, it was shown that the fine-grained complexity of \Csp($R$)
only depends on the so-called {\em partial polymorphisms} of $R$~\cite{ismvlCHL19,jonsson2017}. Briefly, a {\em polymorphism} is a higher-arity homomorphism from the relation to the relation itself.
Additionally, a polymorphism that is not necessarily everywhere defined  is known as a {\em partial polymorphism}, and we write $\pol( R)$ (respectively, $\pPol( R)$) for the set of all (partial) polymorphisms of a relation $ R$. If $H$ is a graph, then by $\pPol(H)$, we simply mean the set of partial polymorphisms of the edge relation $E_H$. It is then known that partial polymorphisms correlate to fine-grained complexity in the sense that if $\pPol( R) \subseteq \pPol( R')$ and if \Csp$( R)$ is solvable in $O^*(c^n)$ time for some $c > 1$ then \Csp$( R')$ is also solvable in $O^*(c^n)$ time~\cite{jonsson2017}.

Thus, describing the inclusion structure between sets of the form $\pPol(H)$ would allow us to relate the fine-grained complexity of \textsc{$H$-Coloring} problems with each other, but, curiously, 
we manage to prove that {\em no} non-trivial inclusions of this form exist, suggesting that partial polymorphisms of graphs are not easy to relate via set inclusion. %
As a follow-up question we also study inclusions  of the form $\pPol(H) \subseteq \pPol(R)$ when $R$ is an arbitrary relation, and manage to give a non-trivial condition based on the length of the shortest odd cycle of $H$.
Concretely, we prove that it is possible to find an $n$-ary relation $R$ with $\pPol(E_H)\subsetneq \pPol(R)$ where  \Csp($R$) is \NP-complete, if and only if $H$ contains an odd-cycle of length at most $n$.
This result suggests that the size of the smallest odd-cycle is an interesting parameter when regarding the complexity of \textsc{$H$-Coloring}. As observed above, the smaller $\pPol(E_H)$ is, the harder \Csp($E_H$) (and thus \textsc{$H$-Coloring}) is. In other words, the greater the smallest odd-cycle of $H$ is, the easier the \textsc{$H$-Coloring} problem is. This fact supports the already known algorithms presented in \cite{fomin2007exact}.

Despite this trivial inclusion structure, it could still be of great interest to provide a succinct description of $\pPol(H)$ for some noteworthy choices of non-bipartite, core $H$. As a first step in this project we concentrate on the total polymorphisms of $H$, and conclude that {\em projective graphs}~\cite{larose2001strongly} appear to be a reasonable class to target since the total polymorphisms of projective cores are {\em essentially at most unary}. %
Projective cores were studied by Okrasa and Rz\c ażewski~\cite{okrasa2020finegrained} in the context of fine-grained aspects of \textsc{$H$-coloring} problems analyzed under tree-width. We have the following conjecture.

\begin{restatable}[\cite{okrasa2020finegrained}]{conjecture1}{Okrasa}
\label{conjecture}
Let $H$ be a connected non-trivial core on at least 3 vertices. Then $H$ is projective if and only if it is indecomposable.
\end{restatable}

Thus, we should not hope to easily give a complete description of projective cores, but we do succeed in  (1) proving that several well-known families of graphs, e.g., cliques, odd-cycles, and other core graphs, are projective, and (2) confirm the conjecture for {\em all} graphs with at most 7 vertices. Importantly, our proofs use the algebraic approach and are significantly simpler than existing proofs, and suggest that the algebraic approach might be a cornerstone in completely describing projective cores.

This paper is organized as follows. In Section~\ref{sec:prel}, we recall the basic notions and preliminary results needed throughout the paper.  We investigate the order structure of classes of graph (partial) polymorphisms in Section~\ref{sec:classification} where we show the aforementioned main results. In Section~\ref{section:Projective graphs} we focus on projective and core graphs and present several general examples of projective cores, and  settle the Okrasa and Rz\c ażewski conjecture for graphs with at most 7 vertices. %
In Section~\ref{sec:conclusion} we discuss some consequences of our results and state a few noteworthy conjectures. 

\section{Preliminaries} \label{sec:prel}

Throughout the article we use the following notation.

For any $n\in\N$, $[n]$ denotes the set $\{1,\ldots,n\}$. For every set $V$, $n\geq 1$ and $t=(t_1,\ldots,t_n)\in V^n$, $t[i]$ denotes $t_i$, and given a relation $R \subseteq [k]^n$ for some $k \geq 1$, we write $\ar(R)$ for its arity $n$. For all $m\geq 1$ and $i\in [m]$, we write $\pi_i^m:V^m\rightarrow V$ for the {\em projection} on the $i$-th coordinate (the set $V$ will always be implicit in the context).

For $H$ a graph and $V\subseteq V_H$, we denote by $H[V]$ the graph induced by $V$ on $H$.
We use the symbol $\uplus$ to express the disjoint union of sets, and $+$ to express the disjoint union of graphs.

For a unary function $f:V\rightarrow V$, and an $m$-ary function $g:V^m\rightarrow V$ we write $f\circ g$ for their {\em composition} that is defined by $(f\circ g)(x_1,\ldots,x_m)=f(g((x_1,\ldots,x_m)))$, for every $(x_1,\ldots,x_m)\in V^m$.

Also, for $k\geq 3$, $K_k$ and $C_k$ denote respectively a $k$-clique and a $k$-cycle.

\subsection{Graph homomorphisms and cores}
For two graphs $G$ and $H$ a function $f\colon V_G\to V_H$ is a {\em homomorphism} from $G$ to $H$ if $\forall (u,v)\in E_G, (f(u),f(v))\in E_H$. In this case, $f$ is also called an {\it $H$-coloring} of $G$, and we denote this fact by $f\colon G\rightarrow H$. The graph $G$ is said to be $H$-{\it colorable}, which we denote by $G\rightarrow H$, if there exists $f\colon G\rightarrow H$. For a graph $H$, the \textsc{$H$-Coloring} problem thus asks whether a given graph $G$ is $H$-colorable.

\begin{theorem}[\cite{hell1990complexity}]\label{thm:Complexity H-COLORING} \textsc{$H$-Coloring} is in $\textsf{P}$ whenever $H$ is bipartite, and is \NP-complete, otherwise.
\end{theorem}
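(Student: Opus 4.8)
The statement is a dichotomy, so I would prove the two directions separately, treating membership in \NP\ as a preliminary observation: any candidate map $f\colon V_G\to V_H$ can be verified to be a homomorphism in polynomial time by checking each edge of $G$, so \textsc{$H$-Coloring} lies in \NP\ for every fixed $H$.

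\emph{Tractable direction.} Suppose $H$ is bipartite. If $E_H=\emptyset$, then $G\to H$ iff $E_G=\emptyset$, which is trivially decidable. Otherwise $H$ contains an edge, hence a copy of $K_2$, and I claim $G\to H$ iff $G$ is bipartite. For the forward direction, the key observation is that homomorphisms preserve closed walks of odd length: if $G$ contained an odd cycle, then its image under $f\colon G\to H$ would be a closed odd walk in $H$, forcing $H$ to contain an odd cycle and contradicting bipartiteness; hence $G$ has no odd cycle and is bipartite. Conversely, if $G$ is bipartite with parts $A,B$, then fixing an edge $(a,b)\in E_H$ and mapping $A\mapsto a$, $B\mapsto b$ yields a homomorphism. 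Since bipartiteness is decidable in linear time (e.g.\ by $2$-coloring via breadth-first search), \textsc{$H$-Coloring} is in $\textsf{P}$.

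\emph{Hard direction, reductions.} Now suppose $H$ is non-bipartite; I must show \NP-hardness. First I would reduce to the case where $H$ is a non-bipartite \emph{core}: every graph retracts onto its core, $H$-coloring and $\mathrm{core}(H)$-coloring are the same decision problem up to homomorphic equivalence, and the core of a non-bipartite graph is again non-bipartite since it still contains an odd cycle. I would likewise reduce to connected $H$, as the hard instances can be taken connected. Thus it suffices to prove \NP-hardness for a connected non-bipartite core $H$ whose shortest odd cycle has some length $2k+1$.

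\emph{Hard direction, the gadget construction.} The heart of the argument is to build, from $H$, polynomial-time gadgets that simulate a coloring problem already known to be \NP-hard. I would follow the indicator/sub-indicator method: given a small bipartite ``indicator'' graph $I$ with two distinguished vertices, define the auxiliary graph $H^{*}$ on $V_H$ whose edges record which pairs $(u,v)$ admit a homomorphism from $I$ sending the distinguished vertices to $u$ and $v$; replacing each edge of an input graph by a copy of $I$ then transfers $H^{*}$-colorability to $H$-colorability. By choosing $I$ so that $H^{*}$ collapses to a structure with a provably hard coloring problem, ultimately a clique $K_{2k+1}$ or a reduction from $(2k+1)$-\textsc{Coloring} or from $3$-\textsc{Sat}, one obtains a polynomial reduction establishing \NP-hardness. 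The sub-indicator construction is used in parallel to ``pin'' selected vertices and to shrink $H$ while preserving hardness. The genuinely difficult step is this last one: controlling the indicator construction so that $H^{*}$ becomes a graph whose hardness is already available, \emph{uniformly} over all non-bipartite cores $H$, including those of large odd girth and complicated automorphism structure. Handling this uniformity, rather than a single convenient family such as cliques, is what makes the non-bipartite direction the substantial part of the theorem.
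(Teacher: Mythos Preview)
The paper does not prove this theorem; it is quoted with a citation to Hell and Ne\v{s}et\v{r}il's original 1990 paper and used as background. There is therefore no proof in the paper to compare against.

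As for your sketch itself: the \NP\ membership and the bipartite direction are correct and complete. For the non-bipartite direction you have accurately described the \emph{shape} of the original Hell--Ne\v{s}et\v{r}il argument (reduce to a connected non-bipartite core, then use indicator and sub-indicator constructions to manufacture a reduction from a known hard problem), and you are right that the substance lies entirely in making the indicator step go through uniformly. But what you have written is a plan, not a proof: you do not specify which indicators to use, you do not show that the resulting $H^{*}$ is again non-bipartite or otherwise hard, and you do not explain how the induction or minimal-counterexample argument terminates. The actual 1990 proof is a delicate case analysis (on whether $H$ has adjacent vertices with comparable neighborhoods, on edge/vertex-transitivity properties, etc.) that is several pages long, and none of those cases are addressed here. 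So the hard direction remains a genuine gap in your write-up, even if the strategy you name is the correct one.
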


A key notion in the proof of Theorem~\ref{thm:Complexity H-COLORING} is the notion of a {\em graph core}: let $core(H)$ be the smallest induced subgraph $H'$ of $H$  such that $H\rightarrow H'$. The graph $H$ is said to be a {\em core} if $H=core(H)$. Note that the core of a graph $H$ is unique up to isomorphism and that the problems \textsc{$H$-Coloring} and $core(H)$-\textsc{Coloring} are equivalent. Thus, for both classical and fine-grained complexity, it is sufficient to consider $core(H)$-\textsc{Coloring}. Moreover, it is not difficult to verify that cliques and odd-cycles are cores. Notice that a graph $H$ is a core if and only if every $H$-coloring of $H$ is bijective.

For two graphs $G$ and $H$, we define their {\em cross product} $G\times H$ as the graph with  $V_{G\times H}=V_G\times V_H$ and   
$$E_{G\times H}=\{ ((u_1,v_1),(u_2,v_2)) ~|~ (u_1,u_2)\in E_G, (v_1,v_2)\in E_H \}.$$
Clearly, for graphs $A$,$B$ and $C$, we have that $(V_A\times V_B)\times V_C$ and $V_A\times (V_B\times V_C)$ are in bijection and thus, up to isomorphism, the cross product is associative. Hence, for each $m\geq 1$, we can define  $H^m=\underbrace{H\times \ldots \times H}_{m \text{ times}}$.  Last, we need the following graph parameter, defined with respect to the smallest odd-cycle in the graph.

\begin{definition}
    The {\em odd-girth} of a non-bipartite graph $H$ (denoted by $og(H)$) is the size of a smallest odd-cycle induced in $H$.
\end{definition}

For a bipartite graph we define the odd-girth to be infinite.

\subsection{Polymorphisms, pp/qfpp-definitions}

Even though the previous definitions apply only to graphs, {\it i.e.}, binary symmetric and irreflexive relations, we will need to introduce the following notions for relations $R$ of arbitrary arity.

\begin{definition}
Let $V$ be a finite set, $n,m\geq 1$ be integers, and let $ R\subseteq V^n$ be an $n$-ary relation on $V$. A partial function $f:dom(f)\rightarrow V$, with $dom(f)\subseteq V^m$,  is said to be a {\em partial polymorphism of $R$} if for every $n\times m$ matrix $A=(A_{i,j})\in V^{n\times m}$ such that for every $j\in [m]$,  the $j$-th column $A_{*,j}\in  R$ and for every $i\in [n]$, the $i$-th row $A_{i,*}\in dom(f)$,  the column $(f(A_{1,*}),\ldots,f(A_{n,*}))^{\top}\in R$.
In the case when $dom(f)=V^m$,  $f$ is a said to be a {\em total polymorphism} (or just a {\em polymorphism}) {\em of $R$}. We denote the sets of total and partial polymorphisms of $R$ by $\pol( R)$ and  $\pPol( R)$, respectively.
\end{definition}

Every (partial) function over a set $V$ is a (partial) polymorphism of both the empty relation (denoted by $\emptyset$) and the equality relation $\text{EQ}_V=\{(x,x) \mid x \in V\}$ over $V$ (or simply $\text{EQ}$ when the domain is clear from the context).

For a graph $H$, we sometimes use $\pPol(H)$ and $\pol(H)$ instead of $\pPol(E_H)$ and $\pol(E_H)$, where $E_H$ is viewed as a binary relation over the domain $V_H$. Note that $\pol(H)$ is exactly the set of $H$-colorings of $H^m$ for $m\geq 1$, and that $\pPol(H)$ is exactly the set of $H$-colorings of the induced subgraphs of $H^m$ for $m\in\N$.

\begin{definition}
Let  $R$ be a relation over a finite domain $V$. 
An $n$-ary relation $R'$
over $V$ is said
to have a {\em primitive positive-definition} (pp-definition) w.r.t. $R$ if there exists $m,m',n'\in\N$ such that
\begin{multline}
  R'(x_1, \ldots, x_{n}) \equiv \exists x_{n+1}, \ldots,
x_{n+n'} \colon \\ R(\mathbf{x}_1) \wedge \ldots \wedge
R({\mathbf{x}_m})\wedge \text{EQ}(\mathbf{y}_1) \wedge \ldots \wedge \text{EQ}(\mathbf{y}_{m'})\label{eq:def}
\end{multline}
where each $\mathbf{x}_i$ is an $\ar(R)$-ary tuple  and each $\mathbf{y}_i$ is an binary tuple of variables from $x_1,\ldots, x_{n}$, $x_{n+1}, \ldots, x_{n+ n'}$. Each term of the form $R(\mathbf{x}_i)$ or $\text{EQ}(\mathbf{y}_{j})$ is called an {\em atom} or a {\em constraint} of the pp-definition \eqref{eq:def}.
\end{definition}

In addition, if $n'=0$, then \eqref{eq:def} is called a {\em quantifier-free primitive positive-definition} (qfpp-definition) of $R'$. Let $\qfppclone{R}$ and $\cclone{ R}$  be the sets of qfpp-definable and  of pp-definable, respectively,  relations  over
$R$.

\begin{theorem}[\cite{poschel2013galois}]\label{thm:hgalois}\label{pp/qfpp Pol/pPol}
  Let $R$ and $R'$ be two relations over the same finite domain. Then 
  
  \begin{enumerate}
  \item $R'\in \qfppclone{ R}$ if and only if $\pPol( R)
  \subseteq \pPol(R')$ and 
  \item $R' \in \cclone{ R}$
  if and only if $\pol( R) \subseteq \pol(R')$.
  \end{enumerate}
\end{theorem}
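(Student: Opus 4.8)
The plan is to establish both Galois connections by splitting each equivalence into an easy \emph{preservation} direction and a harder \emph{definability} direction, and to treat the total case (2), with $\pol$ and $\cclone{\cdot}$, in parallel with the partial case (1), with $\pPol$ and $\qfppclone{\cdot}$, the sole difference being whether existential quantification is permitted.

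For the preservation directions---that $R'\in\cclone{R}$ implies $\pol(R)\subseteq\pol(R')$, and that $R'\in\qfppclone{R}$ implies $\pPol(R)\subseteq\pPol(R')$---I would argue by structural induction on the pp- (resp.\ qfpp-) definition \eqref{eq:def}. The base cases are that every (partial) polymorphism of $R$ preserves $R$ itself (by definition) and the equality relation $\text{EQ}$ (immediate). For the inductive step one checks that the class of relations preserved by a fixed (partial) function is closed under conjunction of atoms (intersection after relabelling coordinates), and that total polymorphisms are moreover closed under existential projection, whereas partial polymorphisms need not be. This last point is exactly why quantifier-free definitions match partial polymorphisms: dropping the existential quantifiers removes the one operation under which $\pPol$ fails to be closed.

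The substantive content is in the converse directions, for which I would use the standard ``matrix of tuples'' construction. Suppose $\pol(R)\subseteq\pol(R')$ with $R'$ of arity $n$, and enumerate $R'=\{c_1,\ldots,c_m\}$. Form the $n\times m$ matrix $C$ whose $j$-th column is $c_j$, and let $a_1,\ldots,a_n\in V^m$ be its rows. Define the candidate relation
$$ S=\{(f(a_1),\ldots,f(a_n)) : f \text{ is an } m\text{-ary polymorphism of } R\}. $$
Then $R'\subseteq S$ by taking $f=\pi_j^m$, since each projection is a polymorphism of $R$ and $(\pi_j^m(a_1),\ldots,\pi_j^m(a_n))=c_j$; and $S\subseteq R'$ uses the hypothesis, because any $m$-ary $f\in\pol(R)$ lies in $\pol(R')$, so applying it to $C$, whose columns all belong to $R'$, returns a tuple in $R'$. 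Hence $S=R'$. For the partial case the construction is identical, except that $f$ ranges over partial polymorphisms of $R$ whose domain contains $\{a_1,\ldots,a_n\}$: the (restricted) projections again give $R'\subseteq S$, and $\pPol(R)\subseteq\pPol(R')$ gives $S\subseteq R'$.

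It remains to show that $S$ is pp-definable (resp.\ qfpp-definable) from $R$, which I expect to be the main obstacle and the step requiring the most care. The idea is to read $S$ off the combinatorics of the rows $a_1,\ldots,a_n$: introduce one variable $x_i$ per coordinate $i\in[n]$, impose the equality atom $x_i=x_{i'}$ whenever $a_i=a_{i'}$, and impose the atom $R(x_{i_1},\ldots,x_{i_{\ar(R)}})$ for every index tuple such that the corresponding rows are coordinate-wise in $R$, i.e.\ $(a_{i_1}[j],\ldots,a_{i_{\ar(R)}}[j])\in R$ for all $j\in[m]$. The conjunction $\Phi$ of all these atoms is quantifier-free, and a satisfying assignment $t$ corresponds to the partial function $a_i\mapsto t_i$ (well-defined thanks to the equality atoms), which the $R$-atoms are precisely designed to force into $\pPol(R)$; the hypothesis then places it in $\pPol(R')$, which pushes $t$ into $R'$, so the solution set of $\Phi$ equals $S=R'$, yielding the qfpp-definition for (1). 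For the total case (2), the same formula written over the full index set $V^m$ pp-defines the ``graph of all $m$-ary polymorphisms of $R$'', and existentially projecting onto the coordinates $a_1,\ldots,a_n$ recovers $S$; the existential quantifiers introduced here are exactly what separates pp- from qfpp-definability. The delicate points are verifying that $\Phi$ does not over-generate---this is where the containment $S\subseteq R'$, and hence the polymorphism hypothesis, is invoked---and the bookkeeping of the correspondence between satisfying assignments of $\Phi$ and (partial) polymorphisms.
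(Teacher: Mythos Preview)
The paper does not prove this theorem: it is quoted from the literature with the citation \cite{poschel2013galois} and used as a black box throughout, so there is no ``paper's own proof'' to compare against. Your sketch is a correct reconstruction of the standard argument for the $\pol$/$\mathrm{Inv}$ and $\pPol$/$\mathrm{Inv}$ Galois connections (going back to Geiger and Bodnar\v{c}uk--Kalu\v{z}nin--Kotov--Romov for the total case, and Romov for the partial case): the easy preservation direction by induction on the shape of the definition, and the definability direction via the $n\times m$ tableau of all tuples of $R'$, reading off a qfpp-formula from the rows and, in the total case, padding to all of $V^m$ before existentially projecting. One cosmetic remark: once you verify directly that the solution set of your formula $\Phi$ equals $R'$, the separate introduction of $S$ as the polymorphism-closure is redundant (you end up proving $\Phi$-solutions $=S=R'$ twice over), though it does no harm and arguably clarifies where the hypothesis $\pPol(R)\subseteq\pPol(R')$ enters.
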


\subsection{\Csp s and polymorphisms}

We now recall the link between the complexity of \Csp s and the algebraic tools described in the previous section (recall that the \textsc{$H$-Coloring} problem is the same problem as \Csp($E_H$)).

\begin{theorem}[\cite{jeavons1998}]\label{Poly reduction}
  Let $R$ and $R'$ be two relations over the same finite domain where $\pol( R)\subseteq \pol(R')$. Then {\em \Csp($R'$)} is polynomial-time many-one reducible to {\em \Csp($R$)}.
\end{theorem}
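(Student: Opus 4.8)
The plan is to exploit the Galois connection of Theorem~\ref{thm:hgalois}, which translates the hypothesis $\pol(R) \subseteq \pol(R')$ into the purely combinatorial statement that $R' \in \cclone{R}$, i.e.\ that $R'$ admits a pp-definition over $R$. Fixing such a definition
\begin{multline*}
R'(x_1,\ldots,x_n) \equiv \exists x_{n+1},\ldots,x_{n+n'} \colon \\ R(\mathbf{x}_1) \wedge \cdots \wedge R(\mathbf{x}_m) \wedge \text{EQ}(\mathbf{y}_1) \wedge \cdots \wedge \text{EQ}(\mathbf{y}_{m'}),
\end{multline*}
I would regard the parameters $n$, $n'$, $m$, $m'$ and the shape of each atom as \emph{constants}, since they depend only on $R$ and $R'$ and not on the input instance. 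This is the crucial point that makes the reduction run in polynomial time.

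Next I would describe the reduction explicitly. Let $I'$ be an instance of \Csp($R'$) with variables $X$ and constraints $R'(\mathbf{z}_1),\ldots,R'(\mathbf{z}_t)$. For each constraint $R'(\mathbf{z}_k)$ I substitute the body of the pp-definition: I introduce $n'$ fresh variables $x_{n+1}^{(k)},\ldots,x_{n+n'}^{(k)}$ (one private copy of the existentially quantified variables per constraint), and I add the $m$ atoms $R(\cdot)$ and the $m'$ equality atoms obtained by plugging the coordinates of $\mathbf{z}_k$ into $x_1,\ldots,x_n$ and the fresh variables into $x_{n+1},\ldots,x_{n+n'}$. Since \Csp($R$) admits only constraints of the form $R(\cdot)$, I eliminate the equality atoms by a standard union--find preprocessing step: I compute the equivalence classes of variables generated by all the equality atoms and replace every variable by a fixed representative of its class, deleting the equality atoms afterwards. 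The result is an instance $I$ of \Csp($R$) with $|X| + t\,n'$ variables and $t\,m$ constraints, hence computable in polynomial (indeed linear) time.

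For correctness I would prove both directions. If $s'\colon X \to V$ satisfies $I'$, then for each $k$ the truth of $R'(s'(\mathbf{z}_k))$ together with the pp-definition guarantees witnesses for the existential variables; assigning these witnesses to $x_{n+1}^{(k)},\ldots,x_{n+n'}^{(k)}$ yields an assignment satisfying every $R$-atom, and it is consistent across the equality classes, so $I$ is satisfiable. Conversely, any solution of $I$ restricted to $X$ satisfies each $R'(\mathbf{z}_k)$, because the values it assigns to the fresh variables exhibit exactly the witnesses required by the existential quantifiers of the pp-definition. The only genuinely delicate point is the bookkeeping around the equality atoms and the fresh variables: one must verify that merging variables into equivalence classes never forces an inconsistency that was absent in $I'$, and that the per-constraint copies of the existential variables do not interfere with one another---both facts follow because the quantified variables in \eqref{eq:def} are local to a single application of the pp-definition. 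Everything else is routine.
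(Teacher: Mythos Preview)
The paper does not prove this theorem; it is merely quoted from \cite{jeavons1998} as background, so there is no in-paper proof to compare against. Your argument is the standard one that underlies the cited reference: invoke the Galois connection (Theorem~\ref{thm:hgalois}) to obtain a fixed pp-definition of $R'$ over $R$, then replace each $R'$-constraint in the input instance by the body of that definition, introducing fresh existential variables per constraint and contracting equality atoms. The reasoning is correct, and the size bound and correctness directions are handled properly.
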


  Let $R$ be a relation over a finite domain $V$. Define:
  \[{\sf T}( R)=\inf \{c>1~\colon \text{ \Csp($R$) is solvable in time}~ O^*(c^n)\}\]
  where $n$ is the number of variables in a given \Csp($R$) instance, (with the notation $O^*(v_n)=O(v_n\times n^{O(1)})$ for all $(v_n)_{n\in\N}\in \mathbb{R}^{\N}$).

\begin{theorem}[\cite{jonsson2017}] \label{thm:ppol_complexity}
  Let $R$ and $R'$ be relations over a finite domain $V$. If
  $\pPol( R) \subseteq \pPol(R')$, then ${\sf T}(R') \leq
  {\sf T}( R)$.
\end{theorem}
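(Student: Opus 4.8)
The plan is to reduce the statement to the Galois correspondence of Theorem~\ref{thm:hgalois}(1). By hypothesis $\pPol(R)\subseteq\pPol(R')$, so that theorem yields $R'\in\qfppclone{R}$; that is, $R'$ admits a quantifier-free primitive positive definition over $R$, namely a conjunction of $R$-atoms and equality atoms whose variables are exactly the free variables $x_1,\ldots,x_{\ar(R')}$ and which uses no existential quantifiers. The single feature driving the whole argument is that this definition is quantifier-free: unlike the pp-definitions underlying the classical polynomial-time reduction of Theorem~\ref{Poly reduction}, a qfpp-definition introduces no auxiliary variables, and it is precisely this that will let the reduction preserve the number of variables and hence transfer the fine-grained bound $O^*(c^n)$ rather than merely polynomial-time solvability.

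Next I would build a variable-preserving reduction from \Csp$(R')$ to \Csp$(R)$. Given an instance $I'$ of \Csp$(R')$ on a variable set $X$ with $|X|=n$, I replace every constraint $R'(\mathbf{x})$ by a copy of the qfpp-definition of $R'$, substituting its free variables by the entries of $\mathbf{x}$; this produces $R$-atoms and equality atoms over $X$ only. To obtain a genuine \Csp$(R)$ instance I then eliminate the equality atoms by variable identification: form the equivalence relation on $X$ generated by all equality atoms and replace each variable by a chosen representative of its class, leaving only $R$-atoms. The resulting instance $I$ is over the same domain $V$, has at most $n$ variables, and --- because the qfpp-definition defines exactly $R'$ --- is satisfiable if and only if $I'$ is. Since each constraint of $I'$ expands into a number of atoms bounded by a constant depending only on $R$ and $R'$, the transformation runs in polynomial time.

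Finally I would push the complexity bound through this reduction and take infima. By definition the set of admissible bases for $R$ is upward closed (as $O^*(c^n)\subseteq O^*(c'^n)$ for $c'>c$) and nonempty (brute force over $V$ gives the base $|V|$), so for every $c>{\sf T}(R)$ the problem \Csp$(R)$ is solvable in $O^*(c^n)$ time. Composing the polynomial-time, variable-nonincreasing reduction above with such an algorithm solves \Csp$(R')$ in $O^*(c^n)$ time as well, the polynomial overhead being absorbed into the $O^*$ notation; hence ${\sf T}(R')\leq c$. Letting $c$ tend to ${\sf T}(R)$ yields ${\sf T}(R')\leq{\sf T}(R)$. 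I expect the only genuinely delicate point to be the bookkeeping around the equality atoms: one must check that contracting variables neither increases the variable count nor alters satisfiability, but this is routine once the quantifier-free structure is in hand.
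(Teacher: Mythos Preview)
The paper does not prove this theorem; it is quoted from \cite{jonsson2017} as a preliminary fact, so there is no proof in the paper to compare against. Your argument is the standard one underlying that citation and is correct: the Galois connection of Theorem~\ref{thm:hgalois}(1) yields a qfpp-definition of $R'$ over $R$, and because such a definition introduces no existentially quantified variables, substituting it constraint-by-constraint (with equality atoms discharged by variable identification) gives a polynomial-time reduction from \Csp$(R')$ to \Csp$(R)$ that does not increase the number of variables, so any $O^*(c^n)$ algorithm for \Csp$(R)$ transfers to \Csp$(R')$ and the inequality on ${\sf T}$ follows by taking infima.
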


These two theorems motivate our study of polymorphisms of graphs: since \Csp$(E_H)$ is the same problem as \textsc{$H$-Coloring}, key information about the fine-grained complexity of \textsc{$H$-Coloring} is contained in the set $\pPol(E_H)$.

\section{The inclusion structure of partial polymorphisms of graphs} \label{sec:classification}

In this section we study the inclusion structure of sets of the form $\pPol(H)$ when $H$ is a graph with $V_H=V$ for a fixed, finite set $V$. In other words, we are interested in describing the set $${\mathcal H} = \{\pPol(H) \mid H \text{ is a graph over }V\}$$ partially ordered by set inclusion. Here, one may observe that the requirement that $V_H = V_{H'} = V$ is not an actual restriction. Indeed, if $V_{H'} \subsetneq V$, then we can easily obtain a graph over $V$ simply by adding isolated vertices, with no impact on the set of partial polymorphisms.

\subsection{Trivial inclusion structure}

Our starting point is to establish $\pPol(H) \subseteq \pPol(H')$ when $H,H'$ are non-bipartite graphs, since it implies that (1) \textsc{$H$-Coloring} and $H'$-\textsc{Coloring} are both \NP-complete, and (2)  ${\sf T}(H') \leq {\sf T}(H)$, i.e., that $H'$-\textsc{Coloring} is not strictly harder than \textsc{$H$-Coloring}.

Inclusions of this kind e.g.\ raise the question whether there
 exist, for every fixed finite domain $V$, an \NP-hard $H$-\textsc{Coloring} problem which is (1) maximally easy, or (2) maximally hard\footnote{Here, ``maximally'' refers to the function {\sf T}.}.

As we will soon prove, the set ${\mathcal H}$ does {\em not} admit any non-trivial inclusions, in the sense that $\pPol(H)\subseteq \pPol(H')$ implies that either $H=H'$ or $E_{H'}=\emptyset$, for {\em all} $\pPol(H), \pPol(H') \in {\mathcal H}$.

\begin{theorem} \label{thm:no_inclusions}
  Let $H$ and $H'$ be two graphs with the same finite domain $V_H=V_{H'}=V$. Then $\pPol(H)\subseteq \pPol(H')$ if and only if $H=H'$ or $E_{H'}=\emptyset$.
\end{theorem}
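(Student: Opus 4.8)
The plan is to prove the contrapositive direction carefully: the ``if'' part is immediate, since $H=H'$ gives equality of the polymorphism sets, and $E_{H'}=\emptyset$ makes every partial function a partial polymorphism of $H'$ (so $\pPol(H)\subseteq\pPol(H')$ trivially). The substance is the ``only if'' direction: assuming $\pPol(H)\subseteq\pPol(H')$ with $E_{H'}\neq\emptyset$, I want to conclude $H=H'$. By Theorem~\ref{thm:hgalois}(1), the inclusion $\pPol(H)\subseteq\pPol(H')$ is equivalent to $E_{H'}\in\qfppclone{E_H}$, i.e., the edge relation of $H'$ has a quantifier-free primitive positive definition over $E_H$. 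So the entire problem reduces to understanding exactly which binary relations can be qfpp-defined from $E_H$, and showing that the only symmetric irreflexive nonempty such relation on the vertex set $V$ is $E_H$ itself.

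The key structural step is to analyze what a qfpp-definition of a \emph{binary} relation $E_{H'}(x_1,x_2)$ over $E_H$ can look like. Such a definition is a conjunction of atoms of the form $E_H(y,z)$ and $\mathrm{EQ}(y,z)$, with no existential quantifiers, where $y,z\in\{x_1,x_2\}$. First I would argue that equality atoms are essentially harmless or immediately trivializing: an atom $\mathrm{EQ}(x_1,x_2)$ would force $E_{H'}\subseteq\mathrm{EQ}$, which combined with irreflexivity of $H'$ forces $E_{H'}=\emptyset$, a contradiction; atoms $\mathrm{EQ}(x_1,x_1)$ are vacuous. So the meaningful constraints are the $E_H$-atoms on the pair $(x_1,x_2)$, and by symmetry of $E_H$ the possible atoms are $E_H(x_1,x_1)$, $E_H(x_2,x_2)$, $E_H(x_1,x_2)$, and $E_H(x_2,x_1)$. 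Since $H$ is irreflexive, $E_H(x_1,x_1)$ is never satisfiable, so if it appears the definition is empty, again a contradiction. That leaves only $E_H(x_1,x_2)$ and $E_H(x_2,x_1)$, each of which defines exactly $E_H$ (using symmetry), so their conjunction defines $E_H$, giving $E_{H'}=E_H$ and hence $H=H'$.

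The main obstacle — and the place where the argument must be made fully rigorous rather than hand-waved — is justifying that \emph{no other relations} can be introduced into the definition, in particular ruling out the use of auxiliary structure hidden in a more clever qfpp-definition. The formal definition of qfpp-definition in the excerpt already restricts the atoms to $E_H$-atoms and $\mathrm{EQ}$-atoms on variables drawn only from $x_1,\ldots,x_n$ (here $n=2$, and with $n'=0$ there are no fresh existential variables), so the case analysis above is genuinely exhaustive once one is careful that every variable appearing is among $x_1,x_2$. I would therefore state the qfpp-definition explicitly, enumerate the finitely many atom types, and dispatch each. An alternative, perhaps cleaner, route that avoids reasoning about the syntactic form is purely semantic: I would exhibit for each $H\neq H'$ a \emph{witnessing partial polymorphism} separating them. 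Concretely, if $H\neq H'$ and both are irreflexive with $E_{H'}\neq\emptyset$, then either $E_H\not\subseteq E_{H'}$ or $E_{H'}\not\subseteq E_H$; in the latter case pick an edge $(a,b)\in E_{H'}\setminus E_H$ and use the unary identity-like partial map whose domain encodes a $1\times 1$ matrix witnessing failure, showing a partial polymorphism of $H$ that is not one of $H'$. I would use whichever of these two presentations is shorter, but I expect the syntactic qfpp analysis via Theorem~\ref{thm:hgalois} to be the more transparent and self-contained, so that is my primary plan.
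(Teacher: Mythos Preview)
Your proposal is correct and follows essentially the same route as the paper's own proof: invoke Theorem~\ref{thm:hgalois} to reduce $\pPol(H)\subseteq\pPol(H')$ to $E_{H'}\in\qfppclone{E_H}$, then exhaust the finitely many possible atoms on two variables using irreflexivity and symmetry of $E_H$ and $E_{H'}$. The one case you (and the paper) leave implicit is the empty or purely vacuous conjunction, which would define $V^2$; this is ruled out by irreflexivity of $H'$ once $E_{H'}\neq\emptyset$, so just be sure to mention it.
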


\begin{proof}
  To prove sufficiency, assume that $H=H'$ or that $H'$ has no edges. Then $\pPol(H)=\pPol(H')$ or
  $\pPol(H)\subseteq \pPol(H')$ since in the latter case $\pPol(H')$ contains every partial function.

To prove necessity, assume that $\pPol(H)\subseteq \pPol(H')$. Then, by Theorem~\ref{pp/qfpp Pol/pPol}, $E_H$ qfpp-defines $E_{H'}$. However, the only possible atoms using $E_H$ and two variables $x$ and $y$ are: (1) $E_H(x,x)$ and $E_H(y,y)$, which cannot appear by irreflexivity, unless $E_{H'}=\emptyset$ and (2) $E_H(x,y)$ and $E_H(y,x)$, which are equivalent by symmetry. Also, if the qfpp-definition would contain an equality constraint $\text{EQ}(x,y)$, then $E_{H'}$ would not be irreflexive, unless $E_{H'}=\emptyset$. Hence, any qfpp-definition of $E_{H'}$ either (1) contains $E_{H}(x,x)$, $E_{H}(y,y)$ or $\text{EQ}(x,y)$, meaning that $E_{H'} = \emptyset$, or (2) only contains $E_{H}(x,y)$ or $E_H(y,x)$, meaning that $H = H'$.
\end{proof}

\subsection{Higher-arity inclusions}

 As proven in Theorem~\ref{thm:no_inclusions}, the expressivity of binary irreflexive symmetric relations is rather limited, in the sense that ${\mathcal H}$  does not admit {\em any} non-trivial inclusions.
 It is thus natural to  ask whether anything at all can be said concerning inclusions of the form $\pPol(H) \subseteq \pPol(R)$ when $R$ is an arbitrary relation. In particular, under which conditions does there exist an $n$-ary $R$ such that $\pPol(H) \subsetneq \pPol( R)$, given that \textsc{$H$-Coloring} and \Csp$( R)$ are both \NP-complete?
We give a remarkably sharp classification and, assuming that $\textsf{P} \neq \NP$, we prove that an $n$-ary relation $R$ with the stated properties exists if and only if $H$ contains an odd-cycle of length $\leq n$. We first require the following auxiliary lemma (recall the definition of odd-girth from Section~\ref{sec:prel}).

\begin{lemma}\label{lem:ppdefog}
Let $H$ be a non-bipartite graph and let $k:=og(H)$. Let $(x_1,\dots,x_k)\in (V_H)^k$. Then, $x_1x_2\dots x_kx_1$ forms an induced $k$-cycle in $H$ if and only if $(x_1,x_2), (x_2,x_3),\dots (x_{k-1},x_k)$ and $(x_k,x_1)$ are edges in $H$.
\end{lemma}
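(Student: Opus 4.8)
\emph{The plan.} The forward implication is immediate: if $x_1x_2\cdots x_kx_1$ is an induced $k$-cycle then, by the very definition of a cycle, consecutive vertices are adjacent, so each of $(x_1,x_2),\dots,(x_{k-1},x_k),(x_k,x_1)$ is an edge. All the content lies in the converse, so I would assume that these $k$ pairs are edges and deduce that the closed walk $W=x_1x_2\cdots x_kx_1$ is in fact an induced cycle. Throughout I will use that $k=og(H)$ is \emph{odd} (it is the length of a shortest odd cycle) and, crucially, that $H$ has no odd cycle shorter than $k$. I will also rely on the standard fact that every closed walk of odd length contains an odd cycle.

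\emph{Distinctness of the vertices.} First I would show that $x_1,\dots,x_k$ are pairwise distinct, which rules out the degenerate possibility that $W$ fails to be a genuine $k$-cycle. Suppose $x_i=x_j$ with $i<j$. Then $W$ splits into two closed subwalks, $x_ix_{i+1}\cdots x_j$ of length $j-i$ and $x_jx_{j+1}\cdots x_kx_1\cdots x_i$ of length $k-(j-i)$, each of length strictly less than $k$. Since these two lengths sum to the odd number $k$, one subwalk has odd length $<k$; being an odd closed walk, it contains an odd cycle of length $<k$, contradicting $og(H)=k$. (A length-one odd subwalk cannot arise, as it would force a loop $(x_i,x_i)\in E_H$, impossible since $H$ is simple; hence the offending odd subwalk has length $\geq 3$.) Thus all $x_i$ are distinct and $W$ is a simple $k$-cycle.

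\emph{Absence of chords.} It remains to show that the only edges among $\{x_1,\dots,x_k\}$ are the $k$ cycle edges. Suppose toward a contradiction that $(x_i,x_j)\in E_H$ for some non-consecutive pair, say with $2\le j-i\le k-2$. This chord cuts the $k$-cycle into two cycles, $x_ix_{i+1}\cdots x_jx_i$ and $x_jx_{j+1}\cdots x_ix_j$, of lengths $(j-i)+1$ and $(k-(j-i))+1$, both lying strictly between $3$ and $k$. Their lengths sum to $k+2$, which is odd, so one of the two cycles is odd and has length $<k$, again contradicting $og(H)=k$. Therefore $W$ has no chord, and $x_1x_2\cdots x_kx_1$ is an induced $k$-cycle, completing the converse.

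\emph{Main obstacle.} The only genuinely delicate points are the parity bookkeeping (ensuring that in each decomposition the two pieces have opposite parities, which is forced precisely because $k$ is odd) and the appeal to the fact that an odd closed walk contains an odd cycle; once these are in place, minimality of the odd-girth does all the work. I expect the distinctness step to be the main obstacle, since it is where one must invoke the closed-walk-to-cycle fact and discard the spurious loop case, whereas the chord step yields honest cycles directly.
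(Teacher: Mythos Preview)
Your proof is correct and rests on the same parity-and-minimality idea as the paper's, but the organization differs in a way worth noting. You proceed in two explicit stages: first establish that $x_1,\dots,x_k$ are pairwise distinct by splitting the closed walk at a repetition, then rule out chords by splitting the resulting cycle. The paper instead picks a \emph{smallest} odd cycle $C$ contained in $H[\{x_1,\dots,x_k\}]$ (which exists since the closed walk $x_1\cdots x_kx_1$ has odd length), shows by the same chord-cutting parity argument that $C$ is induced in $H$, and then invokes $og(H)=k$ to force $|C|=k$, so $C$ exhausts the vertex set and equals $H[\{x_1,\dots,x_k\}]$. The advantage of the paper's packaging is that distinctness falls out for free at the end (since $C$ has $k$ vertices inside a set of at most $k$), so there is no need for your separate closed-walk-to-odd-cycle step; the advantage of your version is that each step yields an honest cycle directly, making the parity bookkeeping more transparent. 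One small wording quibble: when you say the two chord-cycles have lengths ``strictly between $3$ and $k$'', you mean at least $3$ and strictly less than $k$ (length $3$ is attained when $j-i=2$); this does not affect the argument.
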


\begin{proof}

First, notice that if $x_1x_2\dots x_kx_1$ forms an induced $k$-cycle in $H$, then $(x_1,x_2), (x_2,x_3),\dots (x_{k-1},x_k)$ and $(x_k,x_1)$ are edges in $H$, which proves sufficiency.

To prove necessity, assume that $(x_1,x_2), (x_2,x_3),\dots (x_{k-1},x_k)$ and $(x_k,x_1)$ are edges in $H$.
Consider a smallest odd-cycle $C=x'_1\dots x'_px'_1$ (with $p$ odd and $p\le k$) that is a subgraph (not necessarily induced) of $H[\{x_1\dots x_k\}]$. Such an odd-cycle exists because $x_1\dots x_kx_1$ is an odd-cycle. We prove by contradiction that $C$ is induced in $H$. If $C$ is not induced in $H$, there exists an edge $(x'_i,x'_j)\in E_H$ with $i<j$ and $j-i \notin \{1,p-1\}$. Then either $j-i$ is odd and $x'_1\dots x'_{i-1}x'_ix'_jx'_{j+1}\dots x'_px'_1$ is an odd-cycle (of size $p-(j-i)+1<p$) smaller than $C$ that is a subgraph of $H[\{x_1\dots x_k\}]$, which contradicts the definition of $C$, or $j-i$ is even and $x'_ix'_{i+1}\dots x'_jx'_i$ is an odd-cycle (of size $(j-i)+1<p$) smaller than $C$ that is a subgraph of $H[\{x_1\dots x_k\}]$, which contradicts the definition of $C$.

By definition of $k:=og(H)$, the induced odd-cycle $C$ has at least $k$ vertices. Since $C$ is an induced subgraph of $H[\{x_1\dots x_k\}]$ with $k$ vertices that is an odd-cycle, $x_1\dots x_kx_1$ induces a $k$-cycle, which proves necessity.
\end{proof}

The following definition and lemma are particularly usefull when establishing our classification.

\begin{definition}\label{def:wall}
Let $n,m\geq 1$ be integers, $H$ be a graph, $R$ be a relation of arity $n$ over $V_H$, and let $M=(M_{i,j})$ be a  $n\times m$ matrix of elements of $V_H$. We say that $M$ is an {\em $R$-wall for $H$} if:
\begin{enumerate}
    \item $\forall j\in [m],\ (M_{1,j},...,M_{n,j})^{\top}\in R$, and
    \item $\forall (i,i')\in [n]^2,\ \exists j\in [m],\ (M_{i,j},M_{i',j})^{\top}\notin E_H$.
\end{enumerate}
\end{definition}

In the following lemma, we say, for a relation $R$, that \Csp($R$) is {\em trivial} if every instance of \Csp($R$) is satisfiable. Clearly, if \Csp($R$) is trivial, it is not \NP-complete, even if \textsf{P=NP}.

\begin{lemma}\label{prop:Triviality}
Let $H$ be a graph and let $R$ be  an $n$-ary  relation over $V_H$. Suppose that $\pPol(H)\subseteq \pPol( R)$ and that there exists an $R$-wall $M$ for $H$. Then, \Csp($R$) is trivial.
\end{lemma}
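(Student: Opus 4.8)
The plan is to show that $R$ contains a \emph{constant tuple} $(a,a,\dots,a)^{\top}$ for some $a\in V_H$; this immediately yields triviality, since then the constant assignment sending every variable of any \Csp($R$)-instance to $a$ satisfies every constraint. (Conversely, triviality would force such a tuple to exist by considering the single-variable instance $R(v,\dots,v)$, so this reformulation loses nothing.) Thus the whole task reduces to exhibiting one constant tuple in $R$.

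To produce it, I would feed the wall $M$ through a suitably chosen partial polymorphism. Fix any $a\in V_H$ and let $D=\{M_{i,*}:i\in[n]\}\subseteq (V_H)^m$ be the set of rows of $M$. Define the partial function $f\colon D\to V_H$ by $f(M_{i,*})=a$ for every $i\in[n]$ (this is well defined even if two rows coincide, as both are sent to $a$). The key step is to verify that $f\in\pPol(H)$, i.e.\ that $f$ is a partial polymorphism of the binary relation $E_H$. Unwinding the definition, $f$ could fail to be a partial polymorphism only if there were a $2\times m$ matrix whose two rows both lie in $D$ and whose every column is an edge of $H$; such rows would be $M_{i,*}$ and $M_{i',*}$ for some $i,i'\in[n]$ with $(M_{i,j},M_{i',j})\in E_H$ for all $j\in[m]$, and the resulting output column would be $(a,a)^{\top}$, which is not an edge since $H$ is irreflexive. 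But condition~(2) in the definition of an $R$-wall guarantees that for every pair $(i,i')$ there is a column $j$ with $(M_{i,j},M_{i',j})\notin E_H$, so no such matrix exists and the defining implication holds vacuously; hence $f\in\pPol(H)$.

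Now I would invoke the hypothesis $\pPol(H)\subseteq\pPol(R)$ to conclude $f\in\pPol(R)$. Applying $f$ row-wise to $M$ is legitimate: by condition~(1) each column $(M_{1,j},\dots,M_{n,j})^{\top}$ belongs to $R$, and each row $M_{i,*}$ lies in $dom(f)=D$. The definition of a partial polymorphism of $R$ therefore gives $(f(M_{1,*}),\dots,f(M_{n,*}))^{\top}=(a,\dots,a)^{\top}\in R$, the desired constant tuple, and triviality of \Csp($R$) follows.

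The only delicate point is the verification that the constant partial map $f$ is a partial polymorphism of $H$: this is exactly where irreflexivity of $H$ together with condition~(2) of the $R$-wall enters, the point being that the offending $2\times m$ edge-matrix is ruled out so that the polymorphism condition is satisfied vacuously. Everything else is a direct unwinding of the definitions of $R$-wall and of partial polymorphism, so I do not anticipate further obstacles.
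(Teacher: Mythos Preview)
Your proposal is correct and follows essentially the same argument as the paper's proof: both use condition~(2) of the $R$-wall to show that the constant partial function on the rows of $M$ lies in $\pPol(H)$ (vacuously, since no $2\times m$ matrix with all columns in $E_H$ can have both rows among the rows of $M$), then transfer it to $\pPol(R)$ and apply it to $M$ via condition~(1) to obtain a constant tuple in $R$. If anything, you spell out the verification that $f\in\pPol(H)$ more carefully than the paper, which merely asserts that ``it is easy to check'' that \emph{any} partial function with that domain is a partial polymorphism of $H$.
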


\begin{proof}
Using property 2) of Definition~\ref{def:wall}, it is easy to check that any partial function $f$ whose domain is the set of rows of $M$ is in $\pPol(H)$. 
In particular, $f$ can be chosen to be of constant value $a\in V_H$.
Then, from $\pPol(H)\subseteq \pPol( R)$ it follows that $f\in \pPol(R)$. Combining this with property 1) of Definition~\ref{def:wall}, we conclude that $(a,\ldots,a)^{\top}\in  R$. 
Since the valuation sending all variables to $a$  satisfies any instance of \Csp($R$), the proof is now complete.
\end{proof}

We now propose a construction of an $R$-wall for a graph $H$ with $n:=\ar(R) < og(H)$, and  such that $\pPol(H)\subsetneq\pPol(R)$.

\begin{lemma}\label{lem:qfpp}
Let  $H$ be a graph with $og(H)> n$, and let $ R \neq \emptyset$ be an $n$-ary relation such that $\pPol(H)\subseteq \pPol( R)$. If $\forall (x_1,\ldots,x_n)\in (V_H)^n,  R(x_1,\ldots,x_n) \implies E_H(x_1,x_2)$, then $R$ qfpp-defines $E_H$.
\end{lemma}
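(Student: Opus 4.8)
The plan is to produce an explicit \emph{single-atom} qfpp-definition of $E_H$ from $R$; by Theorem~\ref{pp/qfpp Pol/pPol} this is equivalent to proving $\pPol(R)\subseteq\pPol(H)$, so a direct combinatorial construction suffices. Since $R\neq\emptyset$, I fix one tuple $t=(c_1,\dots,c_n)\in R$. The crucial structural fact is that the induced subgraph $H[\{c_1,\dots,c_n\}]$ is \emph{bipartite}: it has at most $n$ vertices, so any odd cycle inside it would be an odd cycle of $H$ of length $\leq n$, contradicting $og(H)>n$. Fix a $2$-coloring $\{c_1,\dots,c_n\}=A\uplus B$. By hypothesis $R(c_1,\dots,c_n)$ implies $E_H(c_1,c_2)$, so $c_1$ and $c_2$ are adjacent and hence lie in different parts; I relabel so that $c_1\in A$ and $c_2\in B$.

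I then define a pattern $\sigma\colon[n]\to\{1,2\}$ by $\sigma(i)=1$ if $c_i\in A$ and $\sigma(i)=2$ if $c_i\in B$, noting $\sigma(1)=1$ and $\sigma(2)=2$, and I claim the qfpp-definition $E_H(x_1,x_2)\equiv R(x_{\sigma(1)},\dots,x_{\sigma(n)})$. Soundness is immediate from the hypothesis: if the tuple $(x_{\sigma(1)},\dots,x_{\sigma(n)})$ evaluated at $x_1=a$, $x_2=b$ lies in $R$, then its first two coordinates are $a$ and $b$, whence $(a,b)\in E_H$.

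For completeness I use the inclusion $\pPol(H)\subseteq\pPol(R)$ in unary arity. Given any edge $(a,b)\in E_H$, I let $g$ be the unary partial map on $\{c_1,\dots,c_n\}$ sending $A$ to $a$ and $B$ to $b$. Every edge of the bipartite graph $H[\{c_1,\dots,c_n\}]$ runs between $A$ and $B$, so it is sent to the edge $(a,b)$ (up to symmetry); thus $g$ is a partial endomorphism of $H$, i.e.\ $g\in\pPol(H)\subseteq\pPol(R)$. Applying $g$, viewed as a unary partial polymorphism, to $t\in R$ yields $(g(c_1),\dots,g(c_n))\in R$, which is exactly the tuple $(x_{\sigma(1)},\dots,x_{\sigma(n)})$ at $x_1=a$, $x_2=b$. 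Hence every edge of $H$ satisfies the definition, and the two inclusions give $E_H(x_1,x_2)\equiv R(x_{\sigma(1)},\dots,x_{\sigma(n)})$, as desired.

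The heart of the argument, and the step I expect to be the main obstacle to get right, is the bipartiteness observation: the bound $og(H)>n$ is precisely what guarantees that the at-most-$n$ vertices occurring in any tuple of $R$ induce an odd-cycle-free (hence bipartite) subgraph, and this is what lets me collapse an arbitrary tuple of $R$ onto any prescribed edge of $H$ by a single partial homomorphism. The remaining points are routine bookkeeping: $g$ is well defined even when coordinates of $t$ repeat (equal $c_i$ lie in the same part), and the $2$-coloring can always be chosen with $c_1\in A$, $c_2\in B$ since $c_1,c_2$ are adjacent and thus in the same connected component.
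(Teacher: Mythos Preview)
Your proof is correct and follows essentially the same approach as the paper: fix a tuple of $R$, use $og(H)>n$ to get bipartiteness of the induced subgraph, collapse via a unary partial homomorphism onto an arbitrary edge, and obtain a single-atom qfpp-definition. The only cosmetic difference is that you normalize the bipartition so that $c_1\in A$, $c_2\in B$, whereas the paper leaves this unspecified and invokes the symmetry of $E_H$ at the end; both are fine.
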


\begin{proof}
Suppose that $ R(x_1,\ldots,x_n) \implies E_H(x_1,x_2)$, and let $(a_1,\ldots,a_n)\in R \neq \emptyset$. Since $H$ has no odd-cycle of size $\leq n$, $\{a_1,\ldots a_n\}$ induces a bipartite graph in $H$: there exists a partition $A\uplus B$ of $\{a_1,\ldots,a_n\}$ such that $E_{H[\{a_1,\ldots,a_n\}]}\subseteq (A\times B)\uplus (B\times A)$.
For $(x,y)\in E_H$, define $f_{x,y}\colon \{a_1,\ldots,a_n\}  \rightarrow  V_H$ by
$f_{x,y}(a_i)=x$, if $a_i\in A$, and $f_{x,y}(a_i)=y$, if $a_i\in B$.
Since $(x,y)\in E_H$, we have that $f_{x,y}\in\pPol(H)$, and since $\pPol(H)\subseteq \pPol( R)$, we also have that $f_{x,y}\in \pPol( R)$. As $(a_1,\ldots,a_n)\in R$, it follows $(f_{x,y}(a_i))_{1\leq i\leq n}\in R$.

This proves that $E_H(x,y)\implies  R(\textbf{x}_{A,B}(x,y))$, where $\textbf{x}_{A,B}(x,y)[i]:=f_{x,y}(a_i)$ equals $x$ if $a_i\in A$ and $\textbf{x}_{A,B}(x,y)[i]=y$ if $a_i\in B$.

Reversely, since $(a_1,\dots,a_n)\in R$ and $ R(x_1,\ldots,x_n) \implies E_H(x_1,x_2)$, we have $(a_1,a_2)\in E_H$. Recall that $E_H\subseteq (A\times B)\uplus (B\times A)$: it follows by definition of $\textbf{x}_{A,B}(x,y)$ that for all vertices $x$ and $y$ of $H$, that $\{\textbf{x}_{A,B}(x,y)[1],\textbf{x}_{A,B}(x,y)[2]\}=\{x,y\}$. From the fact that $E_H$ is symmetric and the hypothesis that $ R(x_1,\ldots,x_n) \implies E_H(x_1,x_2)$, we deduce that $R(\textbf{x}_{A,B}(x,y))\implies E_H(x,y)$. Hence, $E_H(x,y)\equiv  R(\textbf{x}_{A,B}(x,y))$, and $R$ qfpp-defines $E_H$.
\end{proof}

\begin{lemma}\label{Existence lemma}
Let $n\geq 1$, $H$ be a graph with  $og(H)> n$, and let $ R \neq \emptyset$ be an $n$-ary relation such that  $\pPol(H)\subsetneq \pPol( R)$. Then, for all $(i,i')\in [n]^2$ with $i<i'$, there is $(x_1^{(i,i')},\ldots,x_n^{(i,i')})^{\top}\in R$ with $(x_i^{(i,i')},x_{i'}^{(i,i')})^{\top}\notin E_H$.

\end{lemma}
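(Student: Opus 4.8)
The plan is to argue by contraposition. The conclusion fails precisely when there is a single pair $i<i'$ for which \emph{every} tuple of $R$ restricts to an edge of $H$ on coordinates $i,i'$; I will show that the existence of such a pair forces $\pPol(R)\subseteq\pPol(H)$, collapsing the strict inclusion $\pPol(H)\subsetneq\pPol(R)$ into an equality and thereby yielding a contradiction.

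More precisely, suppose toward a contradiction that for some $i<i'$ in $[n]$ we have $R(x_1,\ldots,x_n)\implies E_H(x_i,x_{i'})$ for all $(x_1,\ldots,x_n)\in R$. First I would reduce this to the exact shape handled by Lemma~\ref{lem:qfpp}, where the distinguished pair occupies coordinates $1,2$. Let $\sigma$ be a permutation of $[n]$ with $\sigma(1)=i$ and $\sigma(2)=i'$, and let $R^\sigma$ be $R$ with its coordinates permuted by $\sigma$, so that $R^\sigma(x_1,\ldots,x_n)\equiv R(x_{\sigma(1)},\ldots,x_{\sigma(n)})$. Since coordinate permutation is itself a qfpp-definition, $R$ and $R^\sigma$ qfpp-define one another, and so by Theorem~\ref{thm:hgalois}(1) we have $\pPol(R^\sigma)=\pPol(R)$; in particular $\pPol(H)\subseteq\pPol(R^\sigma)$ and $R^\sigma\neq\emptyset$. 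By the choice of $\sigma$, $R^\sigma(x_1,\ldots,x_n)\implies E_H(x_1,x_2)$, and $og(H)>n$ by hypothesis, so $R^\sigma$ meets every hypothesis of Lemma~\ref{lem:qfpp}. (Alternatively, one checks directly that the argument of Lemma~\ref{lem:qfpp} is entirely symmetric in the choice of distinguished pair and applies verbatim with coordinates $i,i'$ in place of $1,2$, since the bipartition $A\uplus B$ places $a_i$ and $a_{i'}$ on opposite sides as soon as $(a_i,a_{i'})\in E_H$.)

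Applying Lemma~\ref{lem:qfpp} then yields that $R^\sigma$ qfpp-defines $E_H$, whence by Theorem~\ref{thm:hgalois}(1) we obtain $\pPol(R^\sigma)\subseteq\pPol(E_H)=\pPol(H)$, that is, $\pPol(R)\subseteq\pPol(H)$. Combined with the assumed inclusion $\pPol(H)\subseteq\pPol(R)$, this gives $\pPol(H)=\pPol(R)$, contradicting the strictness $\pPol(H)\subsetneq\pPol(R)$. Hence no such pair $i<i'$ can exist, which is exactly the assertion that for every $i<i'$ there is a tuple of $R$ witnessing a non-edge on coordinates $i,i'$.

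I expect no genuinely deep obstacle here: the substantive content has already been isolated in Lemma~\ref{lem:qfpp}, and the present statement is essentially its contrapositive repackaged uniformly over all coordinate pairs. The only point demanding care is the bookkeeping needed to invoke Lemma~\ref{lem:qfpp} at an arbitrary pair $(i,i')$ rather than $(1,2)$, which the coordinate permutation $\sigma$ handles cleanly once one records that permuting coordinates preserves both $\pPol$ and qfpp-definability.
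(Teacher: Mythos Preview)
Your proposal is correct and follows essentially the same argument as the paper: assume for contradiction that some pair $(i,i')$ always yields an edge, invoke Lemma~\ref{lem:qfpp} to conclude that $R$ qfpp-defines $E_H$, and then use Theorem~\ref{thm:hgalois} to obtain $\pPol(R)\subseteq\pPol(H)$, contradicting strictness. The only difference is cosmetic: the paper treats the case $(i,i')=(1,2)$ and dismisses the remaining pairs with ``the other cases can be proven similarly,'' whereas you spell out the coordinate-permutation reduction explicitly.
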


\begin{proof}
We show only the existence for $i=1$ and $i'=2$; the other cases can be proven similarly. 
For the sake of a contradiction, suppose  that $\forall (x_1,\ldots,x_n)\in (V_H)^n, (x_1,\ldots,x_n)\in R \implies (x_1,x_2)\in E_H$. By Lemma~\ref{lem:qfpp} we have $E_H\in \langle  R \rangle_{\cancel{\exists}}$, and  by Theorem~\ref{pp/qfpp Pol/pPol}, $\pPol( R)\subseteq \pPol(E_H)$. This contradicts our hypothesis that $\pPol(H)\subsetneq \pPol( R)$.
\end{proof}

This leads to the following corollary whose proof provides a simple construction of an  $R$-wall for graph $H$ in the conditions of Lemma~\ref{Existence lemma}.

\begin{corollary}\label{corollary:wall}
Let $n\geq 1$,  $H$ be a graph with $og(H)> n$, and let $ R \neq \emptyset$ be an $n$-ary relation such that  $\pPol(H)\subsetneq \pPol( R)$. Then there is an $R$-wall for $H$.
\end{corollary}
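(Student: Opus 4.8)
The plan is to directly assemble the desired $R$-wall from the tuples furnished by Lemma~\ref{Existence lemma}. First I would invoke that lemma, which applies precisely because $H$ satisfies $og(H) > n$, because $R \neq \emptyset$, and because $\pPol(H) \subsetneq \pPol(R)$: for every pair $(i,i') \in [n]^2$ with $i < i'$ it produces a tuple $(x_1^{(i,i')}, \ldots, x_n^{(i,i')})^{\top} \in R$ whose $i$-th and $i'$-th entries are non-adjacent, i.e.\ $(x_i^{(i,i')}, x_{i'}^{(i,i')})^{\top} \notin E_H$.

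Next I would form the $n \times m$ matrix $M$, with $m = \binom{n}{2}$, whose columns are exactly these tuples, one for each pair $i < i'$. Property~(1) of Definition~\ref{def:wall} is then immediate, since every column of $M$ lies in $R$ by construction.

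For property~(2) I would argue by cases on a pair $(i,i') \in [n]^2$. When $i = i'$, irreflexivity of $E_H$ guarantees $(M_{i,j}, M_{i,j})^{\top} \notin E_H$ for any column $j$, so the requirement is met trivially. When $i < i'$, the column arising from the pair $(i,i')$ satisfies $(M_{i,j}, M_{i',j})^{\top} \notin E_H$ by the choice of that tuple. Finally, when $i > i'$, symmetry of $E_H$ lets me reuse the column associated with the pair $(i',i)$: from $(M_{i',j}, M_{i,j})^{\top} \notin E_H$ we obtain $(M_{i,j}, M_{i',j})^{\top} \notin E_H$ as well. This establishes both defining properties, so $M$ is the sought $R$-wall.

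I do not anticipate a genuine obstacle here, as essentially all the content is already packaged in Lemma~\ref{Existence lemma}; the corollary is a bookkeeping step that collects the guaranteed ``witness'' tuples into a single matrix. The only points requiring a moment of care are that the diagonal case $i = i'$ relies on irreflexivity and the reversed case $i > i'$ on symmetry of $E_H$ — both of which are standing assumptions on the graphs considered throughout the paper.
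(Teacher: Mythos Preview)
Your proposal is correct and follows exactly the paper's approach: assemble the matrix $M$ with $\binom{n}{2}$ columns, one per pair $i<i'$, using the tuples supplied by Lemma~\ref{Existence lemma}. You merely spell out the verification of properties~(1) and~(2) of Definition~\ref{def:wall} (including the $i=i'$ and $i>i'$ cases via irreflexivity and symmetry), which the paper leaves implicit.
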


\begin{proof}
Using the notation of Lemma~\ref{Existence lemma}, we can take the $n\times \frac{n(n-1)}{2}$ matrix $M$, whose $\frac{n(n-1)}{2}$ columns are the $(x_1^{(i,i')},\ldots,x_n^{(i,i')})^{\top}$, for each $(i,i')\in [n]^2$ with $i<i'$.
\end{proof}

We are now ready to prove the main result of this section.

\begin{theorem}
Let $H$ be a graph, and let $k:=og(H)$.  There exists an $n$-ary relation $R\neq\emptyset$ with $\pPol(H)\subsetneq \pPol( R)$ such that {\em \Csp($R$)} is \NP-complete if and only if $k\leq n$. Moreover, if $k>n$, any $n$-ary relation $R\neq\emptyset$ with $\pPol(H)\subsetneq \pPol( R)$ is such that {\em \Csp($R$)} is trivial.
\end{theorem}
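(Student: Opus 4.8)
The plan is to prove the ``moreover'' clause and the ``only if'' direction at once, and then to construct an explicit relation for the ``if'' direction. For the first part, suppose $k=og(H)>n$. Then for \emph{every} nonempty $n$-ary relation $R$ with $\pPol(H)\subsetneq\pPol(R)$, Corollary~\ref{corollary:wall} yields an $R$-wall for $H$, and Lemma~\ref{prop:Triviality} then forces \Csp($R$) to be trivial, hence not \NP-complete. This is exactly the ``moreover'' clause, and its contrapositive gives the ``only if'' direction: the existence of such an $R$ with \Csp($R$) \NP-complete forces $k\leq n$.

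For the ``if'' direction, assume $k\leq n$, so that $H$ is non-bipartite and contains an induced $k$-cycle $C=v_1v_2\cdots v_kv_1$. I would take $R$ to be the padded induced-$k$-cycle relation
$$R=\{(x_1,\ldots,x_n)\in (V_H)^n \mid x_1x_2\cdots x_kx_1 \text{ induces a } k\text{-cycle and } x_{k+1}=\cdots=x_n=x_1\}.$$
By Lemma~\ref{lem:ppdefog} the induced-cycle condition is equivalent to $E_H(x_1,x_2)\wedge\cdots\wedge E_H(x_{k-1},x_k)\wedge E_H(x_k,x_1)$; combined with the constraints $\text{EQ}(x_j,x_1)$ for $k<j\leq n$ this is a qfpp-definition of $R$ from $E_H$, so by Theorem~\ref{pp/qfpp Pol/pPol} we get $\pPol(H)\subseteq\pPol(R)$. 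Moreover $R\neq\emptyset$, as the tuple $(v_1,\ldots,v_k,v_1,\ldots,v_1)$ witnesses.

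It remains to show \Csp($R$) is \NP-complete and that the inclusion is strict. Membership in \NP is clear. For hardness, the key observation is that the existential projection
$$E^{\mathrm{cyc}}_H(u,v)\equiv \exists x_3,\ldots,x_n\; R(u,v,x_3,\ldots,x_n)$$
is a pp-definition of the symmetric, irreflexive relation $E^{\mathrm{cyc}}_H$ of all edges lying on some induced $k$-cycle of $H$ (every such edge can be placed as the $v_1v_2$-edge by rotating the cycle). Since $E^{\mathrm{cyc}}_H$ contains the edges of $C$, the graph $(V_H,E^{\mathrm{cyc}}_H)$ is non-bipartite, so \Csp($E^{\mathrm{cyc}}_H$) is \NP-complete by Theorem~\ref{thm:Complexity H-COLORING}; and since pp-definability gives $\pol(R)\subseteq\pol(E^{\mathrm{cyc}}_H)$, Theorem~\ref{Poly reduction} yields a polynomial-time reduction from \Csp($E^{\mathrm{cyc}}_H$) to \Csp($R$), so \Csp($R$) is \NP-hard. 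For strictness I would show that $R$ does not qfpp-define $E_H$: a qfpp-definition of the binary $E_H(u,v)$ may use only the variables $u,v$, so each $R$-atom would feed at most two distinct values into its first $k$ coordinates, contradicting the requirement that they be $k\geq 3$ distinct vertices; thus every $R$-atom is unsatisfiable, which (together with the fact that equality atoms alone cannot define an irreflexive nonempty relation) is incompatible with defining $E_H$. By Theorem~\ref{pp/qfpp Pol/pPol} this gives $\pPol(R)\not\subseteq\pPol(H)$, making the inclusion strict.

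The main obstacle is the hardness step: $R$ must be invariant under all of $\pPol(H)$ (being qfpp-definable from $E_H$), which prevents it from isolating the specific cycle $C$, and yet it must still encode an \NP-hard problem. The resolution is to pp-project $R$ onto the edge set $E^{\mathrm{cyc}}_H$ and to observe that this subgraph inherits an odd cycle from $C$, so that Theorem~\ref{thm:Complexity H-COLORING} applies; simultaneously, the distinctness of the $k$ cycle vertices is precisely what blocks any qfpp-recovery of $E_H$ and so secures the strictness of the inclusion.
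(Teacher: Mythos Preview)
Your proof is correct and follows essentially the same route as the paper: the $k>n$ direction is identical, and for $k\le n$ the paper also takes $R$ to be the qfpp-defined ``$k$-cycle'' relation and proves \NP-hardness by pp-projecting onto the edge set of the subgraph of $H$ spanned by edges lying on a $k$-cycle. The only cosmetic differences are that the paper pads the extra coordinates with inessential variables rather than equality constraints, and that it shows strictness by exhibiting the constant partial map on an edge $\{x,y\}$ (noting $\{x,y\}^n\cap R=\emptyset$), which is precisely the partial-polymorphism dual of your qfpp-non-definability argument.
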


\begin{proof}
We sketch the most important ideas.

 Suppose first that $k>n$. In this case, $H$ does not have an odd-cycle of length $\leq n$. Again for the sake of a contradiction, suppose that such a relation $R$ exists. 
Note that $ R\neq\emptyset$ since \Csp($R$) is \NP-complete.
Using Corollary~\ref{corollary:wall}, there exists an $R$-wall for $H$. Then, by Lemma~\ref{prop:Triviality}, \Csp($R$) is trivial. This contradicts the fact that \Csp($R$) is \NP-complete, and thus such a relation $R$ does not exist.

 Suppose now that $k\leq n$. Define  $ R(x_1,\ldots,x_n) \equiv E_H(x_1,x_2) \wedge E_H(x_2,x_3) \wedge \ldots \wedge E_H(x_{k-1},x_k) \wedge E_H(x_k,x_1)$.
Since $k=og(H)$, it follows from Lemma~\ref{lem:ppdefog} that $ R = \{(x_1, \ldots, x_n) \mid (x_1, \ldots, x_k) \text{ forms a } k\text{-cycle} \}$ (the variables $x_{k+1},\ldots,x_n$ are inessential).
  
  We then proceed as follows. Since $E_H$ qfpp-defines $R$, we have  that $\pPol(H)\subseteq \pPol( R)$, by Theorem~\ref{pp/qfpp Pol/pPol}.
Also, the inclusion is strict since, for any edge $(x,y)$ of $H$, the function  $f: \{x,y\}  \to  V_H$ that maps both $x$ and $y$ to any $a\in V_H$, belongs to  $\pPol( R)\setminus \pPol(H)$. Indeed, $f\in\pPol( R)$ because $\{x,y\}^n \cap  R =\emptyset$, since it is impossible to form an odd-cycle with only $x$ and $y$.
      
 To prove that \Csp($R$) is \NP-complete, consider $C_k(H)$, the subgraph of $H$, with $V_{C_k(H)}=V_H$, and where each edge of $H$ that does not belong to a cycle of length $k$ has been removed. Note that as $H$ contains a $k$-cycle, $C_k(H)$ also contains a $k$-cycle, and is therefore non-bipartite. Hence, \Csp($E_{C_k(H)}$), which is the same problem as the $C_k(H)$-\textsc{Coloring} problem, is \NP-hard (by Theorem~\ref{thm:Complexity H-COLORING}).
 
 It is easy to see that $R$ pp-defines $E_{C_k(H)}$, as
 
 $$E_{C_k(H)}(x_1,x_2) \equiv \exists x_3,\ldots,x_n,  R(x_1,x_2,x_3\ldots,x_n).$$
 
From Theorem~\ref{pp/qfpp Pol/pPol} and \ref{Poly reduction}, \Csp($E_{C_k(H)}$)=$C_k(H)$-\textsc{Coloring} has a polynomial-time reduction to \Csp($R$), and \Csp($R$) is thus \NP-hard. Clearly, it is also included in \NP. 
\end{proof}

\section{Projective and core graphs}\label{section:Projective graphs}

In this section we study the inclusion structure of sets of total polymorphisms. We are particularly interested in graphs $H$ with small sets of polymorphisms since, intuitively, they correspond to the hardest \textsc{$H$-Coloring} problems. This motivates the following definitions.

An $m$-ary function $f$ is said to be {\em essentially at most unary} if it is of the form $f=f'\circ \pi_i^m$ for some $i\in [m]$ and some unary function $f'$.
Larose~\cite{larose2002families} says that a graph $H$ is {\em projective} if every {\it idempotent polymorphism} (i.e., $f(x, \ldots, x) = x$ for every $x \in V_H$) is a projection. 
Okrasa and Rz\c ażewski~\cite{okrasa2020finegrained} showed that the  polymorphisms of a core graph $H$ are all essentially at most unary if and only if $H$ is projective. %
Since it is sufficient to study cores in the context of \textsc{$H$-Coloring}, determining whether $H$ is projective is particularly interesting.

In this section we use the algebraic approach for proving that a given graph is a projective core, that is,  both projective and a core. As we will see, this enables simpler proofs than those of \cite{larose2002families}, and suggests the possibility of completely characterizing projective cores.

Using the following theorem, our proofs of projectivity can be seen as reductions from cliques.

\begin{theorem}[\cite{bodirsky2015graph,luczak2004note}]\label{thm:CliqueProjective}
For $k\geq 3$, $K_k$ is projective.
\end{theorem}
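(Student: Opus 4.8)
The plan is to reinterpret polymorphisms of $K_k$ as colourings of tensor powers and then exploit a rigidity phenomenon for such colourings. Recall from the preliminaries that $\pol(K_k)$ is exactly the set of $K_k$-colourings of the powers $K_k^m$, $m\ge 1$; since $E_{K_k}$ is the disequality relation on $[k]$, an $m$-ary polymorphism is precisely a function $f\colon [k]^m\to [k]$ that is a proper $k$-colouring of $K_k^m$, i.e.\ $f(\mathbf{x})\neq f(\mathbf{y})$ whenever $\mathbf{x},\mathbf{y}$ differ in every coordinate. I would show that for $k\ge 3$ every such $f$ which is idempotent equals a projection $\pi_i^m$.

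First I would prove a \emph{conservativity} lemma: for every $\mathbf{x}=(x_1,\dots,x_m)$ one has $f(\mathbf{x})\in\{x_1,\dots,x_m\}$. Indeed, if some value $c$ were missing among the $x_i$, then the constant tuple $(c,\dots,c)$ differs from $\mathbf{x}$ in every coordinate, so $f(\mathbf{x})\neq f(c,\dots,c)=c$ by idempotence; ranging over all missing $c$ forces $f(\mathbf{x})$ to be one of the $x_i$ (and if no value is missing then $\{x_i\}=[k]$ and this is trivial). Thus $f$ always returns the value of one of its coordinates.

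Second I would pin down a \emph{single} coordinate, the mechanism being already visible for $m=2$. By conservativity $f(a,b)\in\{a,b\}$ for $a\neq b$, and the adjacency $(a,b)\sim(b,a)$ gives $\{f(a,b),f(b,a)\}=\{a,b\}$, so each transposed pair consistently ``reads the first coordinate'' or ``reads the second coordinate''. If some pair read the first and an overlapping pair read the second---say $f(a,b)=a$ and $f(b,c)=c$ with $a,b,c$ distinct, which requires $k\ge 3$---then $f(b,a)=b$ and $f(c,b)=b$, yet $(b,a)\sim(c,b)$, contradicting properness. Chaining this over the connected structure of index pairs forces a global choice, whence $f=\pi_1^2$ or $f=\pi_2^2$. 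The general claim is that every proper $k$-colouring of $K_k^m$ depends, up to a colour permutation $\sigma$, on a single coordinate, $f(\mathbf{x})=\sigma(x_i)$; idempotence then forces $\sigma=\mathrm{id}$, so $f=\pi_i^m$ is a projection.

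The main obstacle is precisely this general-$m$ rigidity, i.e.\ the Greenwell--Lov\'asz-type fact that colourings of tensor powers of $K_k$ are dictatorial up to a permutation of colours. I would establish it by induction on $m$: conservativity sharply restricts the admissible values of $f$, and one propagates the ``chosen coordinate'' across all of $K_k^m$ either by applying the overlapping-pair argument above after fixing all but two coordinates to constants, or by analysing the minors $g$ obtained from $f$ by identifying two coordinates---each a lower-arity idempotent polymorphism, hence a projection by the induction hypothesis---and verifying that these coordinate choices are mutually compatible. The restriction $k\ge 3$ is essential and is used exactly where a third colour is needed, namely in the overlapping-pair contradiction: for $k=2$ the statement genuinely fails, since self-dual maps such as the ternary minority $x+y+z \bmod 2$ are idempotent polymorphisms of $K_2$ that are not projections.
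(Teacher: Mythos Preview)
The paper does not supply a proof of this theorem: it is quoted from the literature (\cite{bodirsky2015graph,luczak2004note}) and used as a black box, in particular to derive the description $\pol(K_k)=\{\sigma\circ\pi_i^m\mid \sigma\in\mathfrak{S}_k,\ m\ge 1,\ i\in[m]\}$ in the paragraph immediately following the statement. So there is no ``paper's own proof'' to compare your proposal against.

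That said, your outline is essentially the Greenwell--Lov\'asz argument and is sound. The conservativity step is correct, and your $m=2$ analysis is complete and uses $k\ge 3$ in exactly the right place. The only soft spot is the general-$m$ induction, which you describe at the level of a plan rather than a proof. The minor-based route does work, but you should be explicit about the compatibility step: if for each identification of two coordinates the resulting $(m{-}1)$-ary idempotent polymorphism is some projection $\pi_{s(i,j)}^{m-1}$, one must argue that these indices can be lifted to a \emph{single} $i\in[m]$ and then that $f$ actually equals $\pi_i^m$ on all of $[k]^m$, not merely on tuples with a repeated coordinate. For $m\ge 3$ and $k\ge 3$ this follows because every tuple has a neighbour in $K_k^m$ with a repeated coordinate (pick any $\mathbf{y}$ coordinatewise different from $\mathbf{x}$ with $y_1=y_2$), and properness then pins $f(\mathbf{x})$ to $x_i$ via conservativity. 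Spelling that out would close the gap; the overall strategy is correct.
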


Let $\mathfrak{S}_k$ be the set of permutations over $[k]$. It then follows that 
$
\pol(K_k) = \{ \sigma\circ\pi_i^m \mid \sigma\in\mathfrak{S}_k, m\geq 1, i\in [m] \}
$. The inclusion $\supseteq$ is indeed clear. To justify $\subseteq$, note that if $f\in \pol(K_k)$ with $\ar(f)=m$, the function $\sigma:x\mapsto f(x,\dots,x)$ is a unary polymorphism of $K_k$, and is therefore bijective: $\sigma$ is an automorphism of $K_k$ ie. $\sigma\in\mathfrak{S}_k$. Then, since $\sigma^{-1}\circ f$ is a polymorphism of $K_k$ (by composition of polymorphisms of $K_k$) that is idempotent, it is a projection $\pi_i^m$ with $i\in [m]$ by Theorem \ref{thm:CliqueProjective}, and then $f=\sigma\circ \pi_i^m$.

Corollary~\ref{cor:projectivity} below implies that the graphs we will consider in this subsection are projective cores.

\begin{corollary}\label{cor:projectivity}
Let $H$ be a graph on $[k]$ with $k\geq 3$.  Then $E_H$ pp-defines the relation $NEQ_k =\{ (x,x')\in V_H \mid x\neq x'\}$ if and only if $H$ is a projective core.
\end{corollary}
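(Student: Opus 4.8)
The plan is to reduce everything to the already-established description of $\pol(K_k)$. The first observation is that $NEQ_k$ is nothing but the edge relation $E_{K_k}$ of the $k$-clique: two elements of $[k]$ are adjacent in $K_k$ exactly when they are distinct. Hence, by Theorem~\ref{pp/qfpp Pol/pPol}, the condition ``$E_H$ pp-defines $NEQ_k$'' is equivalent to the inclusion $\pol(H)\subseteq \pol(K_k)$, and recalling the explicit description $\pol(K_k)=\{\sigma\circ \pi_i^m \mid \sigma\in\mathfrak{S}_k,\ m\geq 1,\ i\in [m]\}$ given above, the whole corollary becomes: $\pol(H)\subseteq \pol(K_k)$ if and only if $H$ is a projective core. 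I would prove the two implications separately.

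For the forward implication, assume $\pol(H)\subseteq \pol(K_k)$. First I would argue that $H$ is a core. The unary members of $\pol(H)$ are exactly the endomorphisms of $H$, and each of them lies in $\pol(K_k)$; but the unary members of $\pol(K_k)$ are precisely the permutations $\mathfrak{S}_k$, hence bijective. So every endomorphism of $H$ is bijective, which by the core criterion recalled in Section~\ref{sec:prel} (a graph is a core iff every self-coloring is bijective) means $H$ is a core. Next, every $f\in\pol(H)$ has the form $\sigma\circ\pi_i^m$ and is thus essentially at most unary. Since $H$ is a core, the characterization of Okrasa and Rz\c ażewski (the polymorphisms of a core are all essentially at most unary iff the core is projective) yields that $H$ is projective, hence a projective core.

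For the reverse implication, assume $H$ is a projective core and take any $f\in\pol(H)$ of arity $m$. By the Okrasa--Rz\c ażewski characterization, $f$ is essentially at most unary, say $f=f'\circ \pi_i^m$ with $f'$ unary. The diagonal $x\mapsto f(x,\dots,x)$ equals $f'$ and is an endomorphism of $H$ (apply $f$ to constant columns built from a single edge of $H$), so $f'$ is a unary polymorphism of $H$; since $H$ is a core, $f'$ is a bijection of $V_H=[k]$, i.e.\ $f'\in\mathfrak{S}_k$. Therefore $f=f'\circ\pi_i^m\in\pol(K_k)$, proving $\pol(H)\subseteq\pol(K_k)$, and hence that $E_H$ pp-defines $NEQ_k$.

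Finally, a remark on where the work sits. Both implications are short once the right dictionary is in place; the genuine content is imported rather than reproved, namely (i) Theorem~\ref{pp/qfpp Pol/pPol}, translating pp-definability into polymorphism inclusion, (ii) the explicit form of $\pol(K_k)$ derived from the projectivity of $K_k$, and (iii) the Okrasa--Rz\c ażewski equivalence between projectivity and all polymorphisms being essentially at most unary for cores. The only point requiring care is the need to establish that $H$ is a core \emph{before} invoking (iii) in the forward direction, which I handle via the unary layer of the inclusion; the elementary verification that the diagonal of a polymorphism is an endomorphism is the single routine computation involved.
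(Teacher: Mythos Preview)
Your proof is correct and follows essentially the same approach as the paper: identify $NEQ_k$ with $E_{K_k}$, translate pp-definability into $\pol(H)\subseteq\pol(K_k)$ via Theorem~\ref{pp/qfpp Pol/pPol}, and then use the explicit description of $\pol(K_k)$ together with the Okrasa--Rz\c a\.zewski characterization and the core criterion to pass between the polymorphism inclusion and the projective-core property. The paper compresses this into a chain of four equivalent statements, while you unfold it into two implications with the details spelled out, but the content is the same.
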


\begin{proof} First observe  that $\text{NEQ}_k=E_{K_k}$. From Theorem \ref{pp/qfpp Pol/pPol} and using the definitions of cores and of projective graphs, we thus have that the following assertions are equivalent:
\begin{enumerate}
    \item $\text{NEQ}_k\in \cclone{E_H}$;
    \item $\pol(H)\subseteq\pol(K_k)$;%
    \item all polymorphisms of $H$ are essentially at most unary, and all unary polymorphisms of $H$ are bijective;
    \item $H$ is a projective core.\qedhere
\end{enumerate}
\end{proof}

By following the  steps in the proof of Corollary~\ref{cor:projectivity}  we can obtain the  following result.

\begin{corollary}
Let $G$ and $H$ be two graphs on the same set of vertices, with $G$ projective (respectively, a core), and such that $E_H$ that pp-defines $E_G$. Then $H$ is also projective (respectively, a core).
\end{corollary}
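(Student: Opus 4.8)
The plan is to route the entire argument through the single set containment $\pol(H) \subseteq \pol(G)$. Since $E_H$ pp-defines $E_G$, i.e.\ $E_G \in \cclone{E_H}$, part (2) of Theorem~\ref{pp/qfpp Pol/pPol} yields $\pol(E_H) \subseteq \pol(E_G)$, that is, $\pol(H) \subseteq \pol(G)$. This is the only place where the hypothesis is used and, exactly as in the proof of Corollary~\ref{cor:projectivity}, everything else reduces to restricting to the relevant class of polymorphisms and transporting the defining property of $G$ back to $H$. Because $G$ and $H$ are assumed to share the same vertex set, both the notion of idempotency and the comparison of polymorphism sets are well defined.

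For the core statement, I would recall the characterization from Section~\ref{sec:prel}: a graph is a core if and only if every one of its endomorphisms (equivalently, every unary polymorphism of its edge relation) is bijective. Every unary polymorphism of $E_H$ lies in $\pol(H) \subseteq \pol(G)$ and is therefore a unary polymorphism of $E_G$; since $G$ is a core, it is bijective. Hence every unary polymorphism of $E_H$ is bijective and $H$ is a core.

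For the projective statement, I would use the Larose definition directly, without invoking the core hypothesis. Let $f$ be an idempotent polymorphism of $E_H$. By $\pol(H) \subseteq \pol(G)$ it is also a polymorphism of $E_G$, and idempotency ($f(x,\ldots,x)=x$ for all $x \in V_H = V_G$) is preserved verbatim since the vertex sets coincide. As $G$ is projective, $f$ must be a projection. Thus every idempotent polymorphism of $H$ is a projection, and $H$ is projective.

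The main point to be careful about --- rather than a genuine obstacle --- is that both ``being a core'' and ``being projective'' are \emph{downward closed} under inclusion of polymorphism sets: each asserts that all polymorphisms lying in a prescribed subclass (the unary ones, resp.\ the idempotent ones) have a special form, and such assertions are inherited by any subset of the polymorphism clone. Once $\pol(H) \subseteq \pol(G)$ is established this inheritance is immediate, so there is essentially no computational content. The only care needed is to cite the correct characterizations (core via bijective endomorphisms, projective via the Larose definition) and to keep the two properties separate, rather than attempting to combine them through the $\text{NEQ}_k$ criterion of Corollary~\ref{cor:projectivity}, which would force $H$ to be simultaneously a core \emph{and} projective and hence would not match the ``respectively'' phrasing of the statement.
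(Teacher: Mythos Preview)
Your proof is correct and follows essentially the same route the paper intends: use Theorem~\ref{pp/qfpp Pol/pPol} to pass from $E_G \in \cclone{E_H}$ to $\pol(H) \subseteq \pol(G)$, then read off each property (core via bijective unary polymorphisms, projective via the Larose definition) by restricting this inclusion to the relevant subclass of polymorphisms. Your explicit separation of the two cases, and the remark that the Larose definition lets you treat projectivity without first establishing that $H$ is a core, is a useful clarification of what ``following the steps in the proof of Corollary~\ref{cor:projectivity}'' actually entails for the ``respectively'' phrasing.
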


Pp-definitions thus explains the property of being projective (respectively, a core). We hope that this viewpoint helps to discover new classes of projective graphs.
 For example, Corollary~\ref{cor:projectivity} enables a much simpler proof of the following theorem by Larose~\cite{larose2002families}.

\begin{theorem}[\cite{larose2002families},\cite{larose2001strongly}] \label{thm:cycle_core}
Let $k\geq 3$ be an odd integer. The $k$-cycle $C_k$ is a projective core.
\end{theorem}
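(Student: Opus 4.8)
By Corollary~\ref{cor:projectivity}, proving that $C_k$ is a projective core for odd $k \geq 3$ reduces to showing that $E_{C_k}$ pp-defines the relation $\text{NEQ}_k = E_{K_k}$. So the plan is to exhibit an explicit primitive positive definition of the ``all distinct pairs'' relation on $k$ vertices using only the edge relation of $C_k$ (together with existentially quantified auxiliary variables and equality atoms).

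First I would fix the natural labeling $V_{C_k} = \{0, 1, \ldots, k-1\}$ with edges $E_{C_k} = \{(i, i+1 \bmod k)\}$ and note that, since $k$ is odd, the cycle is non-bipartite and $og(C_k) = k$. The key structural fact I would use is that in an odd cycle, two vertices $x$ and $x'$ are \emph{adjacent} precisely when the distance between them around the cycle is $1$, but what I actually want to pin down is \emph{non-equality}. Because $C_k$ is vertex-transitive with a cyclic automorphism group of order $k$ (plus reflections), the orbit of any pair under $\pol(C_k)$ is controlled by its cyclic distance, and the goal of the pp-definition is to force two variables to sit at a controlled distance. Concretely, I would build a pp-formula that threads a path of existentially quantified variables $x = y_0, y_1, \ldots, y_\ell = x'$ with consecutive edge atoms $E_{C_k}(y_{j}, y_{j+1})$, using that a walk of odd length $\ell$ between $x$ and $x'$ exists in $C_k$ iff $x \neq x'$ is compatible with an odd-length closed-up configuration — the oddness of $k$ is exactly what breaks the bipartite obstruction and lets odd walks realize all non-equal pairs while equal pairs are excluded by a parity/length argument.

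The main obstacle, and the step I expect to require the most care, is verifying that the chosen pp-formula defines \emph{exactly} $\text{NEQ}_k$ and not some larger or smaller relation: I must check both inclusions. For the easy direction, any $(x, x') \in \text{NEQ}_k$ must admit a satisfying assignment of the quantified variables — here I would use that between any two distinct vertices of the odd cycle there is a walk of the prescribed length (walks can backtrack, and adding a full traversal of the odd cycle changes the length parity, so length parity can be adjusted freely once $k$ is odd). For the reverse inclusion, I must show the formula is never satisfiable when $x = x'$, which amounts to showing there is no closed walk in $C_k$ of the prescribed odd length from a vertex to itself — again a parity argument, since closed walks in a graph whose only odd cycle has length $k$ have length $\equiv 0 \pmod 2$ unless they wind around the $k$-cycle, and the construction must choose the auxiliary length to forbid exactly the equal case. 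Threading these two parity constraints simultaneously is the delicate part.

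An alternative and possibly cleaner route, which I would pursue if the direct walk construction becomes unwieldy, is to invoke the preceding unnamed Corollary (that pp-definability transfers projectivity and coreness): since $C_3 = K_3$ is a projective core by Theorem~\ref{thm:CliqueProjective}, it would suffice to pp-define $E_{C_3}$ (equivalently $\text{NEQ}_3$) inside $E_{C_k}$, reducing the odd cycle to the triangle. This again reduces to finding, via existentially quantified path variables, a pp-definition forcing three mutually ``triangle-adjacent'' configurations, and the odd-girth value $og(C_k) = k$ together with Lemma~\ref{lem:ppdefog} gives precise control over which tuples of edge atoms can be simultaneously satisfied. Either way, the crux is the same parity bookkeeping, and I would present whichever of the two yields the shortest explicit formula.
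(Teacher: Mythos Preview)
Your primary approach is correct and is exactly what the paper does: the paper takes the walk length to be $\ell=k-2$ (odd, and strictly less than $k$), so that no closed walk of that length exists in $C_k$ while every pair of distinct vertices is joined by such a walk (pad the shorter odd path by back-and-forth steps). Your parity bookkeeping is the whole argument; there is nothing more delicate hiding behind it.

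Your alternative route, however, does not work as stated. A pp-definition over $E_{C_k}$ produces a relation on the domain $V_{C_k}$ of size $k$, so you cannot pp-define $E_{C_3}=\text{NEQ}_3$, which lives on a $3$-element domain; and the unnamed corollary you want to invoke explicitly requires $G$ and $H$ to share the same vertex set. Corollary~\ref{cor:projectivity} really does force you to reach $\text{NEQ}_k$, not $\text{NEQ}_3$, so stick with your first plan.
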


\begin{proof}
We claim that
 \begin{eqnarray*}
 \text{NEQ}_k(x,x')&\equiv \exists x_2,\ldots,x_{k-2} \colon E_{C_k}(x,x_2) \wedge E_{C_k}(x_2,x_3)\\ &\wedge \ldots \wedge E_{C_k}(x_{k-3},x_{k-2}) \wedge E_{C_k}(x_{k-2},x').
 \end{eqnarray*}
To see this, note that for any two vertices $x$ and $x'$ in $C_k$,  $x\neq x'$ if and only if there exists an odd-path from $x$ to $x'$ of size $<k$  (since $k$ is odd). In other words, $x\neq x'$ if and only if there exists a $(k-2)$-path from $x$ to $x'$ (by going through the same edge as many times as necessary, $k-2$ being odd).
By Corollary~\ref{cor:projectivity}, it then follows that $C_k$ is a projective core.
\end{proof}

There are also other examples of cores that are projective, other than $k$-cliques for $k\geq 3$ and $k$-cycles. For instance,
Okrasa and Rz\c ażewski~\cite{okrasa2020finegrained} proved that the so-called {\em Grötzsch graph} (see Figure~\ref{The Grötzsch graph}) is a projective core.
\begin{center}
\begin{figure}
\centering

\begin{tikzpicture}[scale=.5]
\tikzstyle{every node}=[draw,shape=circle];

\begin{scope}
    \node (v1) at (0+90:2) {};
    
    \node (v2) at (-72+90:2) {}
        edge (v1);
        
    \node (v3) at (-2*72+90:2) {}
        edge (v2);

    \node (v4) at (-3*72+90:2) {}
        edge (v3);

    \node (v5) at (-4*72+90:2) {}
        edge (v1)
        edge (v4);
    
    \node (v1') at (0+90:1) {}
        edge (v2)
        edge (v5);
    
    \node (v2') at (-72+90:1) {}
        edge (v1)
        edge (v3);
        
    \node (v3') at (-2*72+90:1) {}
        edge (v2)
        edge (v4);

    \node (v4') at (-3*72+90:1) {}
        edge (v3)
        edge (v5);

    \node (v5') at (-4*72+90:1) {}
        edge (v1)
        edge (v4);
    
    \node (u) at (0:0) {}
        edge (v1')
        edge (v2')
        edge (v3')
        edge (v4')
        edge (v5');
\end{scope}

\begin{scope}[xshift=6cm]

    \node (v1) at (0+90:2) {};
    
    \node (v2) at (-72+90:2) {}
        edge (v1);
        
    \node (v3) at (-2*72+90:2) {}
        edge (v2);

    \node (v4) at (-3*72+90:2) {}
        edge (v3);

    \node (v5) at (-4*72+90:2) {}
        edge (v1)
        edge (v4);
    
    \node (v1') at (0+90:1) {}
        edge (v1);
    
    \node (v2') at (-72+90:1) {}
        edge (v2);
        
    \node (v3') at (-2*72+90:1) {}
        edge (v3)
        edge (v1');

    \node (v4') at (-3*72+90:1) {}
        edge (v4)
        edge (v1')
        edge (v2');

    \node (v5') at (-4*72+90:1) {}
        edge (v5)
        edge (v2')
        edge (v3');

\end{scope}

\end{tikzpicture}
\caption{The Grötzsch graph (left) and the Petersen graph (right)}
\label{The Grötzsch graph}
\end{figure}
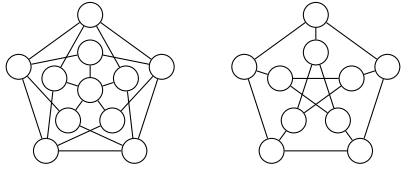
\end{center}
\vspace{-.5cm}
\begin{theorem}%
The Grötzsch and Petersen graph is a projective core\footnote{We acknowledge Mario Valencia-Pabon for pointing out that the proof for the Gr\"otzsch graph also applies to the Petersen graph.}
\end{theorem}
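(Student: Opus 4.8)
The plan is to apply Corollary~\ref{cor:projectivity}: the Grötzsch graph has $11$ vertices and the Petersen graph has $10$, so in each case it suffices to exhibit a pp-definition of $\text{NEQ}_k$ from $E_H$ (with $k=11$, resp.\ $k=10$), which will simultaneously certify that $H$ is \emph{both} projective and a core. The two structural facts the argument will exploit are that both graphs are triangle-free with $og(H)=5$, and that both have diameter $2$. This also explains Valencia-Pabon's observation that a single argument covers both graphs.

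I would propose the uniform candidate
\begin{equation*}
\text{NEQ}_k(x,x') \equiv \exists y,z \colon E_H(x,y)\wedge E_H(y,z)\wedge E_H(z,x'),
\end{equation*}
declaring $x$ and $x'$ distinct exactly when they are joined by a walk of length $3$. The inclusion from right to left is the easy half: if such $y,z$ exist and we had $x=x'$, then $x,y,z$ would be pairwise distinct (consecutive vertices of a walk are adjacent, hence distinct by irreflexivity) and would form a triangle, contradicting $og(H)=5$. Hence every pair in the solution set of the formula consists of distinct vertices.

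For the converse I must show that every pair of distinct vertices is joined by a walk of length $3$, equivalently that for distinct $x,x'$ there is an edge between the neighbourhoods $N(x)$ and $N(x')$. When $x\sim x'$ this is immediate via the walk $x\to x'\to x\to x'$. The remaining, and main, case is a pair $x,x'$ at distance $2$: here a common neighbour is of no use, since triangle-freeness forbids it from being an endpoint of an edge lying inside $N(x)\cup N(x')$, so one genuinely needs a ``diagonal'' edge between $N(x)$ and $N(x')$ avoiding the common neighbour. Establishing the existence of this edge for every distance-$2$ pair is the only real obstacle.

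I expect to discharge this case by symmetry rather than exhaustive checking. For the Petersen graph I would use the Kneser representation (vertices the $2$-subsets of $[5]$, edges joining disjoint pairs) together with vertex-transitivity: a distance-$2$ pair shares a single common neighbour, and on the representative $\{1,2\}$, $\{1,3\}$ one exhibits the diagonal edge $\{3,4\}\sim\{2,5\}$; transitivity then covers all distance-$2$ pairs. For the Grötzsch graph, realised as the Mycielskian of $C_5$ with cycle vertices $v_i$, shadow vertices $u_i$ and apex $w$, I would use its dihedral ($D_5$) automorphism group to reduce the distance-$2$ pairs to a handful of orbit representatives (pairs of types $v_iv_j$, $u_iu_j$, $v_iu_j$, $v_iw$, $u_iw$) and verify a length-$3$ walk for each directly from the Mycielski adjacencies ($u_i\sim v_{i\pm1}$, $w\sim u_i$). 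Once the converse holds in both cases, the displayed equivalence is a valid pp-definition of $\text{NEQ}_k$, and Corollary~\ref{cor:projectivity} yields that each graph is a projective core.
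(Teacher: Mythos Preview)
Your proposal is correct and follows essentially the same approach as the paper: the paper also invokes Corollary~\ref{cor:projectivity} via the identical pp-definition $\text{NEQ}_k(x,x')\equiv \exists x_2,x_3\colon E_H(x,x_2)\wedge E_H(x_2,x_3)\wedge E_H(x_3,x')$ and then remarks that the argument carries over verbatim to the Petersen graph. The only difference is expository: the paper simply asserts that this formula defines $\text{NEQ}$ without justification, whereas you spell out the verification (triangle-freeness for one inclusion, and the existence of a length-$3$ walk between any two distinct vertices for the other, handled by symmetry).
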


\begin{proof}
We provide an alternative proof using our algebraic framework.
Let $E_G$ be the set of edges of the Grötzsch graph. Note that the Grötzsch graph has 11 vertices.
We can see that $E_G$ pp-defines $\text{NEQ}_{11}$: $$\text{NEQ}_{11}(x,x')\equiv \exists x_2,x_3 \colon  E_G(x,x_2) \wedge E_G(x_2,x_3) \wedge E_G(x_3,x').$$

From Corollary~\ref{cor:projectivity} it follows that the Grötzsch graph is a projective core. The proof for the Petersen graph is analogous. 
\end{proof}

Complements $\overline{C_k}$ of odd-cycles of length $k\geq 5$ are also projective cores. Since $\overline{C_5}=C_5$ has already been studied, we take a look at $\overline{C_{2p+1}}$, for $p\geq 3$. The following result is an immediate corollary of Larose~\cite{larose2002families}, but we give an algebraic proof using Corollary \ref{cor:projectivity}.

\begin{theorem}
\label{thm:ComplementCycleProjectiveCore} $\overline{C_{2p+1}}$ is a projective core for $p\geq 3$.
\end{theorem}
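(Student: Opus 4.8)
The plan is to invoke Corollary~\ref{cor:projectivity}: since $\overline{C_{2p+1}}$ has $2p+1\ge 7$ vertices, it suffices to show that $E_{\overline{C_{2p+1}}}$ pp-defines $\text{NEQ}_{2p+1}$. It is convenient to work in the circulant model $V=\mathbb{Z}_{2p+1}$, where $(x,x')\in E_{\overline{C_{2p+1}}}$ exactly when $x'-x\in\{\pm 2,\pm 3,\dots,\pm p\}$, while the cyclically adjacent pairs $x'-x=\pm 1$ are precisely the non-edges. Equivalently, by the corollary immediately following Corollary~\ref{cor:projectivity} together with Theorem~\ref{thm:cycle_core} (which already establishes that $C_{2p+1}$ is a projective core), it is enough to pp-define $E_{C_{2p+1}}$, i.e.\ the relation $\{(x,x') : x'-x=\pm 1\}$, from $E_{\overline{C_{2p+1}}}$; I will aim for whichever of the two target relations admits the cleaner gadget.

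The first thing to observe is that, in contrast with the proofs for cliques, odd cycles, and the Grötzsch and Petersen graphs, a short odd path will \emph{not} work here. In all those cases the diagonal $x=x'$ was ruled out by an odd-girth (parity) obstruction, but $og(\overline{C_{2p+1}})=3$: the triples $\{x,x+2,x+4\}$ are triangles for every $p\ge 3$, and more generally $\overline{C_{2p+1}}$ admits closed walks of every length $\ge 2$. Consequently no path gadget can simultaneously exclude the diagonal and realize the cyclically adjacent pairs $x'-x=\pm 1$: these are non-edges and hence need a length-two detour, but any even-length walk (and any odd-length walk of length $\ge 3$) also closes up on the diagonal. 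Thus the construction must be genuinely global rather than a single path.

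The heart of the argument, and the step I expect to be the main obstacle, is therefore to build a pp-gadget with two terminals $s,t$ that (i) maps homomorphically into $\overline{C_{2p+1}}$ realizing \emph{every} distinct pair $(a,a')$ --- including the ``thin'' cyclically adjacent non-edge pairs --- yet (ii) admits no homomorphism once $s$ and $t$ are identified. For (ii) the natural lever is that the clique number of $\overline{C_{2p+1}}$ equals $p$ (it is the independence number of $C_{2p+1}$), so that $K_{p+1}$ does not map into $\overline{C_{2p+1}}$; the gadget should be arranged so that collapsing $s=t$ forces such a forbidden clique, or more generally a subgraph whose chromatic number exceeds $\chi(\overline{C_{2p+1}})=p+1$. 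The difficulty is the tension between (i) and (ii): any common-neighbourhood gadget that forces a large clique on $N(s)\cap N(t)$ fails, because the common neighbourhood of a cyclically adjacent pair contains no larger clique than that of the diagonal, so the obstruction cannot be localized around $s$ and $t$. A promising route is to spread the constraints over a rigid auxiliary structure, exploiting the fact that $\overline{C_{2p+1}}$ is a core to pin the auxiliary vertices down (up to automorphism) and thereby encode cyclic position, so that the non-equality of $s$ and $t$ follows from a winding/position argument rather than from any local clique; care is then needed to cover all displacement classes at once rather than a single orbit. Once such a pp-definition of $\text{NEQ}_{2p+1}$ (equivalently of $E_{C_{2p+1}}$) is exhibited, Corollary~\ref{cor:projectivity} (respectively the reduction corollary together with Theorem~\ref{thm:cycle_core}) immediately yields that $\overline{C_{2p+1}}$ is a projective core.
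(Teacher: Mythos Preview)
Your framework is right --- invoke Corollary~\ref{cor:projectivity} by pp-defining $\text{NEQ}_{2p+1}$, with $\omega(\overline{C_{2p+1}})=p$ as the obstruction --- and your diagnosis that a bare odd path cannot work is correct. But the proposal is not a proof: you never exhibit a gadget, and the ``rigid auxiliary structure encoding cyclic position'' remains a wish. More importantly, you abandon the clique-based approach too early. You are right that asking for a $p$-clique inside $N(s)\cap N(t)$ fails (for cyclically adjacent $s,t$ that intersection is $2p-3$ consecutive vertices of $C_{2p+1}$, whose independence number is only $p-1$). The paper's gadget, however, is a small twist on exactly this idea that does work.

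The trick is to stretch the two terminals into a $3$-path $x_1x_2x_3x_4$ and then demand only $p-2$ universal witnesses:
\[
\text{NEQ}_{2p+1}(x_1,x_4)\ \equiv\ \exists x_2,x_3,w_1,\dots,w_{p-2}\colon\ \bigwedge_{i=1}^{3}E(x_i,x_{i+1})\ \wedge\ \bigwedge_{i\in[4],\,j}E(x_i,w_j)\ \wedge\ \bigwedge_{j<j'}E(w_j,w_{j'}),
\]
with $E=E_{\overline{C_{2p+1}}}$. If $x_1=x_4$ then $x_1x_2x_3$ is a triangle, and $\{x_1,x_2,x_3,w_1,\dots,w_{p-2}\}$ is a $(p+1)$-clique, impossible since $\omega(\overline{C_{2p+1}})=p$. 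This is precisely the ``collapsing forces a forbidden clique'' mechanism you anticipated; the $3$-path manufactures two extra clique vertices out of the identification, so only $p-2$ witnesses are needed instead of $p$, sidestepping your common-neighbourhood obstruction. For the converse one checks directly in the circulant model that every pair $x_1\neq x_4$ (including $x_4-x_1=\pm1$) can be completed. No rigidity or position-encoding is required, and you should not assume in advance that $\overline{C_{2p+1}}$ is a core --- that is part of what Corollary~\ref{cor:projectivity} gives you.
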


\begin{proof}
It is not difficult to see that
$\text{NEQ}_{2p+1}(x_1,x_4)\equiv \exists x_2,x_3,w_1,\ldots,w_{p-2}: R_1\wedge R_2 \wedge R_3,$ where
\begin{enumerate}
    \item $R_1 = \bigwedge\limits_{i\in [3]} E_{\overline{C_{2p+1}}}(x_i,x_{i+1})$, 
    \item $R_2=\bigwedge\limits_{i\in [4], j\in [p-2]} E_{\overline{C_{2p+1}}}(x_i,w_j)$, and 
    \item $R_3=\bigwedge\limits_{(j,j')\in [p-2]^2, j< j'} E_{\overline{C_{2p+1}}}(w_j,w_{j'})$. 
\end{enumerate}

The result then follows from Corollary~\ref{cor:projectivity}.
\end{proof}

Moreover, we can prove by Corollary \ref{cor:projectivity} that adding universal vertices to $C_5$ results in a projective core.

\begin{theorem}\label{thm:C5+pProjective}

Let $p\ge 0$. The graph $C_5+p$, obtained from $C_5$ by adding $p$ universal vertices\footnote{Formally $C_5+p = (V_{C_5}\uplus V_{K_p},E_{C_5}\uplus E_{K_p}\uplus (V_{C_5}\times V_{K_p}) \uplus (V_{K_p}\times V_{C_5}))$.} is a projective core.

\end{theorem}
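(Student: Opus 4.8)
The plan is to apply Corollary~\ref{cor:projectivity}: it suffices to show that $E_{C_5+p}$ pp-defines $\text{NEQ}_{k}$ where $k=5+p=|V_{C_5+p}|$, and then projectivity together with being a core follows immediately. So the entire task reduces to writing down a pp-formula $\phi(x,x')$ over the edge relation $E_{C_5+p}$ that holds exactly when $x\neq x'$. First I would fix notation, writing $V_{C_5}=\{c_1,\dots,c_5\}$ for the cycle vertices and $U=\{u_1,\dots,u_p\}$ for the universal vertices, and recall the key structural fact that each $u_i$ is adjacent to \emph{every} other vertex (including the other universal vertices), while the $c_j$ induce a $5$-cycle.

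The main idea is to exploit the universal vertices as a highly connected ``core'' that lets any two distinct vertices be joined by an appropriate walk, while forbidding a vertex from being joined to itself. Concretely, I would look for a formula of the shape
\begin{equation*}
\text{NEQ}_{k}(x,x') \equiv \exists w_1,\dots,w_{p},\, z_1,\dots,z_r \colon \Psi,
\end{equation*}
where $\Psi$ is a conjunction of $E_{C_5+p}$-atoms. A natural building block: since $C_5$ itself pp-defines $\text{NEQ}_5$ via the $3$-path gadget already used in Theorem~\ref{thm:cycle_core} (the case $k=5$ there gives $\text{NEQ}_5(x,x')\equiv\exists x_2,x_3\colon E_{C_5}(x,x_2)\wedge E_{C_5}(x_2,x_3)\wedge E_{C_5}(x_3,x')$), the plan is to combine such a $C_5$-path gadget with clique-style atoms forcing the quantified witnesses $w_1,\dots,w_p$ to land on the universal vertices. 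The mechanism mirrors the proof of Theorem~\ref{thm:ComplementCycleProjectiveCore}: include atoms $E_{C_5+p}(w_i,w_j)$ for $i\neq j$ and $E_{C_5+p}(x,w_i)$, $E_{C_5+p}(x',w_i)$ to pin the $w_i$ down, and verify that the only consistent assignments separate $x$ from $x'$.

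The hard part will be checking soundness of the pp-definition in \emph{both} directions over the whole vertex set, not just the cycle part: I must confirm (i) that whenever $x\neq x'$ there is a genuine choice of witnesses satisfying $\Psi$ --- including the awkward cases where one or both of $x,x'$ are themselves universal vertices, or are non-adjacent cycle vertices --- and (ii) that when $x=x'$ no assignment works, which requires ruling out any self-loop-like closure through the universal vertices. The cleanest route is probably a case analysis on which ``parts'' ($C_5$ versus $U$) the endpoints $x,x'$ lie in, verifying the gadget in each case; the universal-vertex witnesses make the positive direction easy, so the real content is the $x=x'$ direction, where I would argue that any satisfying assignment would force an edge or a forbidden adjacency that cannot exist in $C_5+p$ (in particular exploiting irreflexivity and the fact that the $C_5$-gadget genuinely needs distinct endpoints). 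Once both directions are settled, an appeal to Corollary~\ref{cor:projectivity} closes the proof.
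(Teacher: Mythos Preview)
Your plan is essentially the paper's proof: apply Corollary~\ref{cor:projectivity} via a pp-definition that combines the $3$-path gadget $E(x_1,x_2)\wedge E(x_2,x_3)\wedge E(x_3,x_4)$ with a block of $p$ auxiliary witnesses $w_1,\dots,w_p$ that are pairwise adjacent and adjacent to the path. The paper's verification is exactly the case split you describe (adjacent versus non-adjacent endpoints for the positive direction), and for the $x=x'$ direction it observes that a satisfying assignment would produce a $(p+3)$-clique in $C_5+p$, which is impossible since the clique number of $C_5+p$ is $p+2$.

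One concrete point to fix: as written, you only require $E(x,w_i)$ and $E(x',w_i)$, not $E(x_2,w_i)$ and $E(x_3,w_i)$ for the intermediate path vertices. That weaker formula is \emph{not} sound. For instance in $C_5+1$ with universal vertex $u$, take $x=x'=u$, intermediate vertices $c_1,c_2$ on the cycle, and $w_1=c_3$: every atom is satisfied, so the formula would wrongly accept $x=x'$. You must attach \emph{all four} path vertices to every $w_j$ (exactly as in the $R_2$ block of Theorem~\ref{thm:ComplementCycleProjectiveCore}, which you cite); then the $x_1=x_4$ case forces $\{x_1,x_2,x_3,w_1,\dots,w_p\}$ to be a $K_{p+3}$, giving the contradiction cleanly and avoiding any ad hoc case analysis on where the endpoints live.
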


\begin{proof}

We can see that $E_{C_5+p}$ pp-defines NEQ$_{p+5}$ though the pp-definition:

$\text{NEQ}_{p+5}(x_1,x_4)\equiv \exists x_2,x_3,w_1,\ldots,w_p: R_1\wedge R_2 \wedge R_3,$ where

\begin{enumerate}
    \item $R_1 = \bigwedge\limits_{i\in [3]} E_{C_5+p}(x_i,x_{i+1})$, 
    \item $R_2=\bigwedge\limits_{i\in [4], j\in [p]} E_{C_5+p}(x_i,w_j)$, and 
    \item $R_3=\bigwedge\limits_{(j,j')\in [p]^2, j< j'} E_{C_5+p}(w_j,w_{j'})$. 
\end{enumerate}

which proves that $C_5+p$ is a projective core by Corollary \ref{cor:projectivity}.

To see that the pp-definition is correct, notice that if $x_1=x_4$, the pp-definition can not be satisfied, since it would imply the existence of a $K_{p+3}$ (induced by $x_1,x_2,x_3,w_1,\dots,w_p$) in $C_5+p$, which is absurd. Note also that if $x_1\neq x_4$ and $x_1$ and $x_4$ are adjacent, then there exists $y_1,\dots,y_p$ in $C_5+1$ such that $\{x_1,x_4,y_1,\dots,y_p\}$ induces a $K_{p+2}$: taking $x_2:=x_4$, $x_3:=x_1$ and $w_j:=y_j$ (for all $j\in [p]$) satisfies the pp-definition. Also, if $x_1\neq x_4$ and $x_1$ and $x_4$ are not adjacent, we can assume by symetry that $x_1=1$ and $x_4=4$ (where the vertices of the $C_5$ induced in $C_5+p$ are named $0,1,2,3,4$ in order). Then, taking $x_2:=2$, $x_3:=3$ and $w_1,\dots,w_p$ the $p$ universal vertex of $C_5+p$ satisfies the pp-definition. The pp-definition is thus correct.
\end{proof}

\section{Verifying the conjecture on small graphs}

Okrasa and Rz\c ażewski~\cite{okrasa2020finegrained} observed that a graph $H$ that can be expressed as a disjoint union of two non-empty graphs $H_1$ and $H_2$ is not projective, since it admits the binary polymorphism $f$ defined by $f|_{V_{H_1}\times V_H}=(\pi_1^2)|_{V_{H_1}\times V_H}$ and $f|_{V_{H_2}\times V_H}=(\pi_2^2)|_{V_{H_2}\times V_H}$. 
The same holds for the  {\em cross-product} of non-trivial graphs $H=H_1\times H_2$ (in which case the graph is said to be decomposable), with the binary polymorphism $f((x_1,x_2),(y_1,y_2))\mapsto (x_1,y_2)$.
Okrasa and Rz\c ażewski also noticed the existence of disconnected cores, such as $G+ K_3$ (indecomposable cores are much more difficult to study), where $G$ is the Grötzsch graph from Figure \ref{The Grötzsch graph}. %
These observations resulted in the following conjecture.

\Okrasa*

The goal of this section is to apply the tools constructed in Section \ref{section:Projective graphs} to the verification of the conjecture below by Okrasa and  Rz\c ażewski~\cite{okrasa2020finegrained}. First, in Section \ref{sec:Conjecture6Vertices}, we completely classify the cores on at most $6$ vertices and verify that the Okrasa and Rz\c ażewski Conjecture is true on each of these graphs. Then, in Section \ref{sec:Conjecture7Vertices}, after giving an exhaustive list of the cores on 7 vertices, we prove the projectivity of all these graphs.

\begin{center}
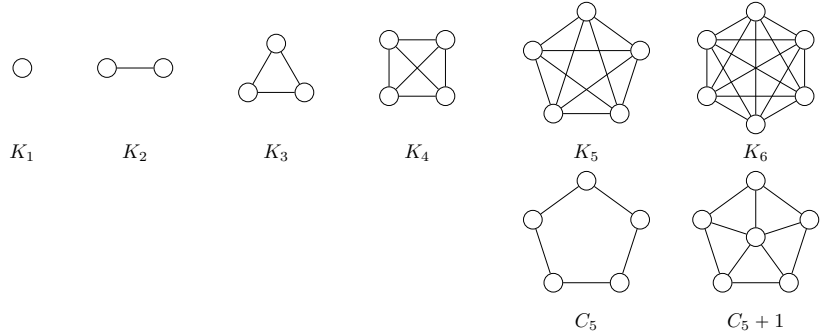
\begin{figure}
\centering
\scalebox{0.75}{
\begin{tikzpicture}

\tikzstyle{vertex}=[draw,shape=circle];

\begin{scope}

    \node[vertex] (v1) at (0,0) {};
    \node () at (0,-1.5) {$K_1$};

\end{scope}

\begin{scope}[xshift=2cm]

    \node[vertex] (v1) at (-0.5,0) {};
    \node[vertex] (v2) at (+0.5,0) {}
        edge (v1);
    \node () at (0,-1.5) {$K_2$};

\end{scope}

\begin{scope}[xshift=4.5cm]

    \node[vertex] (v1) at (-0.5,-1.732/4) {};
    \node[vertex] (v2) at (0.5,-1.732/4) {}
        edge (v1);
    \node[vertex] (v3) at (0,1.732/4) {}
        edge (v1)
        edge (v2);
    \node () at (0,-1.5) {$K_3$};

\end{scope}

\begin{scope}[xshift=7cm]

    \node[vertex] (v1) at (-0.5,-0.5) {};
    \node[vertex] (v2) at (0.5,-0.5) {}
        edge (v1);
    \node[vertex] (v3) at (-0.5,0.5) {}
        edge (v1)
        edge (v2);
    \node[vertex] (v4) at (0.5,0.5) {}
        edge (v1)
        edge (v2)
        edge (v3);
    \node () at (0,-1.5) {$K_4$};

\end{scope}

\begin{scope}[xshift=10cm]

    \node[vertex] (v1) at (90:1) {};
    \node[vertex] (v2) at (-72+90:1) {}
        edge (v1);
    \node[vertex] (v3) at (-2*72+90:1) {}
        edge (v1)
        edge (v2);
    \node[vertex] (v4) at (-3*72+90:1) {}
        edge (v1)
        edge (v2)
        edge (v3);
    \node[vertex] (v5) at (-4*72+90:1) {}
        edge (v1)
        edge (v2)
        edge (v3)
        edge (v4);
    \node () at (0,-1.5) {$K_5$};

\end{scope}

\begin{scope}[xshift=13cm]

    \node[vertex] (v1) at (90:1) {};
    \node[vertex] (v2) at (-60+90:1) {}
        edge (v1);
    \node[vertex] (v3) at (-2*60+90:1) {}
        edge (v1)
        edge (v2);
    \node[vertex] (v4) at (-3*60+90:1) {}
        edge (v1)
        edge (v2)
        edge (v3);
    \node[vertex] (v5) at (-4*60+90:1) {}
        edge (v1)
        edge (v2)
        edge (v3)
        edge (v4);
    \node[vertex] (v6) at (-5*60+90:1) {}
        edge (v1)
        edge (v2)
        edge (v3)
        edge (v4)
        edge (v5);
    \node () at (0,-1.5) {$K_6$};

\end{scope}

\begin{scope}[xshift=10cm,yshift=-3cm]

    \node[vertex] (v1) at (90:1) {};
    \node[vertex] (v2) at (-72+90:1) {}
        edge (v1);
    \node[vertex] (v3) at (-2*72+90:1) {}
        edge (v2);
    \node[vertex] (v4) at (-3*72+90:1) {}
        edge (v3);
    \node[vertex] (v5) at (-4*72+90:1) {}
        edge (v1)
        edge (v4);
    \node () at (0,-1.5) {$C_5$};

\end{scope}

\begin{scope}[xshift=13cm,yshift=-3cm]

    \node[vertex] (v1) at (90:1) {};
    \node[vertex] (v2) at (-72+90:1) {}
        edge (v1);
    \node[vertex] (v3) at (-2*72+90:1) {}
        edge (v2);
    \node[vertex] (v4) at (-3*72+90:1) {}
        edge (v3);
    \node[vertex] (v5) at (-4*72+90:1) {}
        edge (v1)
        edge (v4);
    \node[vertex] (u) at (0,0) {}
        edge (v1)
        edge (v2)
        edge (v3)
        edge (v4)
        edge (v5);
    
    \node () at (0,-1.5) {$C_5+1$};

\end{scope}

\end{tikzpicture}
}
\caption{Every core graph with at most 6 vertices}
\label{fig:CoresAtMost6Vertices}
\end{figure}
\end{center}

This section aims at verifying the Okrasa and Rz\c ażewski Conjecture on graphs with at most $7$ vertices, and culminates with the proof of the following theorem:

\begin{restatable}{theorem}{ConjectureTrueSevenVertices}\label{thm:ConjectureTrueSevenVertices}

The Okrasa and Rz\c ażewski Conjecture is true on graphs with at most $7$ vertices

\end{restatable}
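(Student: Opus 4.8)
The plan is to verify the Okrasa and Rz\c a\.zewski Conjecture on all graphs with at most $7$ vertices by reducing the problem to a finite case analysis and then dispatching each indecomposable connected core using the algebraic tools of Section~\ref{section:Projective graphs}, principally Corollary~\ref{cor:projectivity}. Recall that the conjecture asserts, for a connected non-trivial core $H$ on at least $3$ vertices, that $H$ is projective if and only if it is indecomposable. The ``only if'' direction is already settled in full generality by the observations preceding the conjecture statement: if $H$ is decomposable (a disjoint union or a nontrivial cross product), then it admits an essential binary polymorphism and hence is not projective. So the entire burden is to prove the ``if'' direction \emph{on small graphs}: every connected indecomposable core on at most $7$ vertices is projective. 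Since the trivial cores $K_1$ and $K_2$ are excluded by the hypothesis (at least $3$ vertices, non-trivial), and every bipartite graph has core $K_1$ or $K_2$, only non-bipartite cores need to be considered.

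First I would establish the enumeration. For $n\le 6$, Figure~\ref{fig:CoresAtMost6Vertices} already claims to list \emph{every} core graph; I would justify this classification (deferred to Section~\ref{sec:Conjecture6Vertices}) by arguing that a non-bipartite core on few vertices must contain a short odd-cycle, and then checking by hand which induced supergraphs remain cores. The relevant non-trivial connected cores up to $6$ vertices are exactly $K_3,K_4,K_5,K_6,C_5$, and $C_5+1$. Each of these is already known to be a projective core by results we may cite: $K_k$ for $k\ge 3$ by Theorem~\ref{thm:CliqueProjective} (via $\text{NEQ}_k=E_{K_k}$ and Corollary~\ref{cor:projectivity}), $C_5$ by Theorem~\ref{thm:cycle_core}, and $C_5+1$ as the $p=1$ instance of Theorem~\ref{thm:C5+pProjective}. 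For $n=7$ I would rely on the exhaustive list of $7$-vertex cores produced in Section~\ref{sec:Conjecture7Vertices}, discard the decomposable ones (handled by the ``only if'' direction), and for each remaining connected indecomposable core $H$ on $7$ vertices exhibit a pp-definition of $\text{NEQ}_7=E_{K_7}$ over $E_H$, which by Corollary~\ref{cor:projectivity} immediately certifies that $H$ is a projective core. Graphs in this list such as $K_7$, $C_7$, $\overline{C_7}$ (covered by Theorem~\ref{thm:ComplementCycleProjectiveCore} with $p=3$), and $C_5+2$ (the $p=2$ case of Theorem~\ref{thm:C5+pProjective}) are already subsumed by the general families; the genuinely new work lies in the sporadic cores not belonging to any of these families.

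The main obstacle will be twofold. The first difficulty is the \emph{completeness} of the $7$-vertex enumeration: one must be certain that no indecomposable connected core on $7$ vertices is overlooked, which requires either a careful combinatorial argument bounding odd-girth and degree sequences or a computer-assisted search whose output is then verified. The second and more substantive difficulty is, for each sporadic core $H$, to actually \emph{find} a pp-definition of $\text{NEQ}_7$ over $E_H$; unlike the structured families (cliques, cycles, $C_5+p$, complements of odd cycles) where the defining formula follows a uniform template based on odd-paths or clique-completions, the ad hoc cores demand individually crafted existential formulas, and correctness of each such formula (that the defined relation is exactly ``$x\neq x'$'', with both the ``$x=x'$ is infeasible'' and ``every distinct pair is realizable'' directions checked) must be verified separately. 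I would organize this as a table pairing each $7$-vertex indecomposable core with its certifying pp-definition, invoking Corollary~\ref{cor:projectivity} uniformly, so that once every entry is checked the theorem follows: every connected indecomposable core on at most $7$ vertices is projective, and combined with the decomposable case the conjecture holds on all graphs of at most $7$ vertices.
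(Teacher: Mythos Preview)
Your outline is sound and matches the paper for the enumeration of cores on at most $7$ vertices and for the ``trivial'' families ($K_k$, $C_k$, $\overline{C_k}$, $C_5+p$), where both you and the paper invoke Corollary~\ref{cor:projectivity} via the explicit pp-definitions of Theorems~\ref{thm:CliqueProjective}--\ref{thm:C5+pProjective}. The divergence is in the treatment of the six sporadic $7$-cores $G_1,\dots,G_6$. You propose to stay within the pp-definition framework and exhibit, for each $G_i$, an explicit pp-definition of $\text{NEQ}_7$ over $E_{G_i}$, then apply Corollary~\ref{cor:projectivity}. The paper instead switches technique: it invokes Rosenberg's classification of minimal clones to reduce projectivity of a non-bipartite core to the nonexistence of semiprojections in $\pol(G_i)$ (Corollary~\ref{cor:semiprojection_projectivity}), proves a general lemma (Lemma~\ref{lem:mindegree_semiprojections}) bounding the arity of any semiprojection polymorphism in terms of $|V_G|$ and the minimum degree $\delta$---for the $G_i$ (all with $\delta=3$, $n=7$) this caps the arity at $3$---and then dispatches arities $2$ and $3$ by computer search. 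The paper's route is uniform and mechanizable once Rosenberg is available, and the degree lemma is of independent interest; your route keeps the argument entirely within the Galois-connection toolkit of Section~\ref{section:Projective graphs} and would yield explicit human-checkable certificates, but crafting and verifying six bespoke pp-definitions of $\text{NEQ}_7$ is precisely the work you have deferred, and the paper's choice to abandon that approach for the sporadic cores suggests it may not be as routine as for the structured families.
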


\subsection{Core graphs with at most 6 vertices}\label{sec:Conjecture6Vertices}

In order to verify the conjecture on small graphs, we enumerate all the (indecomposable) small cores and check their projectivity. Recall from Theorem \ref{thm:CliqueProjective} that cliques are indecomposable, projective and core, and thus the Okrasa and Rz\c ażewski Conjecture  is true on cliques. We can therefore restrict to the non-clique core graphs. This motivates the definition of {\em proper cores}.

\begin{definition}

A {\em proper core} is a core graph that is not a clique.

\end{definition}

Moreover, a proper core on $n\ge 0$ vertices is called a {\em proper $n$-core}.

Recall that a graph $G$ is said to be {\em perfect} if for all induced subgraph $G'$ of $G$, the size of the largest clique of $G'$ equals the chromatic number of $G'$.

\begin{remark}\label{rem:CoreNotPerfect}

If a graph $G$ is a proper core then it is not a perfect graph.

\end{remark}

\begin{proof}

Assume by contradiction that $G$ is a perfect graph, and let $k$ be the chromatic number of $G$. Since $G$ is a perfect graph, $G$ has an induced $K_k$, and by definition of $k$, $G$ is $k$-colorable. Since $K_k$ is a core, it follows that $core(G)=K_k$, and since $G$ is a core, $G=core(G)=K_k$. We have that $G$ is a clique, which contradicts the hypothesis that $G$ is a proper core.

\end{proof}

Using the famous theorem of perfect graphs, we can drastically reduce the search space when trying to enumerate all proper $6$-cores.

\begin{theorem}[Theorem of perfect graphs]\cite{chudnovsky2003progress}\label{thm:TheoremOfPerfectGraphs}

A graph $G$ is perfect if and only if $G$ does not contain any induced $C_k$ or $\overline{C_k}$ for some $k$ odd and $k\ge 5$.

\end{theorem}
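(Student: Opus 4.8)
The plan is to recognize that this statement is the celebrated \emph{Strong Perfect Graph Theorem} of Chudnovsky, Robertson, Seymour and Thomas; rather than reprove it I would simply invoke \cite{chudnovsky2003progress}. Still, it is worth separating the two directions and indicating where the genuine difficulty lies, since only one of them is elementary.

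For the easy direction — that a perfect graph contains no induced odd $C_k$ nor $\overline{C_k}$ with $k\ge 5$ odd — I would exhibit a small witness of imperfection. If $G$ has an induced $C_k$ with $k$ odd and $k\ge 5$, then this induced subgraph is triangle-free, so the size of its largest clique is $2$, whereas an odd cycle is not bipartite and hence needs $3$ colours; thus the induced subgraph $C_k$ has largest clique $2$ but chromatic number $3$, witnessing that $G$ is not perfect. For an induced $\overline{C_k}$ one argues in the complement: the largest clique of $\overline{C_k}$ corresponds to the largest independent set of $C_k$, of size $(k-1)/2$, while $\chi(\overline{C_k})=(k+1)/2$ (the vertices of $C_k$ cannot be covered by fewer than $(k+1)/2$ independent sets), so again the two parameters differ. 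Hence every perfect graph is \emph{Berge}, i.e.\ free of odd holes and odd antiholes.

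The hard direction — that every Berge graph is perfect — is the entire substance of the theorem and the step I expect to be the overwhelming obstacle. Here I would follow the decomposition strategy of Chudnovsky–Robertson–Seymour–Thomas. One first observes that a minimal counterexample would be a minimally imperfect Berge graph, and recalls (following Lovász) that minimally imperfect graphs already satisfy strong regularity constraints (for instance they are partitionable). The crux is then a structural dichotomy: every Berge graph is either \emph{basic} — lying in one of five explicitly perfect families (bipartite graphs, complements of bipartite graphs, line graphs of bipartite graphs, complements of such line graphs, and double split graphs) — or else it admits one of a short list of structural decompositions (a $2$-join, a $2$-join in the complement, or a balanced skew partition). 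One finishes by checking that each basic class is perfect and that a minimally imperfect graph can admit none of the listed decompositions, which yields the contradiction.

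Proving this structural dichotomy is where essentially all the work resides: it spans well over a hundred pages in the work cited as \cite{chudnovsky2003progress} and is far beyond anything one could reconstruct in the present setting. This is precisely why, in this paper, the statement is used as a black box in order to prune the search space of potential proper cores rather than established from scratch.
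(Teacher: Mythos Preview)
Your proposal is correct and matches the paper's treatment: the paper does not prove this statement at all but simply cites \cite{chudnovsky2003progress} and uses the Strong Perfect Graph Theorem as a black box, exactly as you suggest. Your extra sketch of the easy direction and outline of the decomposition strategy for the hard direction are accurate bonuses, but nothing beyond the citation is expected here.
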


We can immediately deduce the following corollary from Remark~\ref{rem:CoreNotPerfect} and Theorem~\ref{thm:TheoremOfPerfectGraphs}.

\begin{corollary}\label{cor:ProperCoreInducedCycle}

Let $G$ be a proper core on $n\ge 1$ vertices, then $G$ contains an induced $C_k$ or $\overline{C_k}$ for some $k$ odd and $5\le k\le n$.

\end{corollary}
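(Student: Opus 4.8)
The plan is to chain together the two preceding results, Remark~\ref{rem:CoreNotPerfect} and Theorem~\ref{thm:TheoremOfPerfectGraphs}, exactly as the corollary's placement suggests. First I would observe that by Remark~\ref{rem:CoreNotPerfect}, a proper core $G$ is not a perfect graph. This is the key conceptual link: being a proper core (a core that is not a clique) forces the chromatic number to strictly exceed the clique number on $G$ itself, so the defining equality of perfection fails.

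Next I would apply the Theorem of Perfect Graphs (Theorem~\ref{thm:TheoremOfPerfectGraphs}) in its contrapositive form. Since $G$ is \emph{not} perfect, it must contain an induced $C_k$ or $\overline{C_k}$ for some odd $k\ge 5$. This is the entire substance, and it is immediate once the non-perfection is established.

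The only genuine step requiring care is the upper bound $k\le n$. This does not come from either cited result directly, but rather from the trivial observation that any induced subgraph of a graph on $n$ vertices has at most $n$ vertices; since an induced $C_k$ or $\overline{C_k}$ has exactly $k$ vertices, we get $k\le n$. I would state this explicitly to justify the full range $5\le k\le n$ claimed in the statement. Thus the proof reads: by Remark~\ref{rem:CoreNotPerfect}, $G$ is not perfect; by Theorem~\ref{thm:TheoremOfPerfectGraphs}, $G$ contains an induced $C_k$ or $\overline{C_k}$ with $k$ odd and $k\ge 5$; and since this is an induced subgraph on $k$ vertices of a graph on $n$ vertices, we have $k\le n$, whence $5\le k\le n$.

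I do not anticipate any real obstacle here, as the corollary is a direct logical composition of two quoted theorems plus the cardinality bound. The \emph{only} thing to watch is that Corollary~\ref{cor:ProperCoreInducedCycle} is stated for $n\ge 1$, whereas induced odd cycles or their complements on at least $5$ vertices can only exist when $n\ge 5$; for $1\le n\le 4$ the statement is vacuously fine because no proper core exists in that range (every core on at most $4$ vertices is a clique or bipartite, hence a clique once it is a proper core is ruled out), so the implication holds trivially with an empty hypothesis. I would not belabor this, but it is the one spot where a careful reader might pause.
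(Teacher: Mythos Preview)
Your proposal is correct and follows exactly the paper's own approach: the authors simply state that the corollary is immediate from Remark~\ref{rem:CoreNotPerfect} and Theorem~\ref{thm:TheoremOfPerfectGraphs}. Your added remarks about the bound $k\le n$ and the vacuous case $n\le 4$ are accurate and, if anything, more explicit than the paper.
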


Via Corollary \ref{cor:ProperCoreInducedCycle} we can easily classify the cores on $\le 5$ vertices (remarking that $\overline{C_5}=C_5$).

\begin{remark}\label{rem:CoresAtMost5Vertices}

The cliques $K_1,\dots,K_5$ are cores. The other cores on $\le 5$ vertices are proper cores, so must contain an induced $C_5$. It follows from Corollary \ref{cor:ProperCoreInducedCycle} that the only proper $n$-core with $n\le 5$ is $C_5$.

\end{remark}

We have completely classified the cores on $\le 5$ vertices. We now extend this classification to the cores on $6$ vertices.

\begin{theorem}\label{thm:Cores6Vertices}

The only core graphs on $6$ vertices are $K_6$ and the graph $C_5+1$ presented in Figure \ref{fig:CoresAtMost6Vertices}. These two graphs are projective.

\end{theorem}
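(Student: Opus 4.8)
The plan is to split the statement into two parts: the classification (``the only $6$-vertex cores are $K_6$ and $C_5+1$'') and the projectivity claim. The projectivity is immediate given the tools already available: $K_6$ is projective by Theorem~\ref{thm:CliqueProjective}, and $C_5+1$ is a projective core by Theorem~\ref{thm:C5+pProjective} applied with $p=1$. So the real content is the classification, and since cliques are cores and $K_6$ is the only clique on $6$ vertices, it suffices to show that the only \emph{proper} $6$-core is $C_5+1$.

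By Corollary~\ref{cor:ProperCoreInducedCycle}, any proper $6$-core $G$ contains an induced $C_k$ or $\overline{C_k}$ with $k$ odd and $5\le k\le 6$; since $k=5$ is forced and $\overline{C_5}=C_5$, the graph $G$ contains an induced $C_5$, say on $\{v_0,\dots,v_4\}$ with edges $v_iv_{i+1}$ (indices mod $5$). Writing $w$ for the unique remaining vertex, the whole graph is determined by the set $S:=N(w)\cap\{v_0,\dots,v_4\}$, since $w$ has no other potential neighbours. Up to the dihedral symmetry of the cycle I would reduce to a short list of representatives for $S$, and in each non-full case exhibit an explicit homomorphism from $G$ onto a strictly smaller core, contradicting that $G$ is a core.

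Concretely, I would organize the case analysis around whether $S$ spans an edge of $C_5$. If $S$ is independent in $C_5$ (so $|S|\le 2$, and non-adjacent if $|S|=2$), then $S$ has a common neighbour $c$ in $C_5$, and mapping the cycle identically together with $w\mapsto c$ gives a homomorphism $G\to C_5$; since $C_5$ is a core (Theorem~\ref{thm:cycle_core}) and is induced in $G$, this forces $core(G)=C_5\ne G$. If $S$ contains an edge but $S\ne\{v_0,\dots,v_4\}$, then $G$ contains a triangle ($w$ together with an adjacent pair in $S$), and I would $3$-colour $G$ as follows: pick $v_j\notin S$, $2$-colour the path $C_5-v_j$ with two colours, and give both $w$ and $v_j$ the third colour (legitimate because $w\not\sim v_j$ and each of $w,v_j$ sees only the first two colours among its neighbours); this yields $G\to K_3$, whence $core(G)=K_3\ne G$. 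The only remaining case is $S=\{v_0,\dots,v_4\}$, that is, $G=C_5+1$.

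Finally I would confirm that $C_5+1$ really is a core. By Remark~\ref{rem:CoresAtMost5Vertices}, any proper retract of $C_5+1$ would have core among $K_1,\dots,K_5,C_5$, so it suffices to rule out each target: $G\to K_1,K_2$ is impossible since $G$ is non-bipartite; $G\to K_3$ is impossible since $\chi(C_5+1)=4$ (the universal vertex needs a fourth colour); $G\to C_5$ is impossible since the universal vertex would have to map to a vertex adjacent to all of $C_5$, but $C_5$ has maximum degree $2$; and $G\to K_4,K_5$ is impossible since $K_4,K_5$ are cores while $C_5+1$ contains no $K_4$. (This step is in any case subsumed by Theorem~\ref{thm:C5+pProjective}.) The main obstacle is purely bookkeeping: ensuring that the reduction to dihedral-orbit representatives of $S$ is exhaustive and that the two explicit homomorphisms $G\to C_5$ and $G\to K_3$ are correctly verified. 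Once the ``independent versus edge-spanning'' dichotomy is in place, no individual case presents a genuine difficulty.
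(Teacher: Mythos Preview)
Your proposal is correct and essentially matches the paper's proof. Both arguments invoke Corollary~\ref{cor:ProperCoreInducedCycle} to obtain an induced $C_5$ and then analyse the neighbourhood $S$ of the sixth vertex; both use the same $3$-colouring (give the missed cycle vertex and $w$ the third colour, $2$-colour the remaining path) and the same fold $w\mapsto\gamma$ to a common neighbour when $S$ is independent. The only difference is organisational: the paper first establishes the $3$-colouring uniformly (whenever $w$ misses a vertex), deduces that a core can have no triangle, and \emph{then} concludes $S$ is independent and folds; you instead split directly into the independent/edge-spanning dichotomy and handle each branch separately. Your extra explicit verification that $C_5+1$ is a core is, as you note, subsumed by Theorem~\ref{thm:C5+pProjective}.
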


\begin{proof}

First, notice that $K_6$ and $C_5+1$ are projective cores by Theorem \ref{thm:CliqueProjective} and \ref{thm:C5+pProjective}.

We now prove that $K_6$ and $C_5+1$ are the only cores on 6 vertices. Assume by contradiction that there exists a proper core $G$ on $6$ vertices different from $C_5+1$. Then by Corollary \ref{cor:ProperCoreInducedCycle}, five of the six vertices of $G$ must induce a $5$-cycle, call them $a,b,c,d$ and $e$, and call $u$ the sixth vertex.

Since $G$ is not isomorphic to $C_5+1$, $u$ must not be a neighbor to (at least) one the vertex in $\{a,b,c,d,e\}$. Assume by symmetry that $u$ and $a$ are not neighbors. Notice that $G$ is $3$-colorable by coloring $a$ and $u$ with the color $1$; $b$ and $d$ with the color $2$; and $c$ and $e$ with the color $3$. We deduce that $G$ has no triangle, otherwise we would have $core(G)=K_3$, contradicting the fact that $G$ is a core. It follows that $u$ has at most $2$ non-adjacent neighbors, ie. the set of neighbors of $u$ is contained in a set of the form $\{\alpha,\beta\}$ where $\alpha$ and $\beta$ belong to $\{a,b,c,d,e\}$ and are non-adjacent. The vertices $\alpha$ and $\beta$ have a common neighbor $\gamma\in\{a,b,c,d,e\}$. The function that maps $u$ to $\gamma$ and that leaves the rest of the graph unchanged is a homomorphism from $G$ to the $C_5$ induced by $\{a,b,c,d,e\}$. This proves that $core(G)=C_5$, contradicting that $G$ is a core.

We have proven by contradiction that the only cores on $6$ vertices are $K_6$ and $C_5+1$, and that they are projective.
\end{proof}

The completeness of the classification of cores on at most $6$ vertices presented in Figure \ref{fig:CoresAtMost6Vertices} follows from Remark \ref{rem:CoresAtMost5Vertices} and Theorem \ref{thm:Cores6Vertices}. All of these graphs are projective, by Theorems \ref{thm:CliqueProjective} and \ref{thm:cycle_core} and \ref{thm:Cores6Vertices} and thus are not counter-example to the Okrasa and Rz\c ażewski Conjecture. Hence, we have proven that the Okrasa and Rz\c ażewski Conjecture is true on graphs with at most $6$ vertices, and we now continue with graphs on 7 vertices.

\subsection{Cores on 7 vertices}\label{sec:Conjecture7Vertices}

In order to put the Okrasa and Rz\c ażewski Conjecture to the test, we continue to enumerate the small cores. We provide the exhaustive list of cores on $7$ vertices in Figure \ref{fig:Trivial7cores} and \ref{fig:Sporadic7Cores}. The proof of the fact that this is indeed the exhaustive list of cores on 7 vertices is left to Appendix \ref{app:7cores}.

\begin{restatable}{theorem}{ClassificationSevenCores}\label{thm:Classification7Cores}

Up to isomorphism, there are exactly 10 cores graphs on $7$ vertices. They are the graphs $K_7,C_7,\overline{C_7},C_5+2$ presented in Figure \ref{fig:Trivial7cores}, and the graphs $G_1,\dots,G_6$ presented in Figure \ref{fig:Sporadic7Cores}.

\end{restatable}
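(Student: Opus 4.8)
The plan is to combine the non-perfection of proper cores (Corollary~\ref{cor:ProperCoreInducedCycle}) with a folding/retraction analysis, and to reduce the whole problem to understanding how two extra vertices can be attached to an induced $C_5$. The key simplification is that a \emph{coreness test} is available for candidate graphs: a graph $G$ on $7$ vertices is a core if and only if $core(G)=G$, equivalently if and only if $G$ admits no homomorphism into a core of strictly smaller order; and all cores on at most $6$ vertices are already classified (they are $K_1,\dots,K_6$, $C_5$, and $C_5+1$, by Remark~\ref{rem:CoresAtMost5Vertices} and Theorem~\ref{thm:Cores6Vertices}). So testing coreness of a $7$-vertex candidate reduces to checking the finitely many homomorphisms $G\to K_3,\dots,K_6,C_5,C_5+1$.

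First I would reduce the search. Every $7$-vertex core is either $K_7$ (a clique, hence a core by Theorem~\ref{thm:CliqueProjective}) or a proper core $G$. By Corollary~\ref{cor:ProperCoreInducedCycle}, $G$ contains an induced $C_k$ or $\overline{C_k}$ for some odd $k$ with $5\le k\le 7$; since $\overline{C_5}=C_5$, the only possibilities are an induced $C_5$, $C_7$, or $\overline{C_7}$. If $G$ contains an induced $C_7$ or $\overline{C_7}$, this induced subgraph already spans all $7$ vertices, so $G=C_7$ or $G=\overline{C_7}$; the former is an odd cycle and the latter equals $\overline{C_{2\cdot 3+1}}$, a projective core by Theorem~\ref{thm:ComplementCycleProjectiveCore}, so both are cores. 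The entire remaining difficulty therefore lies in the case where $G$ contains an induced $C_5$ and is neither $C_7$ nor $\overline{C_7}$.

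In that main case I would write $C_5=abcde$ for the induced $5$-cycle and let $u,v$ be the two remaining vertices, so that a configuration is fully determined by the sets $N(u)\cap V_{C_5}$, $N(v)\cap V_{C_5}$, and whether $uv\in E_G$. Up to the symmetry group of order $20$ (the dihedral group of $C_5$ acting on $\{a,\dots,e\}$ together with the swap $u\leftrightarrow v$) there are only finitely many configurations, and most are killed by the \emph{folding lemma}: if $x\ne y$ are non-adjacent with $N(x)\subseteq N(y)$, then the map fixing everything and sending $x\mapsto y$ is a retraction onto $G-x$, so $G$ is not a core. Applying this inside $V_{C_5}\cup\{u\}$ and $V_{C_5}\cup\{v\}$ sharply restricts each neighborhood (for instance, a vertex adjacent to four $C_5$-vertices lets the opposite $C_5$-vertex fold onto it, unless the other extra vertex blocks the fold), and applying it to the pair $\{u,v\}$ couples the two neighborhoods. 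I would organize the enumeration by $\deg_{C_5}(u)+\deg_{C_5}(v)$ and by the adjacency of $u,v$, discarding at once every configuration admitting a point-fold; for the few finalists I would confirm coreness with the criterion above (no homomorphism to $K_3,\dots,K_6,C_5,C_5+1$, typically obstructed by clique number, chromatic number, or odd-girth). The surviving graphs should be exactly $C_5+2$ (a projective core by Theorem~\ref{thm:C5+pProjective}) and the six sporadic graphs $G_1,\dots,G_6$ of Figure~\ref{fig:Sporadic7Cores}; together with $K_7,C_7,\overline{C_7},C_5+2$ from Figure~\ref{fig:Trivial7cores} this gives the list of $10$, which I would certify pairwise non-isomorphic using cheap invariants (degree sequence, triangle count, clique and independence numbers).

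The main obstacle is the enumeration in the $C_5$ case. Coreness is a statement quantified over all endomorphisms, so eliminating a configuration means \emph{exhibiting} a retraction while keeping one means \emph{excluding all} of them, and the bookkeeping over the order-$20$ symmetry must neither drop a genuine core nor double-count. The most delicate configurations are precisely those in which a would-be fold of a $C_5$-vertex onto $u$ is blocked exactly by $v$ (and symmetrically), since these are the ones that yield the sporadic cores $G_i$ rather than collapsing; pinning down the fold-blocking conditions and confirming that no \emph{global} (non point-fold) retraction survives—via the finite homomorphism test to the classified smaller cores—is where the real care is needed.
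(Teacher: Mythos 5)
Your overall architecture (peel off $K_7$, $C_7$, $\overline{C_7}$ via Corollary~\ref{cor:ProperCoreInducedCycle}, reduce the rest to an induced $C_5$ plus two vertices $u,v$, enumerate configurations up to symmetry, prune with folds, then certify the finalists against the classification of cores on at most $6$ vertices) is essentially the paper's plan in Appendix~\ref{app:7cores}. But your coreness test is wrong, and the error is fatal to the verification step. You claim that $G$ is a core if and only if $G$ admits no homomorphism into a core of strictly smaller order. The ``only if'' direction fails: a core can map homomorphically onto a smaller core that is not an \emph{induced subgraph} of it. Concretely, $C_7\rightarrow C_5$ (every odd cycle of length $\geq 5$ maps onto $C_5$), $C_5+2\rightarrow K_5$ (it is $5$-chromatic), and each sporadic core $G_1,\dots,G_6$ is $4$-colorable — as the paper itself notes in the proof of Lemma~\ref{lem:NoK4}, being a proper subgraph of $C_5+2$ — hence $G_i\rightarrow K_4$. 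So under your stated test every finalist of the main case, and even $C_7$, would be rejected as a non-core, and your enumeration would certify a list containing only cliques, contradicting the theorem you set out to prove. The correct criterion is that $G$ fails to be a core if and only if it has a homomorphism to a \emph{proper induced subgraph}, equivalently (given the $\leq 6$-vertex classification of Remark~\ref{rem:CoresAtMost5Vertices} and Theorem~\ref{thm:Cores6Vertices}) if and only if some core on at most $6$ vertices occurs as an induced subgraph $H'$ of $G$ with $G\rightarrow H'$. This is why the paper's Lemma~\ref{lem:SufficientCondition7Core} checks a \emph{mixture} of homomorphism obstructions (not $3$-colorable, ruling out $K_1,K_2,K_3,C_5$ as the core) and containment obstructions (no induced $K_4$, ruling out $K_4,K_5,K_6$; no vertex of degree $\geq 5$, ruling out $C_5+1$). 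Your parenthetical ``obstructed by clique number'' hints at the containment condition, but as written the test conflates the two and cannot distinguish the cores from the non-cores among your finalists.

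A secondary caution: your point-fold pruning (send $x$ to $y$ when $N(x)\subseteq N(y)$, $x\not\sim y$) is the paper's device too (it underlies Lemmas~\ref{lem:mindegreecore} and~\ref{lem:Nodeg5}), but folds alone do not eliminate all non-cores in the $C_5$-plus-two-vertices case: many of the paper's $27$ candidates (Figures~\ref{fig:NoteuNotuv}--\ref{fig:euNotuv}) retract only via global $3$-colorings or maps onto an induced $C_5+1$, not via a single-vertex fold. You acknowledge needing a global check for exactly these configurations, but that global check is the broken test above. Once it is repaired to ``homomorphism onto an induced copy of a smaller core,'' your route becomes a legitimate variant of the paper's: the paper trims the configuration space with structural lemmas (Lemmas~\ref{lem:CommonNeighbor}, \ref{lem:AtLeastOneNeighbor}, \ref{lem:evEdge}, plus no induced $K_4$ and non-$3$-colorability) down to $27$ explicit graphs rather than by orbit enumeration under the order-$20$ symmetry group, but both reduce to the same finite verification.
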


\begin{proof}
See Appendix \ref{app:7cores}.
\end{proof}

\begin{center}
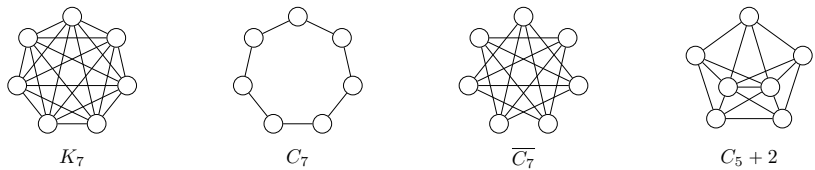
\begin{figure}
\centering
\scalebox{0.75}{
\begin{tikzpicture}

\tikzstyle{vertex}=[draw,shape=circle];

\begin{scope}

    \node[vertex] (a) at (90:1) {};
    \node[vertex] (e) at (-51.43+90:1) {}
        edge (a);
    \node[vertex] (d) at (-2*51.43+90:1) {}
        edge (a)
        edge (e);
    \node[vertex] (c) at (-3*51.43+90:1) {}
        edge (a)
        edge (e)
        edge (d);
    \node[vertex] (b) at (-4*51.43+90:1) {}
        edge (a)
        edge (e)
        edge (d)
        edge (c);
    \node[vertex] (v6) at (-5*51.43+90:1) {}
        edge (a)
        edge (e)
        edge (d)
        edge (c)
        edge (b);
    \node[vertex] (v7) at (-6*51.43+90:1) {}
        edge (a)
        edge (e)
        edge (d)
        edge (c)
        edge (b)
        edge (v6);
    \node () at (0,-1.5) {$K_7$};

\end{scope}

\begin{scope}[xshift=4cm]

    \node[vertex] (a) at (90:1) {};
    \node[vertex] (e) at (-51.43+90:1) {}
        edge (a);
    \node[vertex] (d) at (-2*51.43+90:1) {}
        edge (e);
    \node[vertex] (c) at (-3*51.43+90:1) {}
        edge (d);
    \node[vertex] (b) at (-4*51.43+90:1) {}
        edge (c);
    \node[vertex] (v6) at (-5*51.43+90:1) {}
        edge (b);
    \node[vertex] (v7) at (-6*51.43+90:1) {}
        edge (a)
        edge (v6);
    \node () at (0,-1.5) {$C_7$};

\end{scope}

\begin{scope}[xshift=8cm]

    \node[vertex] (a) at (90:1) {};
    \node[vertex] (e) at (-51.43+90:1) {};
    \node[vertex] (d) at (-2*51.43+90:1) {}
        edge (a);
    \node[vertex] (c) at (-3*51.43+90:1) {}
        edge (a)
        edge (e);
    \node[vertex] (b) at (-4*51.43+90:1) {}
        edge (a)
        edge (e)
        edge (d);
    \node[vertex] (v6) at (-5*51.43+90:1) {}
        edge (a)
        edge (e)
        edge (d)
        edge (c);
    \node[vertex] (v7) at (-6*51.43+90:1) {}
        edge (e)
        edge (d)
        edge (c)
        edge (b);

    \node () at (0,-1.5) {$\overline{C_7}$};

\end{scope}

\begin{scope}[xshift=12cm]

    \node[vertex] (a) at (90:1) {};
    \node[vertex] (e) at (-72+90:1) {}
        edge (a);
    \node[vertex] (d) at (-2*72+90:1) {}
        edge (e);
    \node[vertex] (c) at (-3*72+90:1) {}
        edge (d);
    \node[vertex] (b) at (-4*72+90:1) {}
        edge (a)
        edge (c);
    \node[vertex] (u) at (-0.375,-0.25) {}
        edge (a)
        edge (e)
        edge (d)
        edge (c)
        edge (b);
    \node[vertex] (v) at (0.375,-0.25) {}
        edge (a)
        edge (e)
        edge (d)
        edge (c)
        edge (b)
        edge (u);
    \node () at (0,-1.5) {$C_5+2$};

\end{scope}

\end{tikzpicture}
}
\caption{The ``trivial'' 7-cores. We prove in Theorem \ref{thm:Trivial7coresProjective} that they are projective.}
\label{fig:Trivial7cores}
\end{figure}
\end{center}

By Theorem \ref{thm:Classification7Cores}, there are 10 cores on 7 vertices. We call the four 7-cores of Figure \ref{fig:Trivial7cores} ``trivial'' since it is very easy to prove that they are projective.

\begin{theorem}\label{thm:Trivial7coresProjective}

The graphs $K_7,C_7,\overline{C_7}$ and $C_5+2$ presented in Figure \ref{fig:Trivial7cores} are projective cores.

\end{theorem}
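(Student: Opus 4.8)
The plan is to observe that each of the four graphs named in the statement is simply a member of a parametrized family for which projectivity-as-a-core has already been established earlier in this section, all ultimately through the pp-definability criterion of Corollary~\ref{cor:projectivity}. In other words, no new argument is required: the label ``trivial'' reflects precisely the fact that these 7-cores are instances of previously handled cases, and the proof reduces to matching each figure against the correct family.

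Concretely, I would dispatch the four graphs one at a time. For $K_7$, projectivity is immediate from Theorem~\ref{thm:CliqueProjective} (with $k=7$), and cliques are cores as recorded in the preliminaries; equivalently, $\text{NEQ}_7 = E_{K_7}$ so the pp-definition required by Corollary~\ref{cor:projectivity} is the identity. For $C_7$, since $7$ is odd and $\geq 3$, Theorem~\ref{thm:cycle_core} applies directly and yields that $C_7$ is a projective core. For $\overline{C_7}$, I would write $7 = 2p+1$ with $p = 3 \geq 3$, so that $\overline{C_7} = \overline{C_{2p+1}}$ is covered by Theorem~\ref{thm:ComplementCycleProjectiveCore}. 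Finally, for $C_5+2$, I would note that this is exactly $C_5 + p$ with $p = 2 \geq 0$ (two universal vertices added to a $5$-cycle), so Theorem~\ref{thm:C5+pProjective} gives the conclusion.

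The only thing to verify, and hence the sole ``obstacle'' (a purely bookkeeping one), is that the graphs drawn in Figure~\ref{fig:Trivial7cores} genuinely coincide with the intended family members: that the second graph is a $7$-cycle, that the third is its complement (so that $p=3$ in Theorem~\ref{thm:ComplementCycleProjectiveCore}), and that the fourth is $C_5$ together with two vertices each adjacent to all others (so that $p=2$ in Theorem~\ref{thm:C5+pProjective}). Since the pp-definitions of $\text{NEQ}_k$ underlying each of the cited theorems have already been exhibited in their respective proofs, no further pp-construction is needed here, and the result follows by assembling the four cited facts.
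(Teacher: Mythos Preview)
Your proposal is correct and mirrors the paper's own proof exactly: the paper also simply cites Theorems~\ref{thm:CliqueProjective}, \ref{thm:cycle_core}, \ref{thm:ComplementCycleProjectiveCore}, and \ref{thm:C5+pProjective} for $K_7$, $C_7$, $\overline{C_7}$, and $C_5+2$ respectively. Your additional remarks about parameter matching and the underlying pp-definitions are accurate but not needed.
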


\begin{proof}
The cases of $K_7,C_7,\overline{C_7}$ and $C_5+2$ have been treated, respectively, in Theorems \ref{thm:CliqueProjective}, \ref{thm:cycle_core}, \ref{thm:ComplementCycleProjectiveCore} and \ref{thm:C5+pProjective}.
\end{proof}

\begin{center}
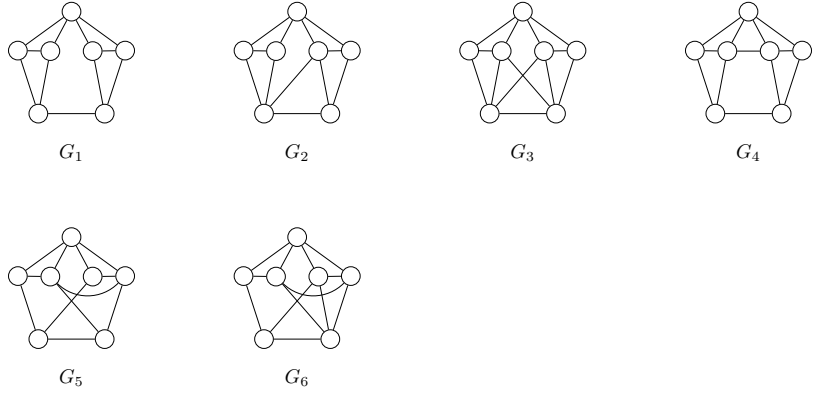
\begin{figure}
\centering
\scalebox{0.75}{
\begin{tikzpicture}

\tikzstyle{vertex}=[draw,shape=circle];

\begin{scope}[yshift=-4cm]

    \node[vertex] (v1) at (90:1) {};
    \node[vertex] (v2) at (-72+90:1) {}
        edge (v1);
    \node[vertex] (v3) at (-2*72+90:1) {}
        edge (v2);
    \node[vertex] (v4) at (-3*72+90:1) {}
        edge (v3);
    \node[vertex] (v5) at (-4*72+90:1) {}
        edge (v1)
        edge (v4);
    \node[vertex] (u) at (-0.375,0.3) {}
        edge (v1)
        edge (v4)
        edge (v5);
    \node[vertex] (v) at (0.375,0.3) {}
        edge (v1)
        edge (v2)
        edge (v3);

    \node () at (0,-1.5) {$G_1$};

\end{scope}

\begin{scope}[xshift=4cm,yshift=-4cm]

    \node[vertex] (a) at (90:1) {};
    \node[vertex] (e) at (-72+90:1) {}
        edge (a);
    \node[vertex] (d) at (-2*72+90:1) {}
        edge (e);
    \node[vertex] (c) at (-3*72+90:1) {}
        edge (d);
    \node[vertex] (b) at (-4*72+90:1) {}
        edge (a)
        edge (c);
    \node[vertex] (u) at (-0.375,0.3) {}
        edge (a)
        edge (b);
    \node[vertex] (v) at (0.375,0.3) {}
        edge (a)
        edge (e);

    \node () at (0,-1.5) {$G_2$};

\draw (c)--(u);
\draw (c)--(v);
\draw (d)--(v);

\end{scope}

\begin{scope}[xshift=8cm,yshift=-4cm]

    \node[vertex] (a) at (90:1) {};
    \node[vertex] (e) at (-72+90:1) {}
        edge (a);
    \node[vertex] (d) at (-2*72+90:1) {}
        edge (e);
    \node[vertex] (c) at (-3*72+90:1) {}
        edge (d);
    \node[vertex] (b) at (-4*72+90:1) {}
        edge (a)
        edge (c);
    \node[vertex] (u) at (-0.375,0.3) {}
        edge (a)
        edge (b);
    \node[vertex] (v) at (0.375,0.3) {}
        edge (a)
        edge (e);

    \node () at (0,-1.5) {$G_3$};

\draw (c)--(u);
\draw (c)--(v);
\draw (d)--(u);
\draw (d)--(v);

\end{scope}

\begin{scope}[xshift=12cm,yshift=-4cm]

    \node[vertex] (a) at (90:1) {};
    \node[vertex] (e) at (-72+90:1) {}
        edge (a);
    \node[vertex] (d) at (-2*72+90:1) {}
        edge (e);
    \node[vertex] (c) at (-3*72+90:1) {}
        edge (d);
    \node[vertex] (b) at (-4*72+90:1) {}
        edge (a)
        edge (c);
    \node[vertex] (u) at (-0.375,0.3) {}
        edge (a)
        edge (b);
    \node[vertex] (v) at (0.375,0.3) {}
        edge (a)
        edge (e);

    \node () at (0,-1.5) {$G_4$};

\draw (u)--(v);

\draw (c)--(u);
\draw (d)--(v);

\end{scope}

\begin{scope}[yshift=-8cm]

    \node[vertex] (a) at (90:1) {};
    \node[vertex] (e) at (-72+90:1) {}
        edge (a);
    \node[vertex] (d) at (-2*72+90:1) {}
        edge (e);
    \node[vertex] (c) at (-3*72+90:1) {}
        edge (d);
    \node[vertex] (b) at (-4*72+90:1) {}
        edge (a)
        edge (c);
    \node[vertex] (u) at (-0.375,0.3) {}
        edge (a)
        edge (b);
    \node[vertex] (v) at (0.375,0.3) {}
        edge (a)
        edge (e);

    \node () at (0,-1.5) {$G_5$};

\draw (e) to [bend left = 45] (u);

\draw (c)--(v);
\draw (d)--(u);

\end{scope}

\begin{scope}[xshift=4cm,yshift=-8cm]

    \node[vertex] (a) at (90:1) {};
    \node[vertex] (e) at (-72+90:1) {}
        edge (a);
    \node[vertex] (d) at (-2*72+90:1) {}
        edge (e);
    \node[vertex] (c) at (-3*72+90:1) {}
        edge (d);
    \node[vertex] (b) at (-4*72+90:1) {}
        edge (a)
        edge (c);
    \node[vertex] (u) at (-0.375,0.3) {}
        edge (a)
        edge (b);
    \node[vertex] (v) at (0.375,0.3) {}
        edge (a)
        edge (e);

    \node () at (0,-1.5) {$G_6$};

\draw (e) to [bend left = 45] (u);

\draw (c)--(v);
\draw (d)--(u);
\draw (d)--(v);

\end{scope}

\end{tikzpicture}
}
\caption{The ``sporadic'' 7-cores. We prove in Theorem \ref{thm:conjecture_true_7_vertices} that they are projective.}
\label{fig:Sporadic7Cores}
\end{figure}

\end{center}

What remains is now to check the projectivity of the ``sporadic'' 7-cores presented in Figure \ref{fig:Sporadic7Cores}.
To simplify this we make use of Rosenberg's classification of minimal clones~\cite{rosenberg1986minimal}, here presented in a slightly condensed form specifically for projective graphs.

\begin{theorem}\cite{rosenberg1986minimal}
Let $G$ be a non-projective graph. Then $\pol(G)$ contains a function $f$ of one of the following type:

\begin{enumerate}
    \item $f \colon (x,y,z)\mapsto x+y+z$, where $(V_G,+)$ is the additive group of a $\mathbb{F}_2$-vector space.
    \item $f$ is a ternary majority operation, i.e., $\forall (x,y)\in (V_G)^2, f(x,x,y)=f(x,y,x)=f(y,x,x)=x$.
    \item $f$ is a {\em semiprojection} of arity $m\ge 2$, i.e. $f$ is not a projection, and there exists $i\in [m]$ such that $\forall (x_1,\dots,x_m)\in (V_G)^m, |\{x_1,\dots,x_m\}|<m\implies f(x_1,\dots,x_m)=x_i$. 
\end{enumerate}

\end{theorem}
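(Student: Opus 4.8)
The plan is to \emph{derive} this condensed statement from the full, five\nobreakdash-type version of Rosenberg's classification of minimal clones~\cite{rosenberg1986minimal}, which I take as the main external input, and then to do the bookkeeping that collapses the five types into the three listed here. The key preliminary observation is that the idempotent polymorphisms of $G$ form a clone: writing $\mathcal{I}$ for the set of all idempotent operations on $V_G$ (those $g$ with $g(x,\ldots,x)=x$), both $\pol(G)$ and $\mathcal{I}$ are clones, hence so is their intersection $\pol(G)\cap\mathcal{I}$, and it contains all projections. I would then invoke the defining property of non\nobreakdash-projectivity: by hypothesis $G$ admits an idempotent polymorphism that is not a projection, so $\pol(G)\cap\mathcal{I}$ \emph{strictly} contains the clone $J$ of projections.

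Next I would descend to a minimal clone. On a finite domain, every clone properly containing $J$ contains a minimal clone (a standard fact in the theory of minimal clones, and an ingredient of Rosenberg's framework); applying this to $\pol(G)\cap\mathcal{I}$ yields a minimal clone $\mathcal{M}\subseteq\pol(G)\cap\mathcal{I}$. The crucial point is that $\mathcal{M}$ inherits idempotency from $\mathcal{I}$, so every operation of $\mathcal{M}$, in particular a generator, is idempotent. I would then apply the full classification, which asserts that $\mathcal{M}$ is generated by a single operation $f$ of one of five types: (a) a unary retraction or a unary permutation of prime order; (b) a binary idempotent operation; (c) a ternary majority; (d) the ternary operation $x+y+z$ over an elementary abelian $2$\nobreakdash-group; or (e) an $m$\nobreakdash-ary semiprojection with $m\ge 3$.

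The remaining work is to match these five types against the three in the statement. Type (a) is excluded at once: a unary idempotent operation satisfies $f(x)=x$, i.e.\ it is the identity and generates only $J$, contradicting the minimality of $\mathcal{M}$ (which strictly contains $J$); hence $\mathcal{M}$ is not of unary type. Types (c) and (d) are precisely the majority of item 2 and the minority of item 1, and type (e) is the semiprojection of item 3 in the case $m\ge 3$. Finally, type (b) collapses into item 3 with $m=2$: for a binary operation the condition $|\{x_1,x_2\}|<2$ forces $x_1=x_2$, on which the semiprojection requirement $f(x_1,x_2)=x_i$ is exactly idempotency $f(x,x)=x$ and is vacuous off the diagonal, so a binary idempotent non\nobreakdash-projection is the same object as a binary semiprojection. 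Thus the generator $f$ of $\mathcal{M}$ is of one of the three stated types, and since $\mathcal{M}\subseteq\pol(G)$, this $f$ is the desired polymorphism of $G$.

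As for the main obstacle: essentially all of the genuine difficulty lives inside Rosenberg's classification itself, a deep theorem of universal algebra that I would cite rather than reprove. Within the present reduction the only steps requiring care are (i) checking that passing to idempotent operations keeps us inside a clone and that non\nobreakdash-projectivity makes that clone strictly larger than $J$, and (ii) the correct identification of Rosenberg's binary type with the arity\nobreakdash-$2$ case of a semiprojection, which is exactly what allows the three listed types to capture every idempotent minimal clone.
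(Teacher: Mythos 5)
The paper offers no proof of this statement at all --- it is quoted from Rosenberg's classification ``in a slightly condensed form'' --- and your derivation supplies exactly the routine bookkeeping that this condensation requires: passing to the idempotent clone $\pol(G)\cap\mathcal{I}$ (strictly above the projections by Larose's definition of projectivity, which is the one the paper uses), descending to a minimal clone, ruling out Rosenberg's unary type by idempotency, and absorbing his binary idempotent type as the $m=2$ case of the stated semiprojection notion. Your argument is correct and matches the paper's implicit reduction; the only ingredient you use beyond the cited classification --- that every nontrivial clone on a finite set lies above a minimal clone --- is indeed the standard atomicity fact for the clone lattice on finite domains.
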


We know from the algebraic formulation of the \Csp{} dichotomy theorem (see, e.g., the survey by Barto et al.~\cite{barto2017}) that if $G$ is a graph where $\pol(G)$ contains a polymorphism of type 1 or 2, then $G$ is bipartite.
We therefore derive the following corollary.

\begin{corollary}\label{cor:semiprojection_projectivity}

Let $G$ be a core on at least 3 vertices such that $\pol(G)$ does not contain any semiprojections. Then, $G$ is a projective core.

\end{corollary}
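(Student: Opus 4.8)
The plan is to prove Corollary~\ref{cor:semiprojection_projectivity} by invoking Rosenberg's classification theorem together with the remark immediately preceding the statement. Recall that by definition a graph $G$ is projective precisely when every idempotent polymorphism is a projection, and that by Okrasa and Rz\c ażewski's result a core $G$ has all polymorphisms essentially at most unary if and only if it is projective. So the natural route is the contrapositive: assume $G$ is \emph{not} projective and derive that $\pol(G)$ must contain a semiprojection, contradicting the hypothesis.

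First I would apply Rosenberg's theorem (the one stated just above the corollary) to the non-projective graph $G$: it guarantees that $\pol(G)$ contains a function $f$ of one of the three listed types --- a Mal'cev-like sum over an $\mathbb{F}_2$-vector space, a ternary majority operation, or a semiprojection. Next I would rule out types 1 and 2. This is exactly what the sentence preceding the corollary supplies: we are told (citing the algebraic \Csp{} dichotomy, e.g.\ Barto et al.~\cite{barto2017}) that if $\pol(G)$ contains a polymorphism of type 1 or type 2, then $G$ is bipartite. But a bipartite graph has core $K_2$ (or $K_1$), which has at most $2$ vertices, contradicting the hypothesis that $G$ is a core on at least $3$ vertices. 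Hence neither type 1 nor type 2 can occur, and the function $f$ furnished by Rosenberg must be of type 3, i.e.\ a semiprojection.

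This directly contradicts the hypothesis that $\pol(G)$ contains no semiprojection. Therefore the assumption that $G$ is non-projective is untenable, so $G$ is projective; being additionally a core by hypothesis, $G$ is a projective core, as required.

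The argument is essentially a short logical assembly of previously stated results, so I do not expect a serious technical obstacle. The only point that requires a little care is the elimination of types 1 and 2: one must make sure the ``bipartite'' conclusion from the dichotomy-theoretic remark genuinely clashes with ``core on at least $3$ vertices''. The clean way to phrase this is to note that a bipartite core is $K_1$ or $K_2$ (since by Theorem~\ref{thm:Complexity H-COLORING} bipartite graphs have trivial cores), so it cannot have $3$ or more vertices; this closes the gap without any computation.
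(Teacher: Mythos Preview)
Your proposal is correct and matches the paper's approach exactly: the paper does not even spell out a proof, but simply states that the corollary is derived from Rosenberg's classification together with the observation that polymorphisms of type~1 or~2 force $G$ to be bipartite. Your contrapositive argument, including the observation that a bipartite core has at most two vertices, is precisely the intended reasoning.
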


In order to apply Corollary \ref{cor:semiprojection_projectivity} to the graphs $G_1,\dots,G_6$, we carry out a reasoning in two steps:

\begin{itemize}
    \item We verified by computer search~\cite{pcptools} that $\pol(G_1),\dots,\pol(G_6)$ do not contain any semiprojection \underline{of arity $2$ and $3$}.
    \item We prove that $\pol(G_1),\dots,\pol(G_6)$ do not contain any semiprojection \underline{of arity $\ge 4$}.
\end{itemize}

The exclusion of non-trivial semiprojections of arity $\ge 4$ is enabled by the following lemma.

\begin{lemma}\label{lem:mindegree_semiprojections}

Let $G$ be a core graph on at most $n\ge 3$ vertices, and denote by $\delta>0$ the minimal degree of a vertex in $G$. Let $m:=\lfloor \frac{n-1}{\delta} \rfloor+1$ ($m$ is an integer that satisfies $1+m\delta>n$). Then, $\pol(G)$ does not contain any semiprojection of arity $\ge m+1$.

\end{lemma}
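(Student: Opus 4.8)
The plan is to show that any semiprojection of sufficiently large arity must be a projection, by exploiting the interplay between the arity, the minimal degree $\delta$, and the pigeonhole principle. Suppose $f\in\pol(G)$ is a semiprojection of arity $m'\ge m+1$, where $m=\lfloor\frac{n-1}{\delta}\rfloor+1$; by definition there is an index $i\in[m']$ such that $f(x_1,\dots,x_{m'})=x_i$ whenever the tuple $(x_1,\dots,x_{m'})$ takes at most $m'-1$ distinct values. My goal is to derive a contradiction with the assumption that $f$ is \emph{not} a projection, i.e.\ to show that $f$ must in fact equal $\pi_i^{m'}$ even on tuples with all $m'$ values distinct.

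First I would fix an arbitrary tuple $(x_1,\dots,x_{m'})$ of pairwise distinct vertices and try to force $f(x_1,\dots,x_{m'})=x_i$. The natural device is to produce a neighbor: pick some vertex $y$ adjacent to $x_i$ in $G$. Since $f$ is a polymorphism, feeding $f$ columns that are edges of $G$ yields an edge of $G$; so I would build an $m'\times 2$ matrix whose two columns are edges and whose $i$-th row is $(x_i,y)$, arranging that each of the other rows is also an edge, and such that one of the two resulting columns is a tuple with a repeated value (hence evaluated by $f$ to $x_i$ via the semiprojection property). The point of the bound $1+m\delta>n$ is counting: the vertex $x_i$ has at least $\delta$ neighbors, and I have $m'\ge m+1$ coordinates to play with, so iterating over neighbors across coordinates lets me ``cover'' the situation enough that some column is forced to collapse a value. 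Concretely, the inequality $1+m\delta > n$ guarantees that one cannot keep all $m'$ coordinates distinct while simultaneously keeping every coordinate inside the (small) neighborhood structure, so a repetition is unavoidable and the semiprojection rule kicks in.

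The technical heart, and the step I expect to be the main obstacle, is the combinatorial construction of the matrix witnessing that $f$ behaves as $\pi_i^{m'}$ on an all-distinct tuple. One must choose neighbors $y_j$ of each $x_j$ so that the auxiliary column either already repeats a value (triggering the semiprojection identity) or can be related back, through edge-preservation of $f$, to the value $x_i$. Because $G$ is a \emph{core} on $n$ vertices with minimal degree $\delta$, and because $m$ was chosen minimally so that $1+m\delta>n$, the neighbor-choices are constrained enough that the pigeonhole principle applies: among $1+m\delta$ ``slots'' (one for $x_i$ itself plus the $m\delta$ neighbor-slots) at most $n$ distinct vertices can appear, forcing a coincidence. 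Propagating this coincidence through the polymorphism property and the semiprojection definition yields $f(x_1,\dots,x_{m'})=x_i$ for the distinct tuple as well, so $f=\pi_i^{m'}$ is a projection, contradicting that $f$ is a genuine semiprojection. Hence no semiprojection of arity $\ge m+1$ lies in $\pol(G)$.
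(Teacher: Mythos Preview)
Your plan has the right ingredients---the polymorphism condition, the semiprojection identity, and a pigeonhole count involving $\delta$---but the assembly has a genuine gap. After you build the companion tuple $Y=(y_1,\dots,y_{m'})$ with each $(x_j,y_j)\in E_G$ and a repeated entry, the semiprojection rule gives $f(Y)=y_i$, and the polymorphism condition gives $(f(X),y_i)\in E_G$. That is all you get: $f(X)$ is a \emph{neighbor} of $y_i$, not $f(X)=x_i$. Your sentence ``propagating this coincidence \dots\ yields $f(x_1,\dots,x_{m'})=x_i$'' is precisely the step that is not justified. You also misplace the role of the core hypothesis: it is not that being a core ``constrains the neighbor-choices'' in the pigeonhole step---the pigeonhole works for any graph with minimum degree $\delta$---rather, the core property is what ultimately turns an adjacency statement into an equality (or a contradiction).

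The paper runs the argument in the opposite direction, which cleanly closes this gap. Instead of fixing an arbitrary all-distinct tuple and trying to force $f$ to project, it starts from a \emph{witness} $(a_1,\dots,a_M)$ with $f(a_1,\dots,a_M)=a\neq a_1$. The core property is invoked here: if $N_G(a_1)\subseteq N_G(a)$ then sending $a_1\mapsto a$ would be a non-bijective endomorphism, impossible for a core; so one may pick $x_1\in N_G(a_1)\setminus N_G(a)$. Now apply pigeonhole to the sets $\{x_1\},N_G(a_2),\dots,N_G(a_M)$, whose total size is at least $1+(M-1)\delta\ge 1+m\delta>n$, to obtain $x_2,\dots,x_M$ with $x_j\in N_G(a_j)$ and a coincidence among $x_1,\dots,x_M$. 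Then $f(x_1,\dots,x_M)=x_1$ by the semiprojection rule, and the polymorphism condition forces $(a,x_1)\in E_G$---contradicting the choice of $x_1\notin N_G(a)$. Note how the ``$+1$'' in $1+m\delta$ comes from the fixed singleton $\{x_1\}$, not from $x_i$ sitting in your original tuple; your slot-count does not match your setup. Your approach can in fact be repaired (vary $y_i$ over all of $N_G(x_i)$ to obtain $N_G(x_i)\subseteq N_G(f(X))$ and then use the core property), but as written the decisive step is missing.
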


\begin{proof}

First, note that since $G$ is a core with $G\neq K_1$, $G$ has no isolated vertex. It follows that $\delta>0$.
Assume there is a semiprojection $f\in\pol(G)$ of arity $M\ge m+1$. Thus, $f$ is not a projection.
We can assume, up to permute the coordinates, that $f$ is a semiprojection on the first coordinate, i.e.,
\[\forall (x_1,\dots,x_M)\in (V_G)^M, |\{x_1,\dots,x_M\}|<M \implies f(x_1,\dots,x_M)=x_1.\]
Since $f$ is not the projection on the first coordinate, there exists $(a_1,\dots,a_M)\in (V_G)^M$ such that $f(a_1,\dots,a_M) = a \neq a_1$.
For each vertex $u$ of $G$, let $N_G(u):=\{ v\in V_G\mid (u,v)\in E_G\}$ be the {\em open neighborhood of $u$ in $G$}.
We claim that $N_G(a_1)\setminus N_G(a)\neq \emptyset$. Indeed if we assume by contradiction that $N_G(a_1)\subseteq N_G(a)$, then the function $h:V_G\mapsto V_G$ that maps $a_1$ to $a$ and that leaves the rest of the graph unchanged would be a non-bijective (since $a_1\neq a$) $G$-coloring of $G$, contradicting the fact that $G$ is a core.
We can therefore take $x_1\in N_G(a_1)\setminus N_G(a)$.
Now, remark that, for cardinality reasons, the sets $\{x_1\},N_G(a_2),\dots,N_G(a_M)$ can not be pairwise disjoint: because they are all contained in $V_G$ and because

$$|\{x_1\}|+\sum\limits_{j=2}^M|N_G(a_j)| \ge 1+ (M-1)\delta \ge 1+m\delta > n=|V_G|.$$

We can therefore consider $(x_2,\dots,x_M) \in N_G(x_2)\times\dots\times N_G(x_M)$, such that there exists $(j_0,j_1)\in [M]^2$ with $j_0\neq j_1$ and $x_{j_0}=x_{j_1}$.

\begin{itemize}
    \item On the one hand since $x_{j_0}=x_{j_1}$, we have $|\{ x_1,\dots,x_M \}|<M$. Since $f$ is a semiprojection on the first coordinate, we deduce that $f(x_1,\dots,x_M)=x_1$.

    \item On the other hand, since we have for all $j\in [M]$ that $(a_j,x_j)^{\top}\in E_G$ (because by definition, $x_j\in N_G(a_j)$). Since $f\in\pol(G)$, we deduce that $( f(a_1,\dots,a_M) , f(x_1,\dots,x_M) )^{\top} \in E_G$. 
\end{itemize}

We obtain that $(a,x_1) \in E_G$ (recall that we defined $a:=f(a_1,\dots,a_M)$), contradicting the definition of $x_1$ (that $x_1\notin N_G(a)$).
Hence, a semiprojection of arity $\ge m+1$ cannot exist.
\end{proof}

By Lemma \ref{lem:mindegree_semiprojections}, observing that the minimal degree in the each of the core graphs $G_1,\dots,G_6$ is $3$, we deduce that there is no semiprojection in the $\pol(G_i)$ with $i\in [6]$ of arity $\ge (\lfloor \frac{7-1}{3}\rfloor+1)+1=4$.

Our main result in this theorem now follows by Theorem~\ref{thm:Classification7Cores}, Theorem~\ref{thm:Trivial7coresProjective}, and Corollary~\ref{cor:semiprojection_projectivity}.

\begin{theorem}\label{thm:conjecture_true_7_vertices}

All (indecomposable) cores on $7$ vertices are projectives.
    
\end{theorem}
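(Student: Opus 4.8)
The plan is to reduce the statement entirely to the groundwork already laid in this section. By Theorem~\ref{thm:Classification7Cores}, there are exactly ten cores on $7$ vertices up to isomorphism: the four ``trivial'' cores $K_7,C_7,\overline{C_7},C_5+2$, and the six ``sporadic'' cores $G_1,\dots,G_6$. Since the conjecture concerns indecomposable cores and asks us to establish projectivity, it suffices to verify projectivity for each of these ten graphs. First I would dispatch the four trivial cores by citing Theorem~\ref{thm:Trivial7coresProjective}, which already establishes that $K_7,C_7,\overline{C_7},C_5+2$ are projective cores (via Theorems~\ref{thm:CliqueProjective}, \ref{thm:cycle_core}, \ref{thm:ComplementCycleProjectiveCore}, and~\ref{thm:C5+pProjective} respectively). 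Thus the entire content of the theorem rests on the six sporadic cores.

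For the sporadic graphs $G_1,\dots,G_6$, the strategy is to invoke Corollary~\ref{cor:semiprojection_projectivity}: a core on at least $3$ vertices whose polymorphism clone contains no semiprojection is automatically projective. So the task reduces to ruling out \emph{all} semiprojections in $\pol(G_i)$ for each $i\in[6]$. This I would split by arity, exactly as the surrounding text prepares. For arities $2$ and $3$, I would rely on the cited computer search~\cite{pcptools}, which has verified the absence of any semiprojection of those arities. For arities $\ge 4$, I would apply Lemma~\ref{lem:mindegree_semiprojections}: each $G_i$ has minimal degree $\delta=3$ and $n=7$ vertices, so the lemma forbids semiprojections of arity $\ge \lfloor\frac{7-1}{3}\rfloor+1+1=4$. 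Combining the two ranges shows $\pol(G_i)$ contains no semiprojection of \emph{any} arity, whence each $G_i$ is a projective core by Corollary~\ref{cor:semiprojection_projectivity}.

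The main obstacle is not mathematical depth but rather the reliance on the external computer search for the small-arity cases, which is the only step not reduced to a clean structural argument; its correctness is taken on trust from~\cite{pcptools}. A secondary point to check carefully is that the degree hypothesis $\delta=3$ genuinely holds for all six sporadic cores as drawn in Figure~\ref{fig:Sporadic7Cores} (so that the arithmetic in Lemma~\ref{lem:mindegree_semiprojections} yields the threshold $4$ uniformly), and that each $G_i$ has at least $3$ vertices so Corollary~\ref{cor:semiprojection_projectivity} applies. Finally, I would assemble everything: by Theorem~\ref{thm:Classification7Cores} every core on at most $7$ vertices is among the listed graphs, and by the two arguments above every one of them is projective, so no indecomposable core on at most $7$ vertices is a counterexample. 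Since the Okrasa and Rz\c a\.zewski Conjecture asserts precisely that connected non-trivial indecomposable cores on at least $3$ vertices are projective, this confirms the conjecture on all graphs with at most $7$ vertices, completing the proof.
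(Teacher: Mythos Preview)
Your proposal is correct and follows essentially the same approach as the paper: invoke Theorem~\ref{thm:Classification7Cores} for the enumeration, Theorem~\ref{thm:Trivial7coresProjective} for the four trivial cores, and then handle $G_1,\dots,G_6$ via Corollary~\ref{cor:semiprojection_projectivity} by combining the computer search for arities $2,3$ with Lemma~\ref{lem:mindegree_semiprojections} (using $\delta=3$, $n=7$) for arities $\ge 4$. The only minor remark is that your final paragraph drifts into proving Theorem~\ref{thm:ConjectureTrueSevenVertices} rather than the stated Theorem~\ref{thm:conjecture_true_7_vertices}, but this does not affect the argument.
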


Hence, we have verified the Okrasa and Rz\c ażewski Conjecture for all graphs with at most $7$ vertices, and thereby proved Theorem~\ref{thm:ConjectureTrueSevenVertices}.

\section{Conclusion and Future Research}\label{sec:conclusion}

In this paper, we have investigated the inclusion structure of the sets of partial polymorphisms of graphs, and proved that for all pairs of graphs $H$,$H'$ on the same set of vertices, $\ppol(H)\subseteq\ppol(H')$ implies that $H=H'$ or $E_{H'}=\emptyset$. Since this inclusion structure is trivial, it is natural to generalize the question and investigate inclusions of the form $\ppol(H) \subsetneq \ppol( R)$, where $H$ is a graph, but where $R$ is an arbitrary relation. We deemed the case when \Csp$(R)$ was \NP-complete to be of particular interest since the problem \Csp$(R)$ then bounds the complexity of \textsc{$H$-Coloring} from below, in a non-trivial way. We then identified a condition depending on the length of the shortest odd cycle in $H$ (the {\em odd-girth} of $H$), and proved that there exists such an $n$-ary relation $R$ if and only if the odd-girth of $H$ is $\leq n$, otherwise, \Csp$(R)$ must be trivial.
In an attempt to better understand the algebraic invariants of graphs, we then proceeded by studying total polymorphisms of graphs, with a particular focus on projective graphs, where we used the algebraic approach to obtain simplified and uniform proofs. Importantly, we used our algebraic tools to verify the Okrasa and Rz\c ażewski Conjecture for all graphs of at most 7 vertices.

Concerning future research perhaps
the most pressing question is whether we can use our algebraic results to prove (or disprove) the Okrasa and Rz\c ażewski Conjecture for graphs with more than $7$ vertices. %
By Corollary \ref{cor:projectivity}, the Okrasa and Rz\c ażewski Conjecture  is equivalent to the following statement.

\begin{conjecture}
Let $H$ be a connected core on $k\geq 3$ vertices. Then, $H$ is indecomposable if and only if $\text{NEQ}_k\in\cclone{E_H}$.
\end{conjecture}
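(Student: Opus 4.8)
The plan is to establish the claimed equivalence by a direct substitution, leaning entirely on Corollary~\ref{cor:projectivity} to trade the notion of projectivity for the pp-definability condition $\text{NEQ}_k\in\cclone{E_H}$. Note that this is an \emph{equivalence of two open conjectures} over the same class of graphs rather than a resolution of either; concretely, it suffices to show that, for every admissible $H$, the biconditional ``$H$ is projective $\iff$ $H$ is indecomposable'' holds precisely when ``$\text{NEQ}_k\in\cclone{E_H}\iff H$ is indecomposable'' holds.

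First I would fix $H$ to be a connected core on $k\geq 3$ vertices and, up to isomorphism, identify its vertex set with $[k]$ so that Corollary~\ref{cor:projectivity} applies verbatim. That corollary yields ``$E_H$ pp-defines $\text{NEQ}_k$'' $\iff$ ``$H$ is a projective core''. The only bookkeeping step is to observe that, since $H$ is \emph{already} a core by hypothesis, the phrase ``$H$ is a projective core'' collapses to ``$H$ is projective''. Under the standing hypotheses I thus obtain the clean biconditional $\text{NEQ}_k\in\cclone{E_H}\iff H\text{ is projective}$. Substituting this into the Okrasa and Rz\c a\.zewski Conjecture then produces the target statement, and the reverse substitution recovers the original: replacing ``$H$ is projective'' by ``$\text{NEQ}_k\in\cclone{E_H}$'' in ``$H$ projective $\iff$ $H$ indecomposable'' gives exactly ``$\text{NEQ}_k\in\cclone{E_H}\iff H$ indecomposable''. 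The two formulations therefore hold or fail together on each admissible $H$, which is the desired equivalence.

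The only point requiring care --- rather than a genuine obstacle --- is matching the hypotheses of the two statements. The Okrasa and Rz\c a\.zewski Conjecture is phrased for a connected \emph{non-trivial} core on at least $3$ vertices, whereas the target speaks of a connected core on $k\geq 3$ vertices. I would simply remark that any connected graph on $k\geq 3$ vertices necessarily contains an edge and is hence non-trivial, so the two hypothesis classes coincide and Corollary~\ref{cor:projectivity} is available throughout. With this alignment in place the argument is purely a translation through Corollary~\ref{cor:projectivity}, and no further combinatorial or algebraic input is needed; the content of the claim lies entirely in that corollary, which has already been proved.
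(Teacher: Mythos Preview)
Your proposal is correct and follows exactly the paper's approach: the paper simply states ``By Corollary~\ref{cor:projectivity}, the Okrasa and Rz\c a\.zewski Conjecture is equivalent to the following statement'' and presents the conjecture, and your argument is precisely the unpacking of that one-line justification (including the minor bookkeeping about hypotheses and the vertex set being $[k]$).
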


To advance our understanding of the fine-grained complexity of \textsc{$H$-Coloring}, it would also be interesting to settle the following question.

\begin{question}
Let $H$ be a projective core. Describe $\ppol(H)$.
\end{question}

For instance, %
is it possible to relate $\ppol(H)$ with the {\em treewidth} of $H$? More generally, are there structural properties of classes of (partial) polymorphisms that translate into bounded width classes of graphs \cite{GuillemotM14}? These questions constitute topics that we are currently investigating.

\bibliographystyle{plain}
\bibliography{references}

\appendix

\section{Classification of 7-cores}\label{app:7cores}

The goal of this appendix is to prove Theorem \ref{thm:Classification7Cores}.

\ClassificationSevenCores*

\begin{center}
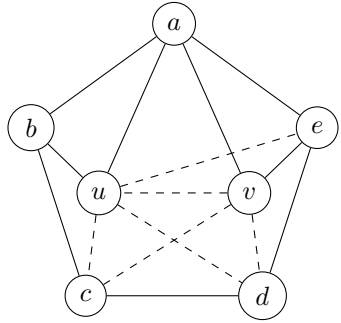
\begin{figure}
\centering

\begin{tikzpicture}
\tikzstyle{vertex}=[draw,shape=circle];

\begin{scope}

    \node[vertex] (a) at (90:2) {$a$};
    \node[vertex] (e) at (-72+90:2) {$e$}
        edge (a);
    \node[vertex] (d) at (-2*72+90:2) {$d$}
        edge (e);
    \node[vertex] (c) at (-3*72+90:2) {$c$}
        edge (d);
    \node[vertex] (b) at (-4*72+90:2) {$b$}
        edge (a)
        edge (c);
    \node[vertex] (u) at (-1,-0.25) {$u$}
        edge (a)
        edge (b)
        edge[dashed] (c)
        edge[dashed] (d)
        edge[dashed] (e);
    \node[vertex] (v) at (1,-0.25) {$v$}
        edge (a)
        edge[dashed] (c)
        edge[dashed] (d)
        edge (e)
        edge[dashed] (u);

\end{scope}

\end{tikzpicture}
\caption{Up to isomorphism, any ``sporadic'' $7$-core (not $K_7,C_7,\overline{C_7}$ or $C_5+2$) $(\{a, b, c, d, e, u, v\}, E$) must satisfy this motif: $\{a,b,c,d,e\}$ induces a $C_5$, $(b,v) \notin E$, all dashed edges are allowed (as long as $(u,v) \notin E$, or $(e,u) \notin E$), that $(c,u) \in E$ or $(c,v) \in E$, and that $(d,u) \in E$ or $(d,v) \in E$. }
\label{fig:AllPossibleGraphs}
\end{figure}
\end{center}

The high-level arguments are as follows.

\begin{enumerate}
    \item Enumerate all the graphs compatible with the motif described in Figure \ref{fig:AllPossibleGraphs}, and keep only, among these graphs, the cores.
    \item Keep exactly one representative for each class of isomorphism.
    \item Prove that all the {\em sporadic} 7-cores --- core graphs on 7 vertices that are not $K_7,C_7,\overline{C_7}$ and $C_5+2$ --- have to be compatible with the motif described in Figure~\ref{fig:AllPossibleGraphs}.
\end{enumerate}

We begin with the first step. In Figure \ref{fig:NoteuNotuv}, \ref{fig:uvNoteu} and \ref{fig:euNotuv}, we do a case analysis for all compatible graphs, depending on the 3 possible cases for $(u,v)$ and $(u,e)$. We eliminate the non-cores by proving that their core is either $K_3,K_4$ or $C_5+1$, and we both show their core as an induced subgraph (represented by the thick edges), as well as giving a homomorphism to their core, represented by the colors on the vertices. Reciprocally, we ensure that the remaining graphs are cores due to Lemma~\ref{lem:SufficientCondition7Core}.

\begin{lemma}\label{lem:SufficientCondition7Core}

Let $G$ be a graph on $7$ vertices such that:

\begin{itemize}
    \item $G$ is not $3$-colorable,
    \item $G$ has no induced $K_4$, and
    \item $G$ has no vertex of degree $\ge 5$.
\end{itemize}

Then, $G$ is a core.

\end{lemma}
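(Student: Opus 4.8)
The statement to prove is Lemma~\ref{lem:SufficientCondition7Core}: a graph $G$ on $7$ vertices that is not $3$-colorable, has no induced $K_4$, and has no vertex of degree $\ge 5$ must be a core.

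\textbf{Plan.} The natural strategy is to assume for contradiction that $G$ is not a core, which by the characterization given earlier (a graph is a core iff every homomorphism from it to itself is bijective) means there is a non-bijective $G$-coloring of $G$, equivalently $core(G)=G'$ for some proper induced subgraph $G'$ on at most $6$ vertices. Since $G'$ is itself a core, I would invoke the complete classification of cores on at most $6$ vertices established in Figure~\ref{fig:CoresAtMost6Vertices} and Theorems~\ref{thm:Cores6Vertices} and Remark~\ref{rem:CoresAtMost5Vertices}: the only cores on at most $6$ vertices are $K_1, K_2, K_3, K_4, K_5, K_6, C_5$, and $C_5+1$. Thus $core(G)$ must be one of these, and the goal is to rule out every possibility under the three hypotheses.

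\textbf{Key steps.} First I would handle the clique cores. If $core(G)=K_k$ for some $k$, then $G$ contains an induced $K_k$ and is $k$-colorable. The hypothesis of no induced $K_4$ immediately excludes $K_4,K_5,K_6$. The cores $K_1,K_2$ would make $G$ bipartite ($1$- or $2$-colorable), hence $3$-colorable, contradicting the first hypothesis; and $K_3$ would make $G$ $3$-colorable, again a contradiction. So no clique can be the core. Next I would address $core(G)=C_5$: this would make $G$ homomorphic to $C_5$, hence $3$-colorable (as $C_5$ is $3$-colorable), contradicting non-$3$-colorability. The only remaining candidate is $core(G)=C_5+1$, which has $6$ vertices, so $G$ would retract onto an induced copy of $C_5+1$ by collapsing exactly one vertex; this is where the degree hypothesis must come into play, since $C_5+1$ contains a universal vertex of degree $5$ within a $6$-vertex graph, and the folding/retraction structure forces a high-degree vertex in $G$ that the hypothesis $\Delta(G)\le 4$ forbids.

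\textbf{Main obstacle.} The delicate case is ruling out $core(G)=C_5+1$. The clique and $C_5$ cases follow cleanly from the colorability and induced-$K_4$ hypotheses, but eliminating $C_5+1$ requires a careful degree argument: one must show that if $G$ retracts onto an induced $C_5+1$ by identifying the seventh vertex $w$ with some vertex of the $C_5+1$, then either $w$ maps to the universal vertex (forcing that vertex or its image to have degree $\ge 5$ once $w$'s neighborhood is accounted for) or $w$ maps to a cycle vertex, and in analyzing the neighborhood of $w$ together with the copy of $C_5+1$ one is driven to a vertex of degree at least $5$ in $G$. I expect this neighborhood bookkeeping --- tracking exactly which vertices of the copy gain edges to $w$ and showing the universal vertex ends up with degree $\ge 5$ --- to be the step requiring the most care, and it is precisely where the hypothesis $\delta$ on maximum degree $\le 4$ is essential. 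I would finish by concluding that since every possible value of $core(G)$ leads to a contradiction, $G$ must equal its own core.
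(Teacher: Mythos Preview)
Your overall strategy matches the paper's proof exactly: assume $G$ is not a core, deduce $core(G)$ is a core on at most $6$ vertices, invoke the classification (Remark~\ref{rem:CoresAtMost5Vertices} and Theorem~\ref{thm:Cores6Vertices}), and eliminate each candidate using one of the three hypotheses.

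However, you are manufacturing difficulty in the $C_5+1$ case. No retraction bookkeeping is needed: if $core(G)=C_5+1$, then $G$ contains an \emph{induced} copy of $C_5+1$, and the universal vertex of that copy already has degree $5$ inside the induced subgraph, hence degree $\ge 5$ in $G$. This immediately contradicts the third hypothesis. The paper dispatches this case in a single line, and so can you; there is no need to analyze where the seventh vertex maps or track neighborhoods.
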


\begin{proof}
Suppose, for the sake of contradiction, that $G$ is not a core. This implies that $core(G)$ must be a core graph with no more than 6 vertices. According to Remark~\ref{rem:CoresAtMost5Vertices} and Theorem \ref{thm:Cores6Vertices} in Section \ref{sec:Conjecture6Vertices}, it follows that $core(G)$ is one of the graphs in the set $\{K_n, n\in [6]\} \cup \{C_5,C_5+1\}$, as illustrated in Figure \ref{fig:CoresAtMost6Vertices}.

Given that $G$ is not 3-colorable, we can conclude that $core(G) \notin \{K_1,K_2,K_3,C_5\}$. Furthermore, the absence of an induced $K_4$ in $G$ means $core(G) \notin \{K_4,K_5,K_6\}$. Additionally, the fact that $G$ does not have any vertex with degree 5 or higher eliminates the possibility of $core(G)$ being $C_5+1$.

This leads to a contradiction: hence, $G$ must indeed be a core.
\end{proof}

We now continue with the second step, and study the isomorphisms between the obtained core graphs.
Up to isomorphism, we find 6 ``sporadic'' 7-cores:

\begin{itemize}
    \item 11 edges: only $G_1$,
    \item 12 edges: $G_2$, $G_4$, $G_5$,
    \item 13 edges: $G_3$, $G_6$.
\end{itemize}

We discuss the question of possible isomorphisms here.  Recall that isomorphic graphs always have  the same number of edges.

\begin{itemize}
    \item $G_4$ is not isomorphic to $G_2$ because if we keep only vertices of degree $4$, $G_4$ becomes a triangle and $G_2$ a $P_3$.
    \item $G_5$ is not isomorphic to $G_2$ because if we keep only vertices of degree $4$, $G_5$ becomes a triangle and $G_2$ a $P_3$.
    \item $G_5$ is not isomorphic to $G_4$ because if we keep only vertices of degree $3$, $G_4$ becomes a $P_4$ and $G_5$ a star $K_{1,3}$.
    \item $G_6$ is not isomorphic to $G_3$ because in $G_6$ the 2 vertices of degree $3$ are adjacent, and not in $G_3$.
    \item $G'_4$ (see Figure \ref{fig:euNotuv}) is isomorphic to $G_4$, by the isomorphism from $G'_4$ to $G_4$: $(a\ u\ b\ c\ d\ e)=\begin{pmatrix} a & b & c & d & e & u & v \\ u & c & d & e & a & b & v  \end{pmatrix}.$
    
    \item $G'_3$ (see Figure \ref{fig:euNotuv}) is isomorphic to $G_3$ through the isomorphism from $G'_3$ to $G_3$: $(a\ d\ b\ e\ c)(u\ v)=\begin{pmatrix} a & b & c & d & e & u & v \\ d & e & a & b & c & v & u \end{pmatrix}.$
\end{itemize}

\begin{center}
\begin{figure}[p]
\centering
\scalebox{0.8}{
\begin{tikzpicture}

\tikzstyle{vertex}=[draw,shape=circle];

\begin{scope}[yshift=-4cm]

    \node () at (0,0) {$(d,u)\in E, (d,v)\notin E$};

\end{scope}

\begin{scope}[yshift=-8cm]

    \node () at (0,0) {$(d,u)\notin E, (d,v)\in E$};

\end{scope}

\begin{scope}[yshift=-12cm]

    \node () at (0,0) {$(d,u)\in E, (d,v)\in E$};

\end{scope}

\begin{scope}[xshift=4cm,yshift=-2cm]

    \node () at (0,0) {$(c,u)\in E, (c,v)\notin E$};

\end{scope}

\begin{scope}[xshift=8cm,yshift=-2cm]

    \node () at (0,0) {$(c,u)\notin E, (c,v)\in E$};

\end{scope}

\begin{scope}[xshift=12cm,yshift=-2cm]

    \node () at (0,0) {$(c,u)\in E, (c,v)\in E$};

\end{scope}

\begin{scope}[xshift=4cm,yshift=-4cm]

    \node[vertex,fill=blue] (a) at (90:1) {};
    \node[vertex,fill=red] (e) at (-72+90:1) {}
        edge (a);
    \node[vertex,fill=green] (d) at (-2*72+90:1) {}
        edge (e);
    \node[vertex,fill=blue] (c) at (-3*72+90:1) {}
        edge[ultra thick] (d);
    \node[vertex,fill=green] (b) at (-4*72+90:1) {}
        edge (a)
        edge (c);
    \node[vertex,fill=red] (u) at (-0.375,0.3) {}
        edge (a)
        edge (b);
    \node[vertex,fill=green] (v) at (0.375,0.3) {}
        edge (a)
        edge (e);

    \node () at (0,-1.5) {its core is $K_3$};

\draw[ultra thick] (c)--(u);
\draw[ultra thick] (d)--(u);

\end{scope}

\begin{scope}[xshift=8cm,yshift=-4cm]

    \node[vertex,fill=blue] (a) at (90:1) {};
    \node[vertex,fill=red] (e) at (-72+90:1) {}
        edge (a);
    \node[vertex,fill=green] (d) at (-2*72+90:1) {}
        edge (e);
    \node[vertex,fill=blue] (c) at (-3*72+90:1) {}
        edge (d);
    \node[vertex,fill=green] (b) at (-4*72+90:1) {}
        edge[ultra thick] (a)
        edge (c);
    \node[vertex,fill=red] (u) at (-0.375,0.3) {}
        edge[ultra thick] (a)
        edge[ultra thick] (b);
    \node[vertex,fill=green] (v) at (0.375,0.3) {}
        edge (a)
        edge (e);

    \node () at (0,-1.5) {its core is $K_3$};

\draw (c)--(v);
\draw (d)--(u);

\end{scope}

\begin{scope}[xshift=12cm,yshift=-4cm]

    \node[vertex,fill=blue] (a) at (90:1) {};
    \node[vertex,fill=red] (e) at (-72+90:1) {}
        edge (a);
    \node[vertex,fill=green] (d) at (-2*72+90:1) {}
        edge (e);
    \node[vertex,fill=blue] (c) at (-3*72+90:1) {}
        edge[ultra thick] (d);
    \node[vertex,fill=green] (b) at (-4*72+90:1) {}
        edge (a)
        edge (c);
    \node[vertex,fill=red] (u) at (-0.375,0.3) {}
        edge (a)
        edge (b);
    \node[vertex,fill=green] (v) at (0.375,0.3) {}
        edge (a)
        edge (e);

    \node () at (0,-1.5) {its core is $K_3$};

\draw[ultra thick] (c)--(u);
\draw (c)--(v);
\draw[ultra thick] (d)--(u);

\end{scope}

\begin{scope}[xshift=4cm,yshift=-8cm]

    \node[vertex] (a) at (90:1) {};
    \node[vertex] (e) at (-72+90:1) {}
        edge (a);
    \node[vertex] (d) at (-2*72+90:1) {}
        edge (e);
    \node[vertex] (c) at (-3*72+90:1) {}
        edge (d);
    \node[vertex] (b) at (-4*72+90:1) {}
        edge (a)
        edge (c);
    \node[vertex] (u) at (-0.375,0.3) {}
        edge (a)
        edge (b);
    \node[vertex] (v) at (0.375,0.3) {}
        edge (a)
        edge (e);

    \node () at (0,-1.5) {\textbf{Core} $G_1$};

\draw (c)--(u);
\draw (d)--(v);

\draw[ultra thick] (-1.5,1.5)--(1.5,1.5)--(1.5,-2)--(-1.5,-2)--(-1.5,1.5);

\end{scope}

\begin{scope}[xshift=8cm,yshift=-8cm]

    \node[vertex,fill=blue] (a) at (90:1) {};
    \node[vertex,fill=green] (e) at (-72+90:1) {}
        edge (a);
    \node[vertex,fill=blue] (d) at (-2*72+90:1) {}
        edge (e);
    \node[vertex,fill=green] (c) at (-3*72+90:1) {}
        edge[ultra thick] (d);
    \node[vertex,fill=red] (b) at (-4*72+90:1) {}
        edge (a)
        edge (c);
    \node[vertex,fill=green] (u) at (-0.375,0.3) {}
        edge (a)
        edge (b);
    \node[vertex,fill=red] (v) at (0.375,0.3) {}
        edge (a)
        edge (e);

    \node () at (0,-1.5) {its core is $K_3$};

\draw[ultra thick] (c)--(v);
\draw[ultra thick] (d)--(v);

\end{scope}

\begin{scope}[xshift=12cm,yshift=-8cm]

    \node[vertex] (a) at (90:1) {};
    \node[vertex] (e) at (-72+90:1) {}
        edge (a);
    \node[vertex] (d) at (-2*72+90:1) {}
        edge (e);
    \node[vertex] (c) at (-3*72+90:1) {}
        edge (d);
    \node[vertex] (b) at (-4*72+90:1) {}
        edge (a)
        edge (c);
    \node[vertex] (u) at (-0.375,0.3) {}
        edge (a)
        edge (b);
    \node[vertex] (v) at (0.375,0.3) {}
        edge (a)
        edge (e);

    \node () at (0,-1.5) {\textbf{Core} $G_2$};

\draw (c)--(u);
\draw (c)--(v);
\draw (d)--(v);

\draw[ultra thick] (-1.5,1.5)--(1.5,1.5)--(1.5,-2)--(-1.5,-2)--(-1.5,1.5);

\end{scope}

\begin{scope}[xshift=4cm,yshift=-12cm]

    \node[vertex] (a) at (90:1) {};
    \node[vertex] (e) at (-72+90:1) {}
        edge (a);
    \node[vertex] (d) at (-2*72+90:1) {}
        edge (e);
    \node[vertex] (c) at (-3*72+90:1) {}
        edge (d);
    \node[vertex] (b) at (-4*72+90:1) {}
        edge (a)
        edge (c);
    \node[vertex] (u) at (-0.375,0.3) {}
        edge (a)
        edge (b);
    \node[vertex] (v) at (0.375,0.3) {}
        edge (a)
        edge (e);

    \node () at (0,-1.5) {isomorphic to $G_2$};

\draw (c)--(u);
\draw (d)--(u);
\draw (d)--(v);

\end{scope}

\begin{scope}[xshift=8cm,yshift=-12cm]

    \node[vertex,fill=blue] (a) at (90:1) {};
    \node[vertex,fill=green] (e) at (-72+90:1) {}
        edge (a);
    \node[vertex,fill=blue] (d) at (-2*72+90:1) {}
        edge (e);
    \node[vertex,fill=green] (c) at (-3*72+90:1) {}
        edge[ultra thick] (d);
    \node[vertex,fill=red] (b) at (-4*72+90:1) {}
        edge (a)
        edge (c);
    \node[vertex,fill=green] (u) at (-0.375,0.3) {}
        edge (a)
        edge (b);
    \node[vertex,fill=red] (v) at (0.375,0.3) {}
        edge (a)
        edge (e);

    \node () at (0,-1.5) {its core is $K_3$};

\draw[ultra thick] (c)--(v);
\draw (d)--(u);
\draw[ultra thick] (d)--(v);

\end{scope}

\begin{scope}[xshift=12cm,yshift=-12cm]

    \node[vertex] (a) at (90:1) {};
    \node[vertex] (e) at (-72+90:1) {}
        edge (a);
    \node[vertex] (d) at (-2*72+90:1) {}
        edge (e);
    \node[vertex] (c) at (-3*72+90:1) {}
        edge (d);
    \node[vertex] (b) at (-4*72+90:1) {}
        edge (a)
        edge (c);
    \node[vertex] (u) at (-0.375,0.3) {}
        edge (a)
        edge (b);
    \node[vertex] (v) at (0.375,0.3) {}
        edge (a)
        edge (e);

    \node () at (0,-1.5) {\textbf{Core} $G_3$};

\draw (c)--(u);
\draw (c)--(v);
\draw (d)--(u);
\draw (d)--(v);

\draw[ultra thick] (-1.5,1.5)--(1.5,1.5)--(1.5,-2)--(-1.5,-2)--(-1.5,1.5);

\end{scope}

\draw[ultra thick] (-2,-2.5)--(14,-2.5);
\draw[ultra thick] (2,-1.5)--(2,-14);

\end{tikzpicture}
}
\caption{The $9$ candidates with $(e,u)\notin E$ and $(u,v)\notin E$.}
\label{fig:NoteuNotuv}
\end{figure}
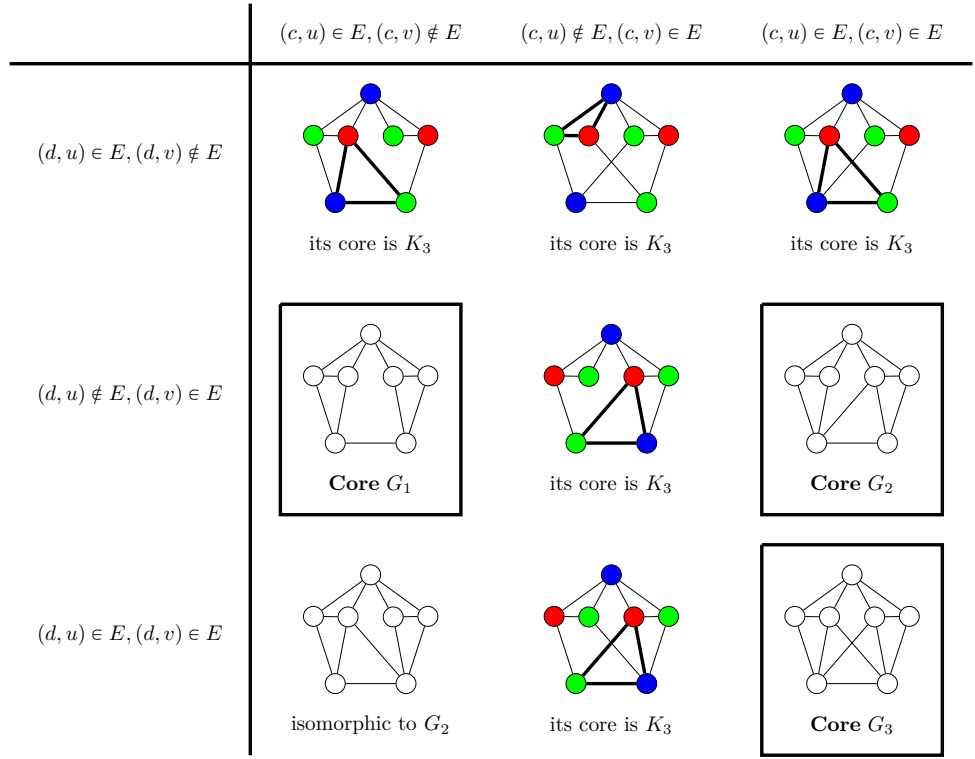
\end{center}

\begin{center}
\begin{figure}[p]
\centering
\scalebox{0.8}{
\begin{tikzpicture}

\tikzstyle{vertex}=[draw,shape=circle];

\begin{scope}[yshift=-4cm]

    \node () at (0,0) {$(d,u)\in E, (d,v)\notin E$};

\end{scope}

\begin{scope}[yshift=-8cm]

    \node () at (0,0) {$(d,u)\notin E, (d,v)\in E$};

\end{scope}

\begin{scope}[yshift=-12cm]

    \node () at (0,0) {$(d,u)\in E, (d,v)\in E$};

\end{scope}

\begin{scope}[xshift=4cm,yshift=-2cm]

    \node () at (0,0) {$(c,u)\in E, (c,v)\notin E$};

\end{scope}

\begin{scope}[xshift=8cm,yshift=-2cm]

    \node () at (0,0) {$(c,u)\notin E, (c,v)\in E$};

\end{scope}

\begin{scope}[xshift=12cm,yshift=-2cm]

    \node () at (0,0) {$(c,u)\in E, (c,v)\in E$};

\end{scope}

\begin{scope}[xshift=4cm,yshift=-4cm]

    \node[vertex,fill=blue] (a) at (90:1) {};
    \node[vertex,fill=red] (e) at (-72+90:1) {}
        edge (a);
    \node[vertex,fill=green] (d) at (-2*72+90:1) {}
        edge (e);
    \node[vertex,fill=blue] (c) at (-3*72+90:1) {}
        edge[ultra thick] (d);
    \node[vertex,fill=green] (b) at (-4*72+90:1) {}
        edge (a)
        edge (c);
    \node[vertex,fill=red] (u) at (-0.375,0.3) {}
        edge (a)
        edge (b);
    \node[vertex,fill=green] (v) at (0.375,0.3) {}
        edge (a)
        edge (e);

    \node () at (0,-1.5) {its core is $K_3$};

\draw (u)--(v);

\draw[ultra thick] (c)--(u);
\draw[ultra thick] (d)--(u);

\end{scope}

\begin{scope}[xshift=8cm,yshift=-4cm]

    \node[vertex,fill=blue] (a) at (90:1) {};
    \node[vertex,fill=red] (e) at (-72+90:1) {}
        edge (a);
    \node[vertex,fill=green] (d) at (-2*72+90:1) {}
        edge (e);
    \node[vertex,fill=blue] (c) at (-3*72+90:1) {}
        edge (d);
    \node[vertex,fill=green] (b) at (-4*72+90:1) {}
        edge (a)
        edge (c);
    \node[vertex,fill=red] (u) at (-0.375,0.3) {}
        edge[ultra thick] (a)
        edge (b);
    \node[vertex,fill=green] (v) at (0.375,0.3) {}
        edge[ultra thick] (a)
        edge (e);

    \node () at (0,-1.5) {its core is $K_3$};

\draw[ultra thick] (u)--(v);

\draw (c)--(v);
\draw (d)--(u);

\end{scope}

\begin{scope}[xshift=12cm,yshift=-4cm]

    \node[vertex,fill=blue] (a) at (90:1) {};
    \node[vertex,fill=red] (e) at (-72+90:1) {}
        edge (a);
    \node[vertex,fill=green] (d) at (-2*72+90:1) {}
        edge (e);
    \node[vertex,fill=blue] (c) at (-3*72+90:1) {}
        edge[ultra thick] (d);
    \node[vertex,fill=green] (b) at (-4*72+90:1) {}
        edge (a)
        edge (c);
    \node[vertex,fill=red] (u) at (-0.375,0.3) {}
        edge (a)
        edge (b);
    \node[vertex,fill=green] (v) at (0.375,0.3) {}
        edge (a)
        edge (e);

    \node () at (0,-1.5) {its core is $K_3$};

\draw (u)--(v);

\draw[ultra thick] (c)--(u);
\draw (c)--(v);
\draw[ultra thick] (d)--(u);

\end{scope}

\begin{scope}[xshift=4cm,yshift=-8cm]

    \node[vertex] (a) at (90:1) {};
    \node[vertex] (e) at (-72+90:1) {}
        edge (a);
    \node[vertex] (d) at (-2*72+90:1) {}
        edge (e);
    \node[vertex] (c) at (-3*72+90:1) {}
        edge (d);
    \node[vertex] (b) at (-4*72+90:1) {}
        edge (a)
        edge (c);
    \node[vertex] (u) at (-0.375,0.3) {}
        edge (a)
        edge (b);
    \node[vertex] (v) at (0.375,0.3) {}
        edge (a)
        edge (e);

    \node () at (0,-1.5) {\textbf{Core} $G_4$};

\draw (u)--(v);

\draw (c)--(u);
\draw (d)--(v);

\draw[ultra thick] (-1.5,1.5)--(1.5,1.5)--(1.5,-2)--(-1.5,-2)--(-1.5,1.5);

\end{scope}

\begin{scope}[xshift=8cm,yshift=-8cm]

    \node[vertex,fill=blue] (a) at (90:1) {};
    \node[vertex,fill=green] (e) at (-72+90:1) {}
        edge (a);
    \node[vertex,fill=blue] (d) at (-2*72+90:1) {}
        edge (e);
    \node[vertex,fill=green] (c) at (-3*72+90:1) {}
        edge[ultra thick] (d);
    \node[vertex,fill=red] (b) at (-4*72+90:1) {}
        edge (a)
        edge (c);
    \node[vertex,fill=green] (u) at (-0.375,0.3) {}
        edge (a)
        edge (b);
    \node[vertex,fill=red] (v) at (0.375,0.3) {}
        edge (a)
        edge (e);

    \node () at (0,-1.5) {its core is $K_3$};

\draw (u)--(v);

\draw[ultra thick] (c)--(v);
\draw[ultra thick] (d)--(v);

\end{scope}

\begin{scope}[xshift=12cm,yshift=-8cm]

    \node[vertex,fill=violet] (a) at (90:1) {};
    \node[vertex,fill=magenta] (e) at (-72+90:1) {}
        edge[ultra thick] (a);
    \node[vertex,fill=pink] (d) at (-2*72+90:1) {}
        edge[ultra thick] (e);
    \node[vertex,fill=teal] (c) at (-3*72+90:1) {}
        edge[ultra thick] (d);
    \node[vertex,fill=yellow] (b) at (-4*72+90:1) {}
        edge (a)
        edge (c);
    \node[vertex,fill=purple] (u) at (-0.375,0.3) {}
        edge[ultra thick] (a)
        edge (b);
    \node[vertex,fill=yellow] (v) at (0.375,0.3) {}
        edge[ultra thick] (a)
        edge[ultra thick] (e);

    \node () at (0,-1.5) {its core is $C_5+1$};

\draw[ultra thick] (u)--(v);

\draw[ultra thick] (c)--(u);
\draw[ultra thick] (c)--(v);
\draw[ultra thick] (d)--(v);

\end{scope}

\begin{scope}[xshift=4cm,yshift=-12cm]

    \node[vertex,fill=violet] (a) at (90:1) {};
    \node[vertex,fill=yellow] (e) at (-72+90:1) {}
        edge (a);
    \node[vertex,fill=pink] (d) at (-2*72+90:1) {}
        edge (e);
    \node[vertex,fill=teal] (c) at (-3*72+90:1) {}
        edge[ultra thick] (d);
    \node[vertex,fill=purple] (b) at (-4*72+90:1) {}
        edge[ultra thick] (a)
        edge[ultra thick] (c);
    \node[vertex,fill=yellow] (u) at (-0.375,0.3) {}
        edge[ultra thick] (a)
        edge[ultra thick] (b);
    \node[vertex,fill=magenta] (v) at (0.375,0.3) {}
        edge[ultra thick] (a)
        edge (e);

    \node () at (0,-1.5) {its core is $C_5+1$};

\draw[ultra thick] (u)--(v);

\draw[ultra thick] (c)--(u);
\draw[ultra thick] (d)--(u);
\draw[ultra thick] (d)--(v);

\end{scope}

\begin{scope}[xshift=8cm,yshift=-12cm]

    \node[vertex,fill=blue] (a) at (90:1) {};
    \node[vertex,fill=green] (e) at (-72+90:1) {}
        edge (a);
    \node[vertex,fill=blue] (d) at (-2*72+90:1) {}
        edge (e);
    \node[vertex,fill=green] (c) at (-3*72+90:1) {}
        edge[ultra thick] (d);
    \node[vertex,fill=red] (b) at (-4*72+90:1) {}
        edge (a)
        edge (c);
    \node[vertex,fill=green] (u) at (-0.375,0.3) {}
        edge (a)
        edge (b);
    \node[vertex,fill=red] (v) at (0.375,0.3) {}
        edge (a)
        edge (e);

    \node () at (0,-1.5) {its core is $K_3$};

\draw (u)--(v);

\draw[ultra thick] (c)--(v);
\draw (d)--(u);
\draw[ultra thick] (d)--(v);

\end{scope}

\begin{scope}[xshift=12cm,yshift=-12cm]

    \node[vertex,fill=blue] (a) at (90:1) {};
    \node[vertex,fill=green] (e) at (-72+90:1) {}
        edge (a);
    \node[vertex,fill=blue] (d) at (-2*72+90:1) {}
        edge (e);
    \node[vertex,fill=green] (c) at (-3*72+90:1) {}
        edge[ultra thick] (d);
    \node[vertex,fill=black] (b) at (-4*72+90:1) {}
        edge (a)
        edge (c);
    \node[vertex,fill=red] (u) at (-0.375,0.3) {}
        edge (a)
        edge (b);
    \node[vertex,fill=black] (v) at (0.375,0.3) {}
        edge (a)
        edge (e);

    \node () at (0,-1.5) {its core is $K_4$};

\draw[ultra thick] (u)--(v);

\draw[ultra thick] (c)--(u);
\draw[ultra thick] (c)--(v);
\draw[ultra thick] (d)--(u);
\draw[ultra thick] (d)--(v);

\end{scope}

\draw[ultra thick] (-2,-2.5)--(14,-2.5);
\draw[ultra thick] (2,-1.5)--(2,-14);

\end{tikzpicture}
}
\caption{The $9$ candidates with $(e,u)\notin E$ and $(u,v)\in E$.}
\label{fig:uvNoteu}
\end{figure}
\end{center}

\begin{center}
\begin{figure}[p]
\centering
\scalebox{0.8}{
\begin{tikzpicture}

\tikzstyle{vertex}=[draw,shape=circle];

\begin{scope}[yshift=-4cm]

    \node () at (0,0) {$(d,u)\in E, (d,v)\notin E$};

\end{scope}

\begin{scope}[yshift=-8cm]

    \node () at (0,0) {$(d,u)\notin E, (d,v)\in E$};

\end{scope}

\begin{scope}[yshift=-12cm]

    \node () at (0,0) {$(d,u)\in E, (d,v)\in E$};

\end{scope}

\begin{scope}[xshift=4cm,yshift=-2cm]

    \node () at (0,0) {$(c,u)\in E, (c,v)\notin E$};

\end{scope}

\begin{scope}[xshift=8cm,yshift=-2cm]

    \node () at (0,0) {$(c,u)\notin E, (c,v)\in E$};

\end{scope}

\begin{scope}[xshift=12cm,yshift=-2cm]

    \node () at (0,0) {$(c,u)\in E, (c,v)\in E$};

\end{scope}

\begin{scope}[xshift=4cm,yshift=-4cm]

    \node[vertex,fill=violet] (a) at (90:1) {};
    \node[vertex,fill=magenta] (e) at (-72+90:1) {}
        edge[ultra thick] (a);
    \node[vertex,fill=pink] (d) at (-2*72+90:1) {}
        edge[ultra thick] (e);
    \node[vertex,fill=teal] (c) at (-3*72+90:1) {}
        edge[ultra thick] (d);
    \node[vertex,fill=purple] (b) at (-4*72+90:1) {}
        edge[ultra thick] (a)
        edge[ultra thick] (c);
    \node[vertex,fill=yellow] (u) at (-0.375,0.3) {}
        edge[ultra thick] (a)
        edge[ultra thick] (b);
    \node[vertex,fill=yellow] (v) at (0.375,0.3) {}
        edge (a)
        edge (e);

    \node () at (0,-1.5) {its core is $C_5+1$};

\draw[ultra thick] (e) to [bend left = 45] (u);

\draw[ultra thick] (c)--(u);
\draw[ultra thick] (d)--(u);

\end{scope}

\begin{scope}[xshift=8cm,yshift=-4cm]

    \node[vertex] (a) at (90:1) {};
    \node[vertex] (e) at (-72+90:1) {}
        edge (a);
    \node[vertex] (d) at (-2*72+90:1) {}
        edge (e);
    \node[vertex] (c) at (-3*72+90:1) {}
        edge (d);
    \node[vertex] (b) at (-4*72+90:1) {}
        edge (a)
        edge (c);
    \node[vertex] (u) at (-0.375,0.3) {}
        edge (a)
        edge (b);
    \node[vertex] (v) at (0.375,0.3) {}
        edge (a)
        edge (e);

    \node () at (0,-1.5) {\textbf{Core} $G_5$};

\draw (e) to [bend left = 45] (u);

\draw (c)--(v);
\draw (d)--(u);

\draw[ultra thick] (-1.5,1.5)--(1.5,1.5)--(1.5,-2)--(-1.5,-2)--(-1.5,1.5);

\end{scope}

\begin{scope}[xshift=12cm,yshift=-4cm]

    \node[vertex,fill=violet] (a) at (90:1) {};
    \node[vertex,fill=magenta] (e) at (-72+90:1) {}
        edge[ultra thick] (a);
    \node[vertex,fill=pink] (d) at (-2*72+90:1) {}
        edge[ultra thick] (e);
    \node[vertex,fill=teal] (c) at (-3*72+90:1) {}
        edge[ultra thick] (d);
    \node[vertex,fill=purple] (b) at (-4*72+90:1) {}
        edge[ultra thick] (a)
        edge[ultra thick] (c);
    \node[vertex,fill=yellow] (u) at (-0.375,0.3) {}
        edge[ultra thick] (a)
        edge[ultra thick] (b);
    \node[vertex,fill=yellow] (v) at (0.375,0.3) {}
        edge (a)
        edge (e);

    \node () at (0,-1.5) {its core is $C_5+1$};

\draw[ultra thick] (e) to [bend left = 45] (u);

\draw[ultra thick] (c)--(u);
\draw (c)--(v);
\draw[ultra thick] (d)--(u);

\end{scope}

\begin{scope}[xshift=4cm,yshift=-8cm]

    \node[vertex] (a) at (90:1) {};
    \node[vertex] (e) at (-72+90:1) {}
        edge (a);
    \node[vertex] (d) at (-2*72+90:1) {}
        edge (e);
    \node[vertex] (c) at (-3*72+90:1) {}
        edge (d);
    \node[vertex] (b) at (-4*72+90:1) {}
        edge (a)
        edge (c);
    \node[vertex] (u) at (-0.375,0.3) {}
        edge (a)
        edge (b);
    \node[vertex] (v) at (0.375,0.3) {}
        edge (a)
        edge (e);

    \node () at (0,-1.5) {isomorphic to $G_4$};

\draw (e) to [bend left = 45] (u);

\draw (c)--(u);
\draw (d)--(v);

\end{scope}

\begin{scope}[xshift=8cm,yshift=-8cm]

    \node[vertex] (a) at (90:1) {};
    \node[vertex] (e) at (-72+90:1) {}
        edge (a);
    \node[vertex] (d) at (-2*72+90:1) {}
        edge (e);
    \node[vertex] (c) at (-3*72+90:1) {}
        edge (d);
    \node[vertex] (b) at (-4*72+90:1) {}
        edge (a)
        edge (c);
    \node[vertex] (u) at (-0.375,0.3) {}
        edge (a)
        edge (b);
    \node[vertex] (v) at (0.375,0.3) {}
        edge (a)
        edge (e);

    \node () at (0,-1.5) {$G'_4$ isomorphic to $G_4$};

\draw (e) to [bend left = 45] (u);

\draw (c)--(v);
\draw (d)--(v);

\end{scope}

\begin{scope}[xshift=12cm,yshift=-8cm]

    \node[vertex] (a) at (90:1) {};
    \node[vertex] (e) at (-72+90:1) {}
        edge (a);
    \node[vertex] (d) at (-2*72+90:1) {}
        edge (e);
    \node[vertex] (c) at (-3*72+90:1) {}
        edge (d);
    \node[vertex] (b) at (-4*72+90:1) {}
        edge (a)
        edge (c);
    \node[vertex] (u) at (-0.375,0.3) {}
        edge (a)
        edge (b);
    \node[vertex] (v) at (0.375,0.3) {}
        edge (a)
        edge (e);

    \node () at (0,-1.5) {$G'_3$ isomorphic to $G_3$};

\draw (e) to [bend left = 45] (u);

\draw (c)--(u);
\draw (c)--(v);
\draw (d)--(v);

\end{scope}

\begin{scope}[xshift=4cm,yshift=-12cm]

    \node[vertex,fill=violet] (a) at (90:1) {};
    \node[vertex,fill=magenta] (e) at (-72+90:1) {}
        edge[ultra thick] (a);
    \node[vertex,fill=pink] (d) at (-2*72+90:1) {}
        edge[ultra thick] (e);
    \node[vertex,fill=teal] (c) at (-3*72+90:1) {}
        edge[ultra thick] (d);
    \node[vertex,fill=purple] (b) at (-4*72+90:1) {}
        edge[ultra thick] (a)
        edge[ultra thick] (c);
    \node[vertex,fill=yellow] (u) at (-0.375,0.3) {}
        edge[ultra thick] (a)
        edge[ultra thick] (b);
    \node[vertex,fill=yellow] (v) at (0.375,0.3) {}
        edge (a)
        edge (e);

    \node () at (0,-1.5) {its core is $C_5+1$};

\draw[ultra thick] (e) to [bend left = 45] (u);

\draw[ultra thick] (c)--(u);
\draw[ultra thick] (d)--(u);
\draw (d)--(v);

\end{scope}

\begin{scope}[xshift=8cm,yshift=-12cm]

    \node[vertex] (a) at (90:1) {};
    \node[vertex] (e) at (-72+90:1) {}
        edge (a);
    \node[vertex] (d) at (-2*72+90:1) {}
        edge (e);
    \node[vertex] (c) at (-3*72+90:1) {}
        edge (d);
    \node[vertex] (b) at (-4*72+90:1) {}
        edge (a)
        edge (c);
    \node[vertex] (u) at (-0.375,0.3) {}
        edge (a)
        edge (b);
    \node[vertex] (v) at (0.375,0.3) {}
        edge (a)
        edge (e);

    \node () at (0,-1.5) {\textbf{Core} $G_6$};

\draw (e) to [bend left = 45] (u);

\draw (c)--(v);
\draw (d)--(u);
\draw (d)--(v);

\draw[ultra thick] (-1.5,1.5)--(1.5,1.5)--(1.5,-2)--(-1.5,-2)--(-1.5,1.5);

\end{scope}

\begin{scope}[xshift=12cm,yshift=-12cm]

    \node[vertex,fill=violet] (a) at (90:1) {};
    \node[vertex,fill=magenta] (e) at (-72+90:1) {}
        edge[ultra thick] (a);
    \node[vertex,fill=pink] (d) at (-2*72+90:1) {}
        edge[ultra thick] (e);
    \node[vertex,fill=teal] (c) at (-3*72+90:1) {}
        edge[ultra thick] (d);
    \node[vertex,fill=purple] (b) at (-4*72+90:1) {}
        edge[ultra thick] (a)
        edge[ultra thick] (c);
    \node[vertex,fill=yellow] (u) at (-0.375,0.3) {}
        edge[ultra thick] (a)
        edge[ultra thick] (b);
    \node[vertex,fill=yellow] (v) at (0.375,0.3) {}
        edge (a)
        edge (e);

    \node () at (0,-1.5) {its core is $C_5+1$};

\draw[ultra thick] (e) to [bend left = 45] (u);

\draw[ultra thick] (c)--(u);
\draw (c)--(v);
\draw[ultra thick] (d)--(u);
\draw (d)--(v);

\end{scope}

\draw[ultra thick] (-2,-2.5)--(14,-2.5);
\draw[ultra thick] (2,-1.5)--(2,-14);

\end{tikzpicture}
}
\caption{The $9$ candidates with $(e,u)\in E$ and $(u,v)\notin E$.}
\label{fig:euNotuv}
\end{figure}
\end{center}

The rest of this appendix focuses on the third step,  i.e., proving that all the ``sporadic'' 7-cores (the 7-cores that are not $K_7,C_7,\overline{C_7}$, or $C_5+2$) are necessarily compatible with the motif described in Figure \ref{fig:AllPossibleGraphs}. This is sufficient to prove that the list of $7$-cores given by Figures \ref{fig:Trivial7cores} and \ref{fig:Sporadic7Cores} is exhaustive.

\begin{lemma}\label{lem:mindegreecore}

Let $C$ a core on at least $3$ vertices, and let $v\in V_C$. Then $\text{deg}_C(v)\ge 2$.

\end{lemma}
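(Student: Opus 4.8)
The plan is to rely on the characterization of cores recalled in Section~\ref{sec:prel}, namely that $C$ is a core if and only if every homomorphism $C\to C$ is bijective. Consequently it suffices to show that a vertex $v$ with $\deg_C(v)\le 1$ forces the existence of a \emph{non-injective} endomorphism of $C$, contradicting that $C$ is a core. The one reusable gadget is a \emph{folding}: if there are distinct vertices $v,w$ with $N_C(v)\subseteq N_C(w)$ (writing $N_C$ for the open neighborhood), then the map $h$ that fixes every vertex except $v$ and sends $v\mapsto w$ is a homomorphism, since each edge $(v,b)$ is sent to $(w,b)$, which lies in $E_C$ because $b\in N_C(v)\subseteq N_C(w)$, while all other edges are fixed; moreover $h(v)=w=h(w)$ with $v\neq w$, so $h$ is not injective.

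First I would handle the isolated vertex: if $\deg_C(v)=0$ then $N_C(v)=\emptyset\subseteq N_C(w)$ for any other $w$ (which exists as $\lvert V_C\rvert\ge 3$), so folding gives the contradiction. I would record the resulting fact that a core on at least two vertices has no isolated vertex. Next, suppose $\deg_C(v)=1$ with unique neighbor $u$, so $N_C(v)=\{u\}$. If $u$ has a neighbor $w\neq v$, then $u\in N_C(w)$, hence $N_C(v)=\{u\}\subseteq N_C(w)$ and folding $v$ onto $w$ settles this case.

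The main obstacle is the remaining subcase $\deg_C(u)=1$, where $\{u,v\}$ spans an isolated $K_2$-component and no folding of $v$ is available. Here I would invoke $\lvert V_C\rvert\ge 3$ to take a vertex $z\notin\{u,v\}$ and split into two situations. If $C$ has some edge $(a,b)$ disjoint from $\{u,v\}$, then mapping $u\mapsto a$, $v\mapsto b$ and fixing everything else is a homomorphism (the only moved edge $(u,v)$ lands on $(a,b)\in E_C$) that is non-injective, a contradiction. Otherwise every edge of $C$ lies inside $\{u,v\}$, so $z$ is isolated, contradicting the fact recorded above. This exhausts all cases, and the isolated-$K_2$ subcase is exactly the point where the hypothesis of at least three vertices is indispensable.
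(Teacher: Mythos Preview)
Your proof is correct and follows essentially the same three-case structure as the paper's proof (isolated vertex; degree-one vertex whose neighbor has another neighbor; degree-one vertex whose neighbor is also degree one). The only cosmetic difference is in the last case: the paper directly picks $w\notin\{u,v\}$, uses the already-established fact that $\deg_C(w)\neq 0$ to find a neighbor $x$ of $w$, and sends $v\mapsto w$, $u\mapsto x$, whereas you phrase this as finding an edge disjoint from $\{u,v\}$ (or else exhibiting an isolated $z$); these are the same argument.
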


\begin{proof}

We give a proof by contradiction by distinguishing between the following $3$ cases.

\begin{itemize}

    \item If $\text{deg}_C(v)=0$, the function that maps $v$ to any other vertex and that leaves the rest of the graph unchanged is a non-bijective homomorphism. This contradicts  that $C$ is a core.
    
    \item If $v$ has a unique neighbor $u$, and if $u$ has another neighbor $w$, the function that maps $v$ to $w$ and that leaves the rest of the graph unchanged is a non-bijective homomorphism. Again, this contradicts that $C$ is a core.

    \item If $v$ has a unique neighbor $u$, and if $u$ has no other neighbor, then, since $C$ has at least $3$ vertices, there exists $w\in V_C\setminus (u,v)$. Since, by what precedes, $\text{deg}_C(w)\neq 0$, $w$ has a neighbor $x$. The function that maps $v$ to $w$ and $u$ to $x$ and that leaves the rest of the graph unchanged is a non-bijective homomorphism. Hence, this contradicts that $C$ is a core.
\end{itemize}
\end{proof}

We now continue by establishing the following necessary properties of sporadic $7$-cores. In the following statements we implicitly assume that $G$ is a sporadic 7-core with vertices named as in Figure~\ref{fig:AllPossibleGraphs}.

\begin{lemma}\label{lem:inducedC5}

$G$ has an induced $C_5$.

\end{lemma}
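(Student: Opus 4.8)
The plan is to derive this almost immediately from the structural result already available for proper cores, namely Corollary~\ref{cor:ProperCoreInducedCycle}. First I would observe that $G$ qualifies as a \emph{proper} core: it is a core on $7$ vertices by hypothesis, and the only clique on $7$ vertices is $K_7$, which is explicitly excluded from the sporadic case. Hence $G$ is a core that is not a clique, i.e.\ a proper $7$-core, and Corollary~\ref{cor:ProperCoreInducedCycle} applies.

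Next I would invoke that corollary to conclude that $G$ contains an induced $C_k$ or $\overline{C_k}$ for some odd $k$ with $5 \le k \le 7$; since $k$ must be odd this forces $k \in \{5,7\}$. The argument then splits into four cases according to which of $C_5$, $\overline{C_5}$, $C_7$, $\overline{C_7}$ is realized as an induced subgraph. The cases $C_5$ and $\overline{C_5}$ are handled together, using the well-known identity $\overline{C_5}=C_5$ (already remarked upon in Remark~\ref{rem:CoresAtMost5Vertices}): either way $G$ has an induced $C_5$, which is exactly what we want.

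It remains to dispose of the two ``large'' cases, and this is where the only real content lies, though it is brief. If $G$ had an induced $C_7$, then since $C_7$ has $7$ vertices and $G$ itself has only $7$ vertices, the induced subgraph would be all of $G$, forcing $G = C_7$; symmetrically an induced $\overline{C_7}$ would force $G = \overline{C_7}$. Both conclusions contradict the standing assumption that $G$ is sporadic (i.e.\ $G \notin \{K_7, C_7, \overline{C_7}, C_5+2\}$). Therefore only the $C_5$/$\overline{C_5}$ case survives, and $G$ has an induced $C_5$, completing the proof.

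The main (and only) obstacle is the bookkeeping in the last step: one must recognize that an induced odd cycle of length equal to $|V_G|$ cannot be a \emph{proper} induced subgraph, so it pins down $G$ entirely and is ruled out by sporadicity. Everything else is a direct appeal to Corollary~\ref{cor:ProperCoreInducedCycle} together with $\overline{C_5}=C_5$, so I do not anticipate any genuine difficulty.
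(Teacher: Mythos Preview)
Your proposal is correct and follows essentially the same approach as the paper's proof: both rely on the perfect graph theorem to force an induced $C_5$, $\overline{C_5}=C_5$, $C_7$, or $\overline{C_7}$, and then rule out the length-$7$ cases because they would make $G$ equal to $C_7$ or $\overline{C_7}$, contradicting sporadicity. The only cosmetic difference is that you invoke Corollary~\ref{cor:ProperCoreInducedCycle} directly, whereas the paper re-derives that corollary inline via a contradiction argument through perfectness.
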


\begin{proof}
Let $G$ be a sporadic $7$-core. Assume by contradiction that $G$ has no induced $C_5$. Since $G$ has $7$ vertices, $G$ has no induced $C_7$, nor has it an induced $\overline{C_7}$ (otherwise, we would have $G=C_7$ or $G=\overline{C_7}$). Thus, $G$ has no induced $C_{2k+1}$ nor  has it an induced $\overline{C_{2k+1}}$ for any $k\ge 2$ (notice that $\overline{C_5}=C_5$). By the theorem of perfect graphs, $G$ is a perfect graph, i.e., there exists $k\ge 1$ such that $G$ has an induced $K_k$ and $G$ is $k$-colorable. In particular, $core(G)=K_k$ is a clique. Since $G$ is a core, $G=core(G)=K_7$, leading to a contradiction. 
\end{proof}

By Lemma \ref{lem:inducedC5}, we can assume without loss of generality that $V_G=\{a,b,c,d,e,u,v\}$ and that $\{a,b,c,d,e\}$ induce the $C_5$: $a-b-c-d-e-a$.

Then, $G$ depends only on the neighborhoods of $u$ and $v$.

\begin{lemma}\label{lem:CommonNeighbor}

$u$ and $v$ have a common neighbor.

\end{lemma}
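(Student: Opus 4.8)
The plan is to argue by contradiction: assume $u$ and $v$ have no common neighbour and deduce that $G$ is homomorphically equivalent to either $C_5$ or $K_3$, so that $core(G)$ has at most $5$ vertices, contradicting that $G$ is a core on $7$ vertices. Throughout I set $A := N_G(u)\cap\{a,b,c,d,e\}$ and $B := N_G(v)\cap\{a,b,c,d,e\}$. Since neither $u$ nor $v$ carries a loop, any common neighbour of $u$ and $v$ must lie in $\{a,b,c,d,e\}$, so the hypothesis is exactly that $A\cap B=\emptyset$. By Lemma~\ref{lem:mindegreecore} we have $\deg_G(u),\deg_G(v)\ge 2$, which (accounting for the possible edge $uv$) forces $|A|,|B|\ge 1$, and $|A|,|B|\ge 2$ when $u\not\sim v$; disjointness gives $|A|+|B|\le 5$, so at most one of $A,B$ has three elements.

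Two reductions drive the argument. First, if $G$ admits a homomorphism to $C_5$, then its restriction to the induced copy of $C_5$ on $\{a,b,c,d,e\}$ is an endomorphism of $C_5$, hence an automorphism since $C_5$ is a core; composing with its inverse yields a retraction of $G$ onto this $C_5$, whence $core(G)=C_5\neq G$, a contradiction. As every automorphism of $C_5$ preserves the cyclic order, the existence of such a homomorphism is equivalent to being able to send $u$ to a cycle-vertex $i$ with $A\subseteq N_{C_5}(i)$ and $v$ to a cycle-vertex $j$ with $B\subseteq N_{C_5}(j)$, choosing $i\sim j$ whenever $u\sim v$. Second, if $G$ contains an induced triangle and $G\to K_3$, then $core(G)=K_3\neq G$, again a contradiction.

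I would then split on whether $A$ or $B$ contains an edge of the cycle. If both are independent sets (hence of size $\le 2$), each is contained in the neighbourhood $N_{C_5}(i)$ of a cycle-vertex, and a short check shows that two disjoint non-edges of $C_5$ always have adjacent ``centres''; thus a compatible pair $(i,j)$ exists even when $u\sim v$, giving $G\to C_5$ and the first contradiction. If instead $A$ (say) contains two consecutive cycle-vertices — which in particular occurs whenever $|A|=3$, since $C_5$ has independence number $2$ — then $u$ together with that edge induces a triangle in $G$. It then remains to produce a proper $3$-colouring of $G$: one $3$-colours the cycle so that the $\le 5$ vertices of $A\uplus B$ each span at most two colours, assigns to $u$ a colour absent from $A$ and to $v$ a colour absent from $B$, and when $u\sim v$ selects these to be distinct. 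This yields $G\to K_3$ and the second contradiction.

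The main obstacle is precisely this last step: exhibiting a single $3$-colouring of the cycle that simultaneously frees a colour for $u$ and for $v$, even when $u\sim v$. This is where $A\cap B=\emptyset$ is essential, since it guarantees $|A|+|B|\le 5$ and that $G$ contains no $K_4$ (a $K_4$ through both $u$ and $v$ would force a common neighbour, while $C_5$ is triangle-free), and it reduces the matter to a finite verification over the few shapes a disjoint pair $(A,B)$ of subsets of $C_5$ can take. I expect this bookkeeping, rather than any conceptual difficulty, to constitute the bulk of the proof; the folding criterion used in Lemma~\ref{lem:mindegree_semiprojections} (namely that $N_G(x)\subseteq N_G(y)$ for distinct $x,y$ is impossible in a core) is available as a fallback should any residual configuration resist the two colouring reductions.
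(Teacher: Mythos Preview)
Your proposal is correct and follows essentially the same route as the paper: both argue by contradiction, split on whether the configuration forces a triangle (equivalently, whether $A$ or $B$ contains an edge of the cycle), and reduce to $core(G)\in\{C_5,K_3\}$ via an explicit coloring in each branch. The paper carries out the finite verification you anticipate by listing the maximal configurations as figures and exhibiting a $C_5$-coloring (triangle-free case) or a $3$-coloring (triangle case) for each; your ``adjacent centres'' observation is a slightly more structured way of handling the triangle-free branch, but the overall architecture is the same.
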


\begin{proof}

Assume, with the aim of reaching a contradiction, that $u$ and $v$ do not have a common neighbor, and
assume by symmetry that $\text{deg}_G(u)\ge \text{deg}_G(v)$.
By Lemma \ref{lem:mindegreecore}, $\text{deg}_G(v)\ge 2$. Consider the following case analysis.

\begin{itemize}
    \item If $G$ has no triangle, then the neighbors of $u$ (respectively $v$) are non-adjacent, and $u$ and $v$ do not have a common neighbor. It follows that $G$ is isomorphic to a subgraph of the graph presented in Figure \ref{fig:NoCommonNeighborsNoTriangle}. Thus there exists a homomorphism from $G$ to $C_5$. Since $G$ has an induced $C_5$ by Lemma \ref{lem:inducedC5}, we deduce that $core(G)=C_5$, contradicting that $G$ is a core.

\begin{center}
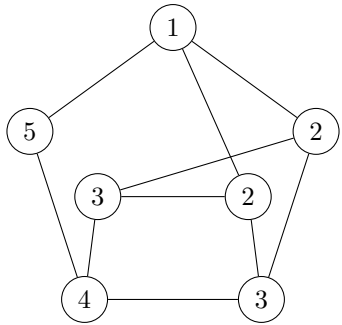
\begin{figure}
\centering

\begin{tikzpicture}
\tikzstyle{vertex}=[draw,shape=circle];

\begin{scope}

    \node[vertex] (a) at (90:2) {$1$};
    \node[vertex] (e) at (-72+90:2) {$2$}
        edge (a);
    \node[vertex] (d) at (-2*72+90:2) {$3$}
        edge (e);
    \node[vertex] (c) at (-3*72+90:2) {$4$}
        edge (d);
    \node[vertex] (b) at (-4*72+90:2) {$5$}
        edge (a)
        edge (c);
    \node[vertex] (u) at (-1,-0.25) {$3$}
        edge (e)
        edge (c);
    \node[vertex] (v) at (1,-0.25) {$2$}
        edge (a)
        edge (d)
        edge (u);

\end{scope}

\end{tikzpicture}
\caption{Maximal case (up to isomorphism) where $u$ and $v$ do not have a common neighbor and $G$ has no triangle. Even this maximal case is $C_5$-colorable.}
\label{fig:NoCommonNeighborsNoTriangle}
\end{figure}
\end{center}

    \item If $G$ has a triangle, then $G$ is isomorphic to a subgraph of one of the three graphs presented in Figure \ref{fig:NoCommonNeighborsWithTriangle}, and is therefore $3$-colorable (i.e., there is a homomorphism from $G$ to $K_3$). Since $G$ has an induced $K_3$, it proves that $core(G)=K_3$, contradicting that $G$ is a core.

\begin{center}
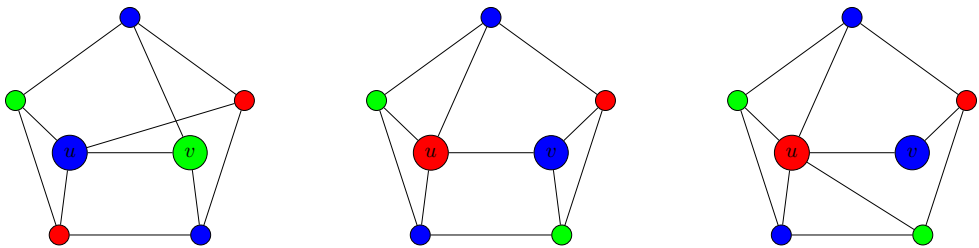
\begin{figure}
\centering

\scalebox{.8}{
\begin{tikzpicture}
\tikzstyle{vertex}=[draw,shape=circle];

\begin{scope}

    \node[vertex,fill=blue] (a) at (90:2) {};
    \node[vertex,fill=red] (e) at (-72+90:2) {}
        edge (a);
    \node[vertex,fill=blue] (d) at (-2*72+90:2) {}
        edge (e);
    \node[vertex,fill=red] (c) at (-3*72+90:2) {}
        edge (d);
    \node[vertex,fill=green] (b) at (-4*72+90:2) {}
        edge (a)
        edge (c);
    \node[vertex,fill=blue] (u) at (-1,-0.25) {$u$}
        edge (e)
        edge (c)
        edge (b);
    \node[vertex,fill=green] (v) at (1,-0.25) {$v$}
        edge (a)
        edge (d)
        edge (u);

\end{scope}

\begin{scope}[xshift=6cm]

    \node[vertex,fill=blue] (a) at (90:2) {};
    \node[vertex,fill=red] (e) at (-72+90:2) {}
        edge (a);
    \node[vertex,fill=green] (d) at (-2*72+90:2) {}
        edge (e);
    \node[vertex,fill=blue] (c) at (-3*72+90:2) {}
        edge (d);
    \node[vertex,fill=green] (b) at (-4*72+90:2) {}
        edge (a)
        edge (c);
    \node[vertex,fill=red] (u) at (-1,-0.25) {$u$}
        edge (b)
        edge (c)
        edge (a);
    \node[vertex,fill=blue] (v) at (1,-0.25) {$v$}
        edge (d)
        edge (e)
        edge (u);

\end{scope}

\begin{scope}[xshift=12cm]

    \node[vertex,fill=blue] (a) at (90:2) {};
    \node[vertex,fill=red] (e) at (-72+90:2) {}
        edge (a);
    \node[vertex,fill=green] (d) at (-2*72+90:2) {}
        edge (e);
    \node[vertex,fill=blue] (c) at (-3*72+90:2) {}
        edge (d);
    \node[vertex,fill=green] (b) at (-4*72+90:2) {}
        edge (a)
        edge (c);
    \node[vertex,fill=red] (u) at (-1,-0.25) {$u$}
        edge (b)
        edge (c)
        edge (a)
        edge (d);
    \node[vertex,fill=blue] (v) at (1,-0.25) {$v$}
        edge (e)
        edge (u);

\end{scope}

\end{tikzpicture}
}
\caption{Maximal case (up to isomorphism) where $u$ and $v$ do not have a common neighbor and  $G$ has a triangle. Even this maximal case is $3$-colorable.}
\label{fig:NoCommonNeighborsWithTriangle}
\end{figure}
\end{center}
    
\end{itemize}

In both cases, we have a contradiction. It follows that $u$ and $v$ have a common neighbor.

\end{proof}

\begin{lemma}\label{lem:NotAllCommonNeighbors}

There exists a vertex among $\{a,b,c,d,e\}$ that is not a common neighbor of $u$ and $v$.

\end{lemma}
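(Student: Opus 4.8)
The plan is to argue by contradiction. Suppose that every vertex of the induced cycle $\{a,b,c,d,e\}$ is a common neighbor of both $u$ and $v$. Then $u$ and $v$ are each adjacent to all five cycle vertices, so $N_G(u) \supseteq \{a,b,c,d,e\}$ and $N_G(v) \supseteq \{a,b,c,d,e\}$. Since $V_G = \{a,b,c,d,e,u,v\}$, the only edge whose presence is not yet forced is $(u,v)$, so under this assumption $G$ is completely determined up to that single choice. I would then split into two cases according to whether $u$ and $v$ are adjacent.

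First I would treat the case $(u,v) \in E_G$. Here $u$ and $v$ are both universal vertices of $G$ (adjacent to every other vertex) and are adjacent to each other, which is exactly the graph $C_5+2$ by the definition of $C_5+p$. This contradicts the standing assumption that $G$ is a \emph{sporadic} $7$-core, that is, that $G$ is none of $K_7, C_7, \overline{C_7}$, or $C_5+2$.

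Next I would treat the case $(u,v) \notin E_G$. Then $N_G(u) = N_G(v) = \{a,b,c,d,e\}$, so $u$ and $v$ are non-adjacent twins. The key step is to check that the map $h \colon V_G \to V_G$ sending $u \mapsto v$ and fixing every other vertex is a homomorphism: every edge incident to $u$ has the form $(u,x)$ with $x \in \{a,b,c,d,e\}$, and $(v,x) \in E_G$ because $v$ is adjacent to all cycle vertices, while the absence of the edge $(u,v)$ means no edge at $u$ can map to a non-edge; all edges avoiding $u$ are preserved since $h$ fixes their endpoints. Thus $h$ is a non-bijective $G$-coloring of $G$, contradicting that $G$ is a core (recall a core admits only bijective self-colorings).

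Combining the two cases yields a contradiction in either event, which proves the lemma. I do not anticipate any real obstacle: the entire argument reduces to the observation that making all five cycle vertices common neighbors forces $u$ and $v$ to behave identically, pushing $G$ into either the excluded graph $C_5+2$ or a graph with a proper fold. The only point demanding care is verifying, in the second case, that folding the twin $u$ onto $v$ respects the edge relation, which follows immediately from $N_G(u)=N_G(v)$ together with $(u,v)\notin E_G$.
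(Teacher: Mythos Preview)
Your proposal is correct and follows essentially the same approach as the paper: both argue by contradiction, observe that the hypothesis forces $G$ to be one of two graphs depending on whether $(u,v)$ is an edge, and then eliminate each case. The only difference is in the case $(u,v)\notin E_G$: the paper observes that $\mathrm{core}(G)=C_5+1$, while you instead fold the twin $u$ onto $v$ to exhibit a non-bijective endomorphism directly; your variant is slightly more elementary since it avoids identifying the core explicitly.
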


\begin{proof}

We prove it by contradiction.
Note that if all vertices among $\{a,b,c,d,e\}$ are common neighbors of $u$ and $v$, then $G$ is one of the two graphs presented in Figure~\ref{fig:AllCommonNeighbors}.

\begin{center}
\begin{figure}
\centering

\begin{tikzpicture}
\tikzstyle{vertex}=[draw,shape=circle];

\begin{scope}

    \node[vertex] (a) at (90:2) {};
    \node[vertex] (e) at (-72+90:2) {}
        edge (a);
    \node[vertex] (d) at (-2*72+90:2) {}
        edge (e);
    \node[vertex] (c) at (-3*72+90:2) {}
        edge (d);
    \node[vertex] (b) at (-4*72+90:2) {}
        edge (a)
        edge (c);
    \node[vertex] (u) at (-1,-0.25) {$u$}
        edge (a)
        edge (b)
        edge (c)
        edge (d)
        edge (e);
    \node[vertex] (v) at (1,-0.25) {$v$}
        edge (a)
        edge (b)
        edge (c)
        edge (d)
        edge (e);

\end{scope}

\begin{scope}[xshift=7cm]

    \node[vertex] (a) at (90:2) {};
    \node[vertex] (e) at (-72+90:2) {}
        edge (a);
    \node[vertex] (d) at (-2*72+90:2) {}
        edge (e);
    \node[vertex] (c) at (-3*72+90:2) {}
        edge (d);
    \node[vertex] (b) at (-4*72+90:2) {}
        edge (a)
        edge (c);
    \node[vertex] (u) at (-1,-0.25) {$u$}
        edge (a)
        edge (b)
        edge (c)
        edge (d)
        edge (e);
    \node[vertex] (v) at (1,-0.25) {$v$}
        edge (a)
        edge (b)
        edge (c)
        edge (d)
        edge (e)
        edge (u);

\end{scope}

\end{tikzpicture}
\caption{The two possible graphs if all vertices in $\{a,b,c,d,e\}$ are joint neighbors of $u$ and $v$.}
\label{fig:AllCommonNeighbors}
\end{figure}
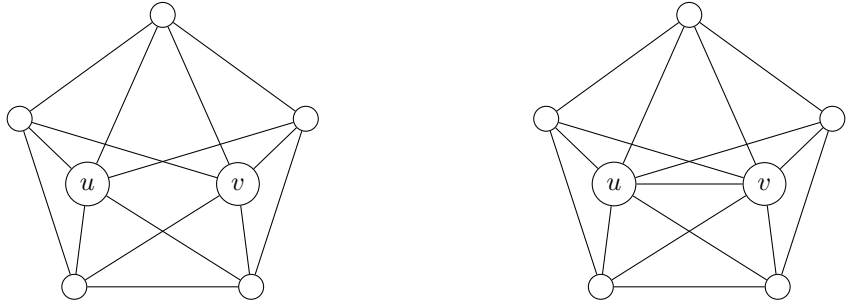
\end{center}

The first case is impossible, because the core of $G$ would then be $C_5+1$, and hence $G$ would not be a core. The second case is also impossible, because we assumed that $G$ is a sporadic $7$-core, so $G$ is different from $C_5+2$.
Clearly, there is a contradiction in both cases.

\end{proof}

We remark that by Lemma \ref{lem:CommonNeighbor}, $u$ and $v$ have a common neighbor, and we can thus assume (without loss of generality, up to isomorphism) that the vertex ``$a$'' is a common neighbor of both $u$ and $v$.

\begin{lemma}\label{lem:NoK4}

$G$ has no induced $K_4$.

\end{lemma}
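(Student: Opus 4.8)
The plan is to argue by contradiction: assuming that a sporadic $7$-core $G$ contains an induced $K_4$, I would show that $G$ is $4$-colorable, and then deduce that $core(G)=K_4$, which contradicts $G$ being a core on $7$ vertices. Throughout I use the setup already fixed before the statement: $V_G=\{a,b,c,d,e,u,v\}$ with $\{a,b,c,d,e\}$ inducing the $C_5$ $a-b-c-d-e-a$ (Lemma~\ref{lem:inducedC5}), and I write $S_u:=N_G(u)\cap\{a,b,c,d,e\}$ and $S_v:=N_G(v)\cap\{a,b,c,d,e\}$.

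First I would establish the key coloring claim: \emph{every sporadic $7$-core $G$ is $4$-colorable}. The crucial observation is that $u$ and $v$ cannot both be adjacent to all five cycle vertices. Indeed, if $S_u=S_v=\{a,b,c,d,e\}$, then either $u\sim v$, in which case $G$ contains every edge of $C_5+2$ and hence, the cycle being induced and there being no further vertices, $G=C_5+2$, contradicting that $G$ is sporadic; or $u\not\sim v$, in which case $N_G(u)=N_G(v)=\{a,b,c,d,e\}$ and the map fixing every vertex except sending $u\mapsto v$ is a non-bijective endomorphism, contradicting that $G$ is a core. Hence, up to exchanging $u$ and $v$, I may assume $|S_v|\le 4$. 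Then $G-u$ is $3$-colorable by the following uniform device: choose $w\in\{a,b,c,d,e\}\setminus S_v$, note that $C_5-w$ is a path $P_4$ and hence $2$-colorable with colors $\{1,2\}$, and color $w$ and $v$ both with color $3$. This is legitimate since $v\not\sim w$ and every neighbour of $v$ lies in $C_5-w$, so carries a color in $\{1,2\}$. Assigning color $4$ to $u$ extends this to a proper $4$-coloring of $G$, since every neighbour of $u$ now carries a color in $\{1,2,3\}$.

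To finish, I would combine the $K_4$ with $4$-colorability. The inclusion of the assumed induced $K_4$ is a homomorphism $K_4\to G$, while a proper $4$-coloring is a homomorphism $G\to K_4$; thus $G$ and $K_4$ are homomorphically equivalent, and by the definition of core this forces $core(G)\cong core(K_4)=K_4$. As $G$ is a core this gives $G=K_4$, which is absurd because $|V_G|=7$. Hence $G$ has no induced $K_4$.

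I expect the main obstacle to be the coloring claim rather than the final homomorphism argument, and specifically the verification that $u$ and $v$ are not simultaneously universal to the $C_5$: this is exactly the point where both hypotheses ``$G$ is a core'' and ``$G\neq C_5+2$'' are needed, and it explains why the excluded graph $C_5+2$, which is a genuine $7$-core containing a $K_4$, is not a counterexample. The $3$-coloring device for $G-u$ is then essentially automatic once $|S_v|\le 4$, because deleting a non-neighbour $w$ of $v$ always leaves a bipartite path that absorbs $S_v$ into two color classes. It is worth noting, though not strictly needed, that any $K_4$ in $G$ must consist of $u$, $v$ and two adjacent cycle vertices, since the induced $C_5$ is triangle-free; this confirms that the whole obstruction lives on the pair $\{u,v\}$, and matches the structure recorded in the motif of Figure~\ref{fig:AllPossibleGraphs}.
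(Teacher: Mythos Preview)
Your proof is correct and follows essentially the same approach as the paper: show that $G$ is $4$-colorable and conclude that an induced $K_4$ would force $core(G)=K_4$, a contradiction. The paper phrases the $4$-colorability step more tersely (``$G$ is a strict subgraph of $C_5+2$ obtained by removing at least an edge from one of the two universal vertices''), relying on the already-proved Lemma~\ref{lem:NotAllCommonNeighbors} for the fact that $|S_u|\le 4$ or $|S_v|\le 4$, whereas you re-derive this directly; your explicit $3$-coloring of $G-u$ via the path $C_5-w$ is exactly the coloring the paper leaves implicit.
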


\begin{proof}

Note that $G$ is a strict subgraph of $C_5+2$ obtained by removing at least an edge from one of the two universal vertices. We easily deduce that $G$ is $4$-colorable. It follows that $G$ does not contain a $K_4$, since otherwise $core(G)=K_4$.

\end{proof}

\begin{lemma}\label{lem:Triangle}

$G$ has a triangle.

\end{lemma}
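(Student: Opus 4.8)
The plan is to argue by contradiction: assume that $G$ is triangle-free and show that this forces $core(G)=C_5$, contradicting the hypothesis that $G$ is a (sporadic) $7$-core. The whole argument exploits the structure already established: by Lemma~\ref{lem:inducedC5} the vertices $\{a,b,c,d,e\}$ induce the cycle $a-b-c-d-e-a$, and by Lemma~\ref{lem:CommonNeighbor} (together with the normalization chosen immediately after it) the vertex $a$ is a common neighbor of both $u$ and $v$.

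First I would record the constraints that triangle-freeness imposes on the neighborhoods of $u$ and $v$. Since $a\sim u$, $a\sim v$, and $a$ is adjacent to $b$ and to $e$ in the $C_5$, the absence of triangles immediately rules out the edges $(u,b),(u,e),(v,b),(v,e)$ as well as $(u,v)$ (each would close a triangle through $a$). Moreover, as $c\sim d$ is an edge of the $C_5$, neither $u$ nor $v$ can be adjacent to both $c$ and $d$. Hence the only admissible neighbors of $u$ (resp.\ $v$) besides $a$ are $c$ or $d$, but not both; invoking Lemma~\ref{lem:mindegreecore} (minimum degree $\ge 2$ in a core) I could even conclude that each of $u,v$ is adjacent to exactly one of $c,d$, though the weaker statement already suffices.

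Next I would build an explicit homomorphism $f\colon G\to C_5$, identifying the target $C_5$ with the induced cycle on $\{a,b,c,d,e\}$ and letting $f$ be the identity there. To extend $f$, I would map a vertex of $\{u,v\}$ to $b$ when its neighbors among $\{a,\ldots,e\}$ are $\{a\}$ or $\{a,c\}$ (note that $b$ is adjacent to both $a$ and $c$ in the $C_5$), and to $e$ when they are $\{a,d\}$ (note that $e$ is adjacent to both $a$ and $d$). A short check confirms that $f$ preserves every edge: the $C_5$ edges are preserved by the identity, the edges incident to $u$ and $v$ are preserved by the above choice of image, and there is no edge $(u,v)$ to worry about. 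Since $C_5$ is an induced subgraph of $G$ and is itself a core, $f$ witnesses $core(G)=C_5$, so $G$ is not a core --- the desired contradiction.

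The argument is essentially bookkeeping; the one point requiring care is the \emph{exhaustiveness} of the neighborhood case analysis in the second step. One must verify that triangle-freeness really leaves $u$ and $v$ with no admissible neighbor other than $a$, $c$, $d$, and then that each of the resulting cases admits a valid image in $C_5$. I expect no genuine obstacle beyond ensuring that this case split is complete.
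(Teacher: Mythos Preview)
Your proof is correct and is essentially the same argument as the paper's: assuming triangle-freeness, you show that the neighborhoods of $u$ and $v$ are contained in $\{a,c\}$ or $\{a,d\}$, and then fold $u,v$ onto $b$ or $e$ to obtain a homomorphism $G\to C_5$, contradicting that $G$ is a core. The paper compresses exactly this case analysis into a single picture (the maximal triangle-free configuration) and reads off the $C_5$-coloring from the labels; your version spells out the same reduction explicitly.
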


\begin{proof}
With the goal of reaching a contradiction: if $G$ has no triangle, $G$ is isomorphic to a subgraph of the graph presented in Figure \ref{fig:NoTriangle}. Indeed, recall that $u$ and $v$ have a common neighbor by Lemma \ref{lem:CommonNeighbor}. Thus, there is a homomorphism from $G$ to $C_5$, and then $core(G)=C_5$, which is a contradicts that $G$ is a core.
\end{proof}
\begin{center}
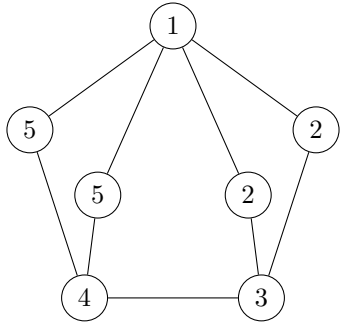
\begin{figure}
\centering

\begin{tikzpicture}
\tikzstyle{vertex}=[draw,shape=circle];

\begin{scope}

    \node[vertex] (a) at (90:2) {$1$};
    \node[vertex] (e) at (-72+90:2) {$2$}
        edge (a);
    \node[vertex] (d) at (-2*72+90:2) {$3$}
        edge (e);
    \node[vertex] (c) at (-3*72+90:2) {$4$}
        edge (d);
    \node[vertex] (b) at (-4*72+90:2) {$5$}
        edge (a)
        edge (c);
    \node[vertex] (u) at (-1,-0.25) {$5$}
        edge (a)
        edge (c);
    \node[vertex] (v) at (1,-0.25) {$2$}
        edge (a)
        edge (d);

\end{scope}

\end{tikzpicture}
\caption{Cases where $G$ has no triangle. $G$ is $C_5$-colorable.}
\label{fig:NoTriangle}
\end{figure}
\end{center}

\begin{corollary}\label{cor:Not3col}

$G$ is not $3$-colorable.

\end{corollary}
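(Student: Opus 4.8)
The plan is to argue by contradiction, in the same style as the preceding core-collapse lemmas of this appendix. Suppose, for contradiction, that $G$ is $3$-colorable; then there is a homomorphism $G \to K_3$. By Lemma~\ref{lem:Triangle}, $G$ contains a triangle, and any three pairwise-adjacent vertices induce a copy of $K_3$, so the inclusion of these vertices also yields a homomorphism $K_3 \to G$. Consequently $G$ and $K_3$ are homomorphically equivalent.

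From this homomorphic equivalence it follows that $core(G) = core(K_3)$, and since $K_3$ is a clique (equivalently, an odd cycle) it is its own core, so $core(G) = K_3$. Because $G$ is assumed to be a sporadic $7$-core, we would then have $G = core(G) = K_3$, which is impossible as $G$ has $7$ vertices. This contradiction shows that $G$ is not $3$-colorable.

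I expect no genuine obstacle here: the statement is essentially immediate once Lemma~\ref{lem:Triangle} supplies the induced $K_3$, after which the reasoning is the exact analogue of the ``$core(G)=C_5$'' and ``$core(G)=K_3$'' arguments already used in Lemmas~\ref{lem:inducedC5} and~\ref{lem:Triangle}. The only point worth stating explicitly is the routine observation that a $3$-colorable graph containing an induced $K_3$ has $K_3$ as its core, which is precisely what forces the final cardinality contradiction $3 \neq 7$.
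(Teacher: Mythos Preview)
Your proof is correct and follows essentially the same approach as the paper: both argue that if $G$ were $3$-colorable then, combined with the triangle supplied by Lemma~\ref{lem:Triangle}, one would have $core(G)=K_3$, contradicting that $G$ is a $7$-core. The paper's version is simply the one-line form of your argument.
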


\begin{proof}

If $G$ was $3$-colorable, since, $G$ has a triangle by Lemma \ref{lem:Triangle}, $core(G)$ would be $K_3$.

\end{proof}

\begin{lemma}\label{lem:AtLeastOneNeighbor}

For every vertex $i$ in $\{a,b,c,d,e\}$, $i$ is a neighbor of $u$ or a neighbor of $v$.

\end{lemma}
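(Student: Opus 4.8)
The plan is to argue by contradiction: suppose some cycle vertex $i\in\{a,b,c,d,e\}$ is adjacent to neither $u$ nor $v$, and deduce that $G$ cannot be a core. Since $a$ is a common neighbor of $u$ and $v$, necessarily $i\in\{b,c,d,e\}$; and because $\{a,b,c,d,e\}$ induces a $C_5$ (Lemma~\ref{lem:inducedC5}), the only neighbors of $i$ are its two cycle-neighbors $i^-,i^+$, so $\deg_G(i)=2$.

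The heart of the argument is to pin down the structure of $G-i$ using the chromatic information already available. First I would show $\chi(G-i)\ge 4$: if $G-i$ admitted a proper $3$-coloring, then since $i$ has only its two neighbors (both lying in $G-i$), one of the three colors is free at $i$, so the coloring extends to all of $G$, contradicting Corollary~\ref{cor:Not3col}. Next, $G-i$ is a $6$-vertex graph with no induced $K_4$ (inherited from Lemma~\ref{lem:NoK4}, since an induced subgraph of an induced subgraph is induced), so its core $core(G-i)$ lies among the cores on at most $6$ vertices classified in Remark~\ref{rem:CoresAtMost5Vertices} and Theorem~\ref{thm:Cores6Vertices}, i.e.\ in $\{K_1,\dots,K_6,C_5,C_5+1\}$. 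The bound $\chi(core(G-i))=\chi(G-i)\ge 4$ rules out $K_1,K_2,K_3,C_5$, while the absence of $K_4$ rules out $K_4,K_5,K_6$, leaving $core(G-i)=C_5+1$. As $C_5+1$ has $6$ vertices and embeds as an induced subgraph of the $6$-vertex graph $G-i$, I conclude $G-i\cong C_5+1$.

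Finally I would exploit the fact that in $C_5+1$ any two distinct vertices have a common neighbor: the apex is adjacent to every cycle vertex, and the apex together with any cycle vertex has a common neighbor among the adjacent cycle vertices. Applying this to $i^-$ and $i^+$ yields a vertex $w$ of $G-i$ adjacent to both. The map fixing every vertex except $i$, which it sends to $w$, is then a homomorphism $G\to G$ (the only edges incident to $i$ are $\{i,i^-\}$ and $\{i,i^+\}$, which map to the edges $\{w,i^-\}$ and $\{w,i^+\}$), and it is not injective, contradicting that $G$ is a core.

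I expect the main obstacle to be the middle step, namely upgrading the conclusion from a bound on $core(G-i)$ to the isomorphism $G-i\cong C_5+1$, and carefully justifying $\chi(G-i)\ge 4$ from degree $2$ alone. The clean observation that any two vertices of $C_5+1$ share a common neighbor is what makes the concluding fold work uniformly regardless of the positions of $i^-$ and $i^+$, so I would state it explicitly; a secondary care point is confirming that ``no induced $K_4$'' descends from $G$ to $G-i$ and to $core(G-i)$.
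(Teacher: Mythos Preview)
Your argument is correct. It differs genuinely from the paper's proof, and in a pleasant way. The paper proceeds by a direct case analysis on $G-i$: writing the remaining path as $\alpha-\beta-\gamma-\delta$, it first disposes of the case where $\alpha$ and $\delta$ are both neighbors of $u$ (or of $v$) by folding $i$ onto $u$ (or $v$), and then, under the complementary assumption, explicitly constructs a proper $3$-coloring of $G-i$ in each remaining subcase (splitting on whether $(u,v)\in E_G$ and on which of $\beta,\gamma$ are common neighbors of $u,v$), which extends to $G$ at the degree-$2$ vertex $i$ and contradicts Corollary~\ref{cor:Not3col}.

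Your route replaces this case analysis by invoking the classification of cores on at most six vertices (Remark~\ref{rem:CoresAtMost5Vertices} and Theorem~\ref{thm:Cores6Vertices}): from $\chi(G-i)\ge 4$ and the absence of $K_4$ you deduce $core(G-i)=C_5+1$, hence $G-i\cong C_5+1$ since both have six vertices, and then you finish uniformly via the observation that any two vertices of $C_5+1$ share a common neighbor. This is shorter and more conceptual, and it reuses machinery already established in Section~\ref{sec:Conjecture6Vertices}; the paper's argument, by contrast, is self-contained at this point and does not need the $6$-vertex classification. Your two ``care points'' are fine: the extension of a $3$-coloring at a degree-$2$ vertex is immediate, and induced subgraphs inherit the no-$K_4$ property trivially. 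The remark that $i\neq a$ is correct but not actually needed for your argument.
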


\begin{proof}

Assume by contradiction that there exists a vertex $i$ in $\{a,b,c,d,e\}$ that is not a neighbor of $u$ and not a neighbor of $v$.
The $5$-cycle $\{a,b,c,d,e\}$ then becomes a $P_4=\alpha-\beta-\gamma-\delta$ when $i$ is removed, with $\alpha$ and $\delta$ being the two neighbors of $i$.
First, note that if the two neighbors $\alpha$ and $\delta$ of $i$ are also neighbors of $u$ (respectively $v$), then the function that maps $i$ to $u$ and that leaves the rest of the graph $G$ unchanged is a non-bijective homomorphism. This contradicts the fact that $G$ is a core.
We can now assume that the two neighbors $\alpha$ and $\delta$ of $i$ are not also two neighbors of $u$, nor are they two neighbors of $v$.

With this in mind we prove that $G$ is $3$-colorable. It is sufficient to establish that $G-i$ is $3$-colorable, because since $i$ has degree $2$, we will be able to extend this $3$-coloring to $G$ by coloring $i$ with (one of) the color(s) that is not taken by a neighbor of $i$.
We have the following two cases.

\begin{itemize}

\item If $u$ and $v$ are both neighbors of $\alpha$ or $\delta$: assume by symmetry that it is $\alpha$. Then, by what precedes, $\delta$ is not a neighbor of $u$ nor is it a neighbor of $v$. Since $\delta$ has only 1 neighbor among $\{\alpha,\beta,\gamma,u,v\}$ (it is $\gamma$), it is sufficient to 3-color $(G-i)-\delta$ to prove that $G-i$ is 3-colorable. Since $G$ has no induced $K_4$ by Lemma~\ref{lem:NoK4}, either $(u,v)\notin E_G$, $(\beta,u)\notin E_G$, or $(\beta,v)\notin E_G$, otherwise $\{\alpha,\beta,u,v\}$. In the first case, $(G-i)-\delta$ is 3-colorable by coloring $\alpha$ and $\gamma$ with the color 1, $u$ and $v$ with the color 2, and $\beta$ with the color 3. In the second case, color $\alpha$ and $\gamma$ with the color 1, $u$ and $\beta$ with the color $2$, and $v$ with the color $3$. In the third case, color $\alpha$ and $\gamma$ with the color 1, $v$ and $\beta$ with the color $2$, and $u$ with the color $3$.

The graph $(G-i)-\delta$ is 3-colorable, which implies that $G$ is 3-colorable, contradicting Corollary \ref{cor:Not3col}.

\item Since neither $u$ nor $v$ are common neighbors of $\alpha$ and $\delta$, and since neither $\alpha$ nor $\delta$ are common neighbors of $u$ and $v$, we can assume by symmetry that $(\alpha,v)\notin E_G$ and that $(\delta,u)\notin E_G$.

\begin{itemize}

    \item If $(u,v)\notin E_G$, color $u$ and $v$ with the same color, and $2$-color the rest of $G-i$ (which is the $P_4$ $\alpha-\beta-\gamma-\delta$).

    \item If $(u,v)\in E_G$, $u$ and $v$ can not have two adjacent common neighbors, otherwise $G$ would have a $K_4$ which is forbidden by Lemma \ref{lem:NoK4}. Thus, either $\beta$ or $\gamma$ is not a common neighbor of $u$ and $v$. We can assume by symmetry that $(\beta,v)\notin E_G$. $G-i$ is now $3$-colorable by coloring $\alpha$ and $\gamma$, $u$ and $\delta$, and $\beta$ and $v$ with the same color.

\end{itemize}

For an illustration of the two previous cases, $G-i$ is isomorphic to a subgraph of the two graphs presented in Figure \ref{fig:G-i}. 

\end{itemize}

\begin{center}
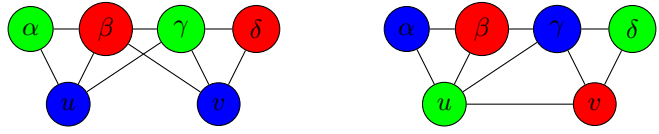
\begin{figure}
\centering

\begin{tikzpicture}
\tikzstyle{vertex}=[draw,shape=circle];

\begin{scope}

    \node[vertex,fill=green] (v1) at (-1.5,0) {$\alpha$};
    \node[vertex,fill=red] (v2) at (-0.5,0) {$\beta$}
        edge (v1);
    \node[vertex,fill=green] (v3) at (0.5,0) {$\gamma$}
        edge (v2);
    \node[vertex,fill=red] (v4) at (1.5,0) {$\delta$}
        edge (v3);
    \node[vertex,fill=blue] (u) at (-1,-1) {$u$}
        edge (v1)
        edge (v2)
        edge (v3);
    \node[vertex,fill=blue] (v) at (1,-1) {$v$}
        edge (v2)
        edge (v3)
        edge (v4);

\end{scope}

\begin{scope}[xshift=5cm]

    \node[vertex,fill=blue] (v1) at (-1.5,0) {$\alpha$};
    \node[vertex,fill=red] (v2) at (-0.5,0) {$\beta$}
        edge (v1);
    \node[vertex,fill=blue] (v3) at (0.5,0) {$\gamma$}
        edge (v2);
    \node[vertex,fill=green] (v4) at (1.5,0) {$\delta$}
        edge (v3);
    \node[vertex,fill=green] (u) at (-1,-1) {$u$}
        edge (v1)
        edge (v2)
        edge (v3);
    \node[vertex,fill=red] (v) at (1,-1) {$v$}
        edge (v3)
        edge (v4)
        edge (u);

\end{scope}

\end{tikzpicture}
\caption{$G-i$ is $3$-colorable}
\label{fig:G-i}
\end{figure}
\end{center}

Thus, $G$ is $3$-colorable, which contradicts Corollary~\ref{cor:Not3col}.
\end{proof}

\begin{remark}\label{rem:Defab}

By Lemma \ref{lem:CommonNeighbor}, $u$ and $v$ have a common neighbor among $\{a,b,c,d,e\}$, and by Lemma \ref{lem:NotAllCommonNeighbors}, not all vertices among $\{a,b,c,d,e\}$ are common neighbors of both $u$ and $v$. It is thus possible to find two adjacent vertices, say $a$ and $b$, such that $a$ is a common neighbor of $u$ and $v$, and such that $b$ is not a common neighbor of $u$ and $v$.
By Lemma \ref{lem:AtLeastOneNeighbor}, $b$ has at least one neighbor among $\{u,v\}$.
By symmetry between $u$ and $v$ we can assume that $b$ is a neighbor of $u$ but not a neighbor of $v$.

\end{remark}

\begin{lemma}\label{lem:Nodeg5}

$\text{deg}_G(u)<5$.

\end{lemma}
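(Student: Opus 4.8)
The plan is to argue by contradiction: assume $\deg_G(u)\geq 5$ and produce, in every case, a non-bijective homomorphism $G\to G$, which contradicts that $G$ is a core (recall a graph is a core iff every $H$-coloring of itself is bijective). The only tool I need is the \emph{folding} criterion already used repeatedly in this appendix (e.g.\ in Lemmas~\ref{lem:mindegreecore} and~\ref{lem:AtLeastOneNeighbor}): if $x\neq y$ satisfy $(x,y)\notin E_G$ and $N_G(x)\subseteq N_G(y)$, then the map sending $x$ to $y$ and fixing every other vertex is a non-bijective $G$-coloring of $G$, so $G$ is not a core. Since the six potential neighbours of $u$ are exactly $\{a,b,c,d,e,v\}$, the hypothesis $\deg_G(u)\geq 5$ leaves only $\deg_G(u)=5$ and $\deg_G(u)=6$. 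Throughout I use that $(u,a),(v,a),(u,b)\in E_G$ and $(v,b)\notin E_G$ by Remark~\ref{rem:Defab}, and that $\{a,b,c,d,e\}$ induces the cycle $a-b-c-d-e-a$.

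For $\deg_G(u)=5$ I would split on the adjacency $(u,v)$. If $(u,v)\notin E_G$, then $u$ is adjacent to all five cycle vertices, so $N_G(u)=\{a,b,c,d,e\}$; since $N_G(v)\subseteq\{a,c,d,e\}$ (as $(v,b),(v,u)\notin E_G$), we get $N_G(v)\subseteq N_G(u)$ with $(u,v)\notin E_G$, and folding $v$ onto $u$ settles this subcase. If instead $(u,v)\in E_G$, then because $(u,a),(u,b)\in E_G$ the unique non-neighbour of $u$ is some $w\in\{c,d,e\}$. Its two cycle-neighbours lie in $\{a,b,c,d,e\}\setminus\{w\}\subseteq N_G(u)$, and its only other possible neighbour $v$ also lies in $N_G(u)$; hence $N_G(w)\subseteq N_G(u)$ with $(w,u)\notin E_G$, and folding $w$ onto $u$ settles this subcase.

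The genuinely delicate case is $\deg_G(u)=6$, i.e.\ $u$ universal, where the extra ingredient is Lemma~\ref{lem:NoK4}: since $u$ is adjacent to everything, any triangle of $G$ avoiding $u$ would span a $K_4$ with $u$, so $G-u$ must be \emph{triangle-free}. Writing $N_G(v)=\{u,a\}\cup T$ with $T\subseteq\{c,d,e\}$, triangle-freeness of $G-u$ forbids $e\in T$ (otherwise $v,a,e$ is a triangle, as $(a,e)\in E_G$) and forbids $\{c,d\}\subseteq T$ (otherwise $v,c,d$ is a triangle, as $(c,d)\in E_G$); hence $T\subseteq\{c\}$ or $T\subseteq\{d\}$. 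In the first subcase $N_G(v)\subseteq\{u,a,c\}=N_G(b)$ and $(v,b)\notin E_G$, so I fold $v$ onto $b$; in the second subcase $N_G(v)\subseteq\{u,a,d\}=N_G(e)$ and $(v,e)\notin E_G$ (since $e\notin T$), so I fold $v$ onto $e$. Either way $G$ is not a core.

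In all cases the assumption $\deg_G(u)\geq 5$ contradicts $G$ being a sporadic $7$-core, which yields $\deg_G(u)<5$. I expect the main obstacle to be exactly the universal case $\deg_G(u)=6$: a direct fold of $v$ fails for a general neighbourhood $T$, and one must first exploit the absence of an induced $K_4$ to force $G-u$ triangle-free, thereby pinning $T$ down to $\{c\}$ or $\{d\}$ before any fold becomes available. The two $\deg_G(u)=5$ subcases, by contrast, are routine once the folding criterion is in place.
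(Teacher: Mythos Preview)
Your proof is correct and follows essentially the same approach as the paper: both argue by contradiction, use the same folding criterion to produce a non-bijective endomorphism, and handle the universal case $\deg_G(u)=6$ via Lemma~\ref{lem:NoK4} (you phrase it as ``$G-u$ is triangle-free'', the paper as ``the neighbours of $v$ in the cycle are non-adjacent'', which is the same observation). The only difference is organizational---you split on the value of $\deg_G(u)$ first and then on $(u,v)$, while the paper splits on $(u,v)$ first---and in the universal case you carry out the case analysis on $T$ explicitly rather than invoking a common neighbour $\gamma$ of the independent pair; the resulting folds (onto $b$ or $e$) coincide.
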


\begin{proof}

Assume, with the aim of reaching a contradiction, that $\text{deg}(u)\ge 5$.
If $(u,v)\notin E_G$, then $u$ is a neighbor of $a$, $b$, $c$, $d$ and $e$. The function that maps $v$ to $u$ and that leaves the rest of the graph unchanged is a non-bijective homomorphism. Contradiction with the fact that $G$ is a core.
Hence, assume now that $(u,v) \in E_G$.

\begin{itemize}
    \item 
If $\text{deg}_G(u)=5$, then there exists $i\in\{a,b,c,d,e\}$ such that $(u,i)\notin E_G$. The function that maps $i$ to $u$ and that leaves the rest of the graph unchanged is a non-bijective homomorphism. This contradicts that $G$ is a core.

\item
If $\text{deg}_G(u)>5$, then $G$ is a neighbor of $a,b,c,d$ and $e$. Since $G$ does not contain a $K_4$ by Lemma \ref{lem:NoK4}, the neighbors of $v$ in $\{a,b,c,d,e\}$ are non-adjacent. We deduce that the neighbors of $v$ in $\{a,b,c,d,e\}$ are contained in a set of the form $(\alpha,\beta)$, where $\alpha$ and $\beta$ are non-adjacent. $\alpha$ and $\beta$ have a common neighbor $\gamma$ in $\{a,b,c,d,e\}$. The function that maps $v$ to $\gamma$ and that leaves the rest of the graph unchanged is a non-bijective homomorphism. Again, this contradicts that $G$ is a core.
\end{itemize}

\end{proof}

\begin{lemma}\label{lem:evEdge}

$(e,v)\in E_G$.

\end{lemma}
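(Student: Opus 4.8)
The plan is to argue by contradiction: suppose $(e,v)\notin E_G$. Recall from Remark~\ref{rem:Defab} that $a,b\in N_G(u)$, $a\in N_G(v)$ and $b\notin N_G(v)$, and that $\{a,b,c,d,e\}$ induces the cycle $a-b-c-d-e-a$. First I would pin down the relevant neighbourhoods. Applying Lemma~\ref{lem:AtLeastOneNeighbor} to $e$, the assumption $(e,v)\notin E_G$ forces $(e,u)\in E_G$; since the $C_5$ is induced ($e$ is adjacent to neither $b$ nor $c$), this gives $N_G(e)=\{a,d,u\}$. Likewise $N_G(b)=\{a,c,u\}$, and $N_G(v)\subseteq\{a,c,d,u\}$ because $v$ is adjacent to neither $b$ nor $e$.

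The core strategy is to use endomorphism folds to fix the status of $(c,v)$ and $(d,v)$, leaving a single configuration that I then rule out by a $3$-colouring. First I would consider the map $\varphi$ fixing every vertex except $v$, which it sends to $e$. Since $v\not\sim e$, the only edges altered by $\varphi$ are those incident to $v$, so $\varphi$ is an endomorphism of $G$ exactly when $N_G(v)\subseteq N_G(e)=\{a,d,u\}$, i.e.\ exactly when $(c,v)\notin E_G$. As $G$ is a core it admits no non-bijective endomorphism, so this forces $(c,v)\in E_G$. Symmetrically, the map sending $v$ to $b$ and fixing all else is an endomorphism precisely when $N_G(v)\subseteq N_G(b)=\{a,c,u\}$, which under $(c,v)\in E_G$ amounts to $(d,v)\notin E_G$; hence the core property forces $(d,v)\in E_G$ as well.

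It remains to contradict the surviving case $(e,v)\notin E_G$, $(e,u)\in E_G$, $(c,v)\in E_G$, $(d,v)\in E_G$, in which $v\sim a,c,d$ and $u\sim a,b,e$. Here I would show $G$ is $3$-colourable, contradicting Corollary~\ref{cor:Not3col}. The colouring I propose is $a\mapsto 1$, $b,e,v\mapsto 2$, $u\mapsto 3$, and $\{c,d\}\mapsto\{1,3\}$, where I orient the two colours on the edge $c-d$ so that whichever of $c,d$ is adjacent to $u$ (if any) receives colour $1$. This orientation is always available because Lemma~\ref{lem:Nodeg5} gives $\deg_G(u)<5$, so $u$ — already adjacent to $a,b,e$ — is adjacent to at most one of $c,d$.

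The main obstacle, and the step to carry out most carefully, is verifying that this assignment is proper: one checks the five $C_5$-edges, the edges from $u$ (colour $3$) to $a,b,e$ (colours $1,2,2$) and to its possible single extra neighbour in $\{c,d,v\}$, and the edges from $v$ (colour $2$) to $a,c,d$ (colours $1$ and $\{1,3\}$) and possibly $u$. The only potential clash is between $u$ and an adjacent colour-$3$ vertex among $c,d$, and this is exactly what the orientation choice — enabled by $\deg_G(u)<5$ — avoids. Since every case under $(e,v)\notin E_G$ produces either a non-bijective endomorphism or a $3$-colouring, both impossible, I conclude that $(e,v)\in E_G$.
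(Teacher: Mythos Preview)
Your proof is correct, but it takes a detour that the paper avoids. Both arguments start the same way: assume $(e,v)\notin E_G$, deduce $(e,u)\in E_G$ from Lemma~\ref{lem:AtLeastOneNeighbor}, and observe that $u\sim a,b,e$ together with Lemma~\ref{lem:Nodeg5} forces $u$ to miss at least one of $c,d$. From here the paper immediately writes down the $3$-colouring with classes $\{b,v,e\}$, $\{a,d\}$, $\{u,c\}$ (or the symmetric $\{a,c\}$, $\{u,d\}$) and invokes Corollary~\ref{cor:Not3col}. Your colouring is literally the same one, just phrased as a single parametrised assignment rather than two cases.

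The difference is your intermediate fold argument forcing $(c,v),(d,v)\in E_G$. These steps are sound, but they are unnecessary: the colour class $\{b,v,e\}$ is independent inside $G$ solely because $(b,v),(e,v),(b,e)\notin E_G$, and the classes $\{a,d\}$ (resp.\ $\{a,c\}$) and $\{u,c\}$ (resp.\ $\{u,d\}$) do not involve $v$ at all. So the validity of the $3$-colouring never uses the status of $(c,v)$ or $(d,v)$, and the fold arguments can simply be deleted. What you gain from the folds is nothing for this lemma; what the paper's direct route buys is brevity. If you want to keep your write-up, you can excise the two fold paragraphs and the proof still goes through verbatim.
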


\begin{proof}

Assume by contradiction that $(e,v)\notin E_G$. Then by Lemma \ref{lem:AtLeastOneNeighbor}, $(e,u)\in E_G$.
We have by definition of $b$ and $u$ that $(b,v)\notin E_G$ and $(b,u)\in E_G$.
By definition of $a$, $(a,u)\in E_G$ and $(a,v)\in E_G$.
By Lemma \ref{lem:Nodeg5}, $u$ has at least one non-neighbor among $\{a,b,c,d,e\}$. Since $u$ is a neighbor of $a$, $b$ and $e$, there are only two possible cases: either $(u,c)\notin E_G$ or $(u,d)\notin E_G$.

\begin{itemize}
    \item Assume that $(u,c)\notin E_G$. Coloring $b$, $v$ and $e$; $a$ and $d$; and $u$ and $c$ with the same color results in a $3$-coloring of $G$, which contradicts Corollary \ref{cor:Not3col}.
    \item Assume that $(u,d)\notin E_G$. Coloring $b$, $v$ and $e$; $a$ and $c$; and $u$ and $d$ with the same color results in a $3$-coloring of $G$, which contradicts Corollary \ref{cor:Not3col}.
\end{itemize}

In either case there is a contradiction. Thus, $(e,v)\in E_G$.
\end{proof}

For each vertex let us now summarizes the remaining possible cases.

\begin{itemize}
    \item $a$: 0 choices: $(u,a)$ and $(v,a)$ are edges of $G$ by Remark \ref{rem:Defab}.
    \item $b$: 0 choices: $(u,b)$ is an edge, and $(v,b)$ is not an edge by Remark \ref{rem:Defab}.
    \item $c$: 3 choices since $c$ must be either a neighbor of $u$ or of $v$ by Lemma \ref{lem:AtLeastOneNeighbor}:
    \begin{itemize}
        \item $(u,c)$ and $(v,c)$ are edges of $G$.
        \item $(u,c)$ is an edge, and $(v,c)$ is not an edge.
        \item $(v,c)$ is an edge, and $(u,c)$ is not an edge.
    \end{itemize}
    \item $d$: 3 choices since $d$ must be either a neighbor of $u$ or of $v$ by Lemma \ref{lem:AtLeastOneNeighbor}:
    \begin{itemize}
        \item $(u,d)$ and $(v,d)$ are edges of $G$.
        \item $(u,d)$ is an edge, and $(v,d)$ is not an edge.
        \item $(v,d)$ is an edge, and $(u,d)$ is not an edge.
    \end{itemize}
    \item Concerning $u$, $v$ and $e$, since $(e,v)$ in an edge by Lemma \ref{lem:evEdge}, there are 3 possible cases:
    \begin{itemize}
        \item Neither $(u,v)$ nor $(e,u)$ is an edge.
        \item $(u,v)$ is an edge and $(e,u)$ is not an edge.
        \item $(e,u)$ is an edge and $(u,v)$ is not an edge.
    \end{itemize}
    The case where both $(u,v)$ and $(e,u)$ is an edge is impossible, because otherwise $\{a,e,u,v\}$ would induce a $K_4$, contradicting Lemma \ref{lem:NoK4}.
\end{itemize}

We conclude that, up to isomorphism, all the sporadic $7$-cores belong to the list of the $3\times 3\times 3=27$ graphs presented in Figures \ref{fig:NoteuNotuv}, \ref{fig:uvNoteu} and \ref{fig:euNotuv}.

\begin{corollary}
All the sporadic $7$-cores are contained in the graphs allowed in Figure~\ref{fig:AllPossibleGraphs}.
\end{corollary}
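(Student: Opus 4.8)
The plan is to assemble the structural lemmas established above into the single statement that every sporadic 7-core conforms to the motif of Figure~\ref{fig:AllPossibleGraphs}. I would begin by invoking Lemma~\ref{lem:inducedC5}, which guarantees an induced $C_5$; this lets me fix, up to isomorphism, $V_G=\{a,b,c,d,e,u,v\}$ with $\{a,b,c,d,e\}$ inducing the cycle $a-b-c-d-e-a$, so that the only remaining freedom is the adjacency of $u$ and $v$ to the rest of the graph. Everything else in the proof then amounts to constraining these two neighborhoods until they coincide with the figure.

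Next I would pin down the labels around the cycle. By Lemma~\ref{lem:CommonNeighbor}, $u$ and $v$ share a neighbor, and by Lemma~\ref{lem:NotAllCommonNeighbors} they do not share all five; following Remark~\ref{rem:Defab}, this produces an adjacent pair $a,b$ on the cycle such that $a$ is a common neighbor of $u$ and $v$ while $b$ is adjacent to exactly one of them, which by the symmetry between $u$ and $v$ I may take to be $u$. This fixes $(u,a),(v,a),(u,b)\in E_G$ and $(v,b)\notin E_G$, matching the solid and forbidden edges of the motif. For the remaining cycle vertices $c$ and $d$, Lemma~\ref{lem:AtLeastOneNeighbor} forces each to be adjacent to $u$ or to $v$, which is exactly the requirement ``$(c,u)\in E$ or $(c,v)\in E$'' and ``$(d,u)\in E$ or $(d,v)\in E$'' of the figure, while the dashed edges simply record the three surviving choices for each of $c$ and $d$.

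Finally I would treat the interaction of $e$, $u$, and $v$. Lemma~\ref{lem:evEdge} gives $(e,v)\in E_G$, and Lemma~\ref{lem:NoK4}, applied to $\{a,e,u,v\}$, forbids $(u,v)$ and $(e,u)$ from being edges simultaneously, which is precisely the side condition ``$(u,v)\notin E$ or $(e,u)\notin E$'' attached to the dashed edges of the motif. Collecting these constraints shows that every sporadic 7-core is compatible with Figure~\ref{fig:AllPossibleGraphs}, which completes the corollary.

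The real difficulty is not in this assembly but in the lemmas it cites---especially the colorability arguments of Lemmas~\ref{lem:AtLeastOneNeighbor}, \ref{lem:Nodeg5}, and~\ref{lem:evEdge}, where in each branch one must exhibit an explicit $3$-coloring and contradict Corollary~\ref{cor:Not3col}. For the corollary proper, the main obstacle is bookkeeping: one must check that the motif's conditions coincide \emph{exactly} with the conjunction of the cited lemmas and that no adjacency of $u$ or $v$ has been left unconstrained, so that the $3\times 3\times 3$ enumeration of Figures~\ref{fig:NoteuNotuv}, \ref{fig:uvNoteu}, and~\ref{fig:euNotuv} is genuinely complete.
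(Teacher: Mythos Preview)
Your proposal is correct and follows essentially the same approach as the paper: the corollary there is stated without a separate proof, as it is simply the culmination of the bullet-point summary immediately preceding it, which assembles Lemmas~\ref{lem:inducedC5}, \ref{lem:CommonNeighbor}, \ref{lem:NotAllCommonNeighbors}, \ref{lem:AtLeastOneNeighbor}, \ref{lem:NoK4}, \ref{lem:evEdge} and Remark~\ref{rem:Defab} into exactly the constraints of the motif. Your write-up is a faithful prose rendering of that assembly, with the same $3\times 3\times 3$ case count.
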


\end{document}